	\newcommand{\Z}{\mathbb{Z}}
	\newcommand{\R}{\mathbb{R}}
\newcommand{\floor}[1]{\lfloor #1 \rfloor}
\newtheorem{thm}{Theorem}
\newtheorem{prop}[thm]{Proposition}
\newcommand{\ihat}{\hat{\imath}}
\newcommand{\jhat}{\hat{\jmath}}
\title{Geometrically Interpreting Higher Cup Products, and Application to Combinatorial Pin Structures}
\author{
 Sri Tata \\
  Department of Physics\\
  University of Maryland, College Park\\
  College Park, MD 20740 \\
  \texttt{stata@umd.edu} \\
}
\begin{document}
\maketitle

\begin{abstract}
We provide a geometric interpretation of the formulas for Steenrod's $\cup_i$ products, giving an explicit construction for a conjecture of Thorngren. We construct from a simplex and a branching structure a special frame of vector fields inside each simplex that allow us to interpret cochain-level formulas for the $\cup_i$ as a generalized intersection product on the dual cellular decomposition. It can be thought of as measuring the intersection between a collection of dual cells and thickened, shifted version of another collection, where the vector field frame determines the thickening and shifting. Defining this vector field frame in a neighborhood of the dual 1-skeleton of a simplicial complex allows us to combinatorially define $Spin$ and $Pin^\pm$ structures on triangulated manifolds. We use them to geometrically interpret the `Grassmann Integral' of Gu-Wen/Gaiotto-Kapustin, without using Grassmann variables. In particular, we find that the `quadratic refinement' property of Gaiotto-Kapustin can be derived geometrically using our vector fields and interpretation of $\cup_i$, together with a certain trivalent resolution of the dual 1-skeleton. This lets us extend the scope of their function to arbitrary triangulations and explicitly see its connection to spin structures. Vandermonde matrices play a key role in all constructions.
\end{abstract}


\section{Introduction}
The Steenrod operations and higher cup products, $\cup_i$ are an important part of algebraic topology, and have recently been emerging as a critical tool in the theory of fermionic quantum field theories. They were invented by Steenrod \cite{SteenrodCupPaper} in the study of homotopy theory. More recently, they have made a surprising entrance in the theory of fermionic and spin TQFTs in the study of Symmetry-Protected Topological (SPT) phases of matter \cite{GuWen,GaiottoKapustin,BrumfielMorgan}. As such, it would be desirable to give them a geometric interpretation beyond their mysterious cochain formulas, in a similar way that the regular cup product, $\cup_0$, can be interpreted as an intersection product between cells and a shifted version of the other cells. 

In this note, we will show that in fact, there is such an interpretation as a generalized intersection product, which gives the intersection class of the cells dual to a cochain with a \textit{thickened} and shifted version of the other's cells. Similar interpretations for the Steenrod squares (the maps $\alpha \mapsto \alpha \cup_i \alpha$) as self-intersections from immersions have been shown \cite{EcclesGrant} and are related to classical formulas of Wu and Thom. However, a more general interpretation of the $\cup_i$ products has still not been demonstrated. This interpretation was conjectured by Thorngren in \cite{Thorngren2018Thesis} that describes the $\cup_i$ product as an intersection from an $i$-parameter thickening with respect to $i$ vector fields. Such vector fields will be referred to as `Morse Flows'.

We will verify the conjecture by giving an explicit construction of a set of such $n$ vector fields inside each $n$-simplex. Thickening the Poincaré dual cells with respect to the first $i$ fields and shifting with respect to the next field will show us that the cells that intersect each other with respect to these fields are the exact pairs that appear in Steenrod's formula for $\cup_i$. In the section \ref{sec:Prelim}, we review some convenient ways to describe and parameterize the Poincaré dual cells, which we will use extensively throughout the note. While this material is standard, it would be helpful to skim through it to review our notation. In Section \ref{CupZeroSection}, we will warm up by reviewing how the intersection properties of the $\cup_0$ product's formulas can be obtained from a vector field flow. In Section \ref{sec:HigherCupProductMain}, we will start by reviewing the definitions of the higher cup formulas and the Steenrod operations. Then we'll provide some more motivation, given the $\cup_0$ product's interpretation, as to why the thickening procedure should seem adequate to describe the higher cup products. Then, we'll describe the thickening procedure, state more precisely our main proposition about the higher cup formula in Section \ref{mainPropCup_m}, and prove it in Sections \ref{mainPropCup_m}-\ref{proofOfMainPropCup_m}. 
The main calculation is in Section \ref{proofOfMainPropCup_m}. In principle the main content is Sections \ref{definingThickenedCells}-\ref{proofOfMainPropCup_m}, and the rest of Sections \ref{CupZeroSection}-\ref{sec:HigherCupProductMain} are there to build intuition for the construction.
Throughout, we only work with $\Z_2$ coefficients.

After talking about the higher cup products, we will show how our interpretation can be applied to interpreting the `Grassmann integral' of Gu-Wen/Gaiotto-Kapustin \cite{GuWen,GaiottoKapustin}, which we'll call the ``GWGK Grassmann Integral" or simply the ``GWGK Integral". In Section \ref{backgroundPropertiesOfGuWenGrassmannIntegral}, we review some background material on $Spin$ structures and Stiefel-Whitney classes on a triangulated manifold, as well as the formal properties of the GWGK Integral we set out to reproduce. In Section \ref{geometricGuWenIn2D}, we review how the GWGK Integral can be defined geometrically in 2D with respect to the vector fields we constructed before and a loop decomposition of a $(d-1)$-cocycle on the dual 1-skeleton. And in Section \ref{geometricGuWenInHigherDimensions}, we extend this understanding to higher dimensions. The interpretation of the higher cup product makes its application in Section \ref{verifyingTheFormalPropertiesHigherDimensions} in demonstrating the `quadratic refinement' property of our construction.  

\section*{\LARGE{\underline{\textbf{Interpreting Higher Cup Products}}}}
\addcontentsline{toc}{section}{\LARGE{\underline{\textbf{Interpreting Higher Cup Products}}}}

\section{Preliminaries} \label{sec:Prelim}
It will be helpful to review how Poincaré duality looks on the standard $n$-simplex, $\Delta^n$. Recall that

\begin{equation}
\Delta^n = \{ (x_0,\dots,x_n) \in \R^{n+1} | x_0 + \dots + x_n = 1 , x_i \ge 0 \text{ for all i} \}
\end{equation}

In particular, we'll review and write out explicit formulas parameterizing the cells in the dual cellulation of $\Delta^n$ and how they are mapped to their cochain partners.

\subsection{Cochains}
Recall that we are working with $\Z_2$-valued chains and cochains. If we fix $\alpha$ to be a $p$-cochain, then $\alpha$ restricted to $\Delta^n$ will manifest itself as a function from the set of size-$(p+1)$ subsets of $\{0,\dots,n\}$ to $\Z_2$. In other words

\begin{equation}
\alpha(i_0,\dots,i_p) \in \{0,1\} = \Z_2, \text{ where } 0 \le i_0 < \dots < i_p \le n
\end{equation}

Note that there are $2^{{n+1}\choose{p+1}}$ distinct p-cochains on $\Delta^n$, since there are ${{n+1}\choose{p+1}}$ choices of $\{i_0<\dots<i_p\} \subset \{0,\dots,n\}$ and two choices of the value of each $\alpha(i_0,\dots,i_p)$.

The `coboundary' of a $p$-cochain $\alpha$ is a $(p+1)$-cochain $\delta \alpha$ defined by

\begin{equation}
\delta\alpha (i_0,\dots,i_{p+1}) = \sum_{j=0}^{p+1} \alpha(0,\dots,\ihat_j,\dots,i_{p+1}) 
\end{equation}

where $\ihat_j$ refers to skipping over $i_j$ in the list. We say $\alpha$ is `closed' if $\delta \alpha = 0$ everywhere, which means modulo 2 that at each simplex, $\alpha = 1$ on an even number of $p$-subsimplices. We say $\alpha$ is `exact' if $\alpha = \delta \lambda$ for some $\lambda$.

\subsection{The dual cellulation}
Now, let us review how to construct the dual cellulation of $\Delta^n$. For clarity, let's first look at the case $n=2$ before writing the formulas in general dimensions.

\subsubsection{Example: The 2-simplex}
The two simplex is $\Delta^2 = \{ (x_0,x_1,x_2) \in \R^{3} | x_0 + x_1 + x_2 = 1 , x_i \ge 0 \text{ for all i} \}$. The 'barycentric subdivision' is generated by the intersections of the planes $\{x_0 = x_1, x_0 = x_2, x_1 = x_2\}$ with $\Delta^2$, as shown in Figure(\ref{fig:2_Simplex1}). The Poincaré dual cells are made from a certain subset of the cells of the barycentric subdivision, indicated pictorially in Figures(\ref{fig:2_Simplex2}, \ref{fig:2_Simplex3}).

\begin{figure}[h!]
  \centering
  \begin{minipage}{0.44\textwidth}
    \centering
    \includegraphics[width=\linewidth]{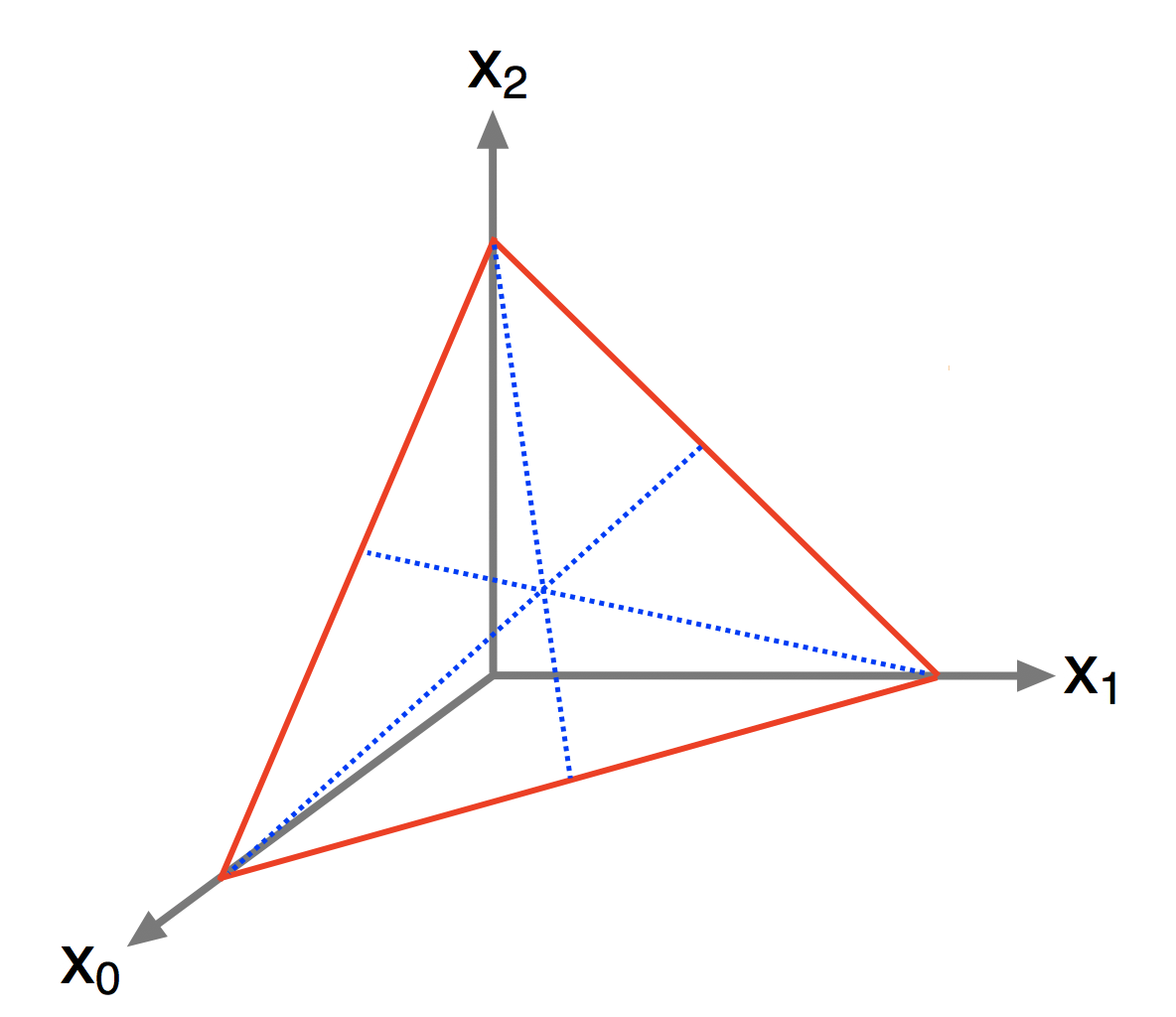}
    \caption{The standard 2-simplex $\Delta^2$. The blue, dashed lines are the barycentric subdivision of $\Delta^2$, obtained from the intersections of the planes $\{x_0 = x_1, x_0 = x_2, x_1 = x_2\}$ with $\Delta^2$}
    \label{fig:2_Simplex1}
  \end{minipage} \quad \quad \quad
  \begin{minipage}{0.44\textwidth}
    \centering
    \includegraphics[width=\linewidth]{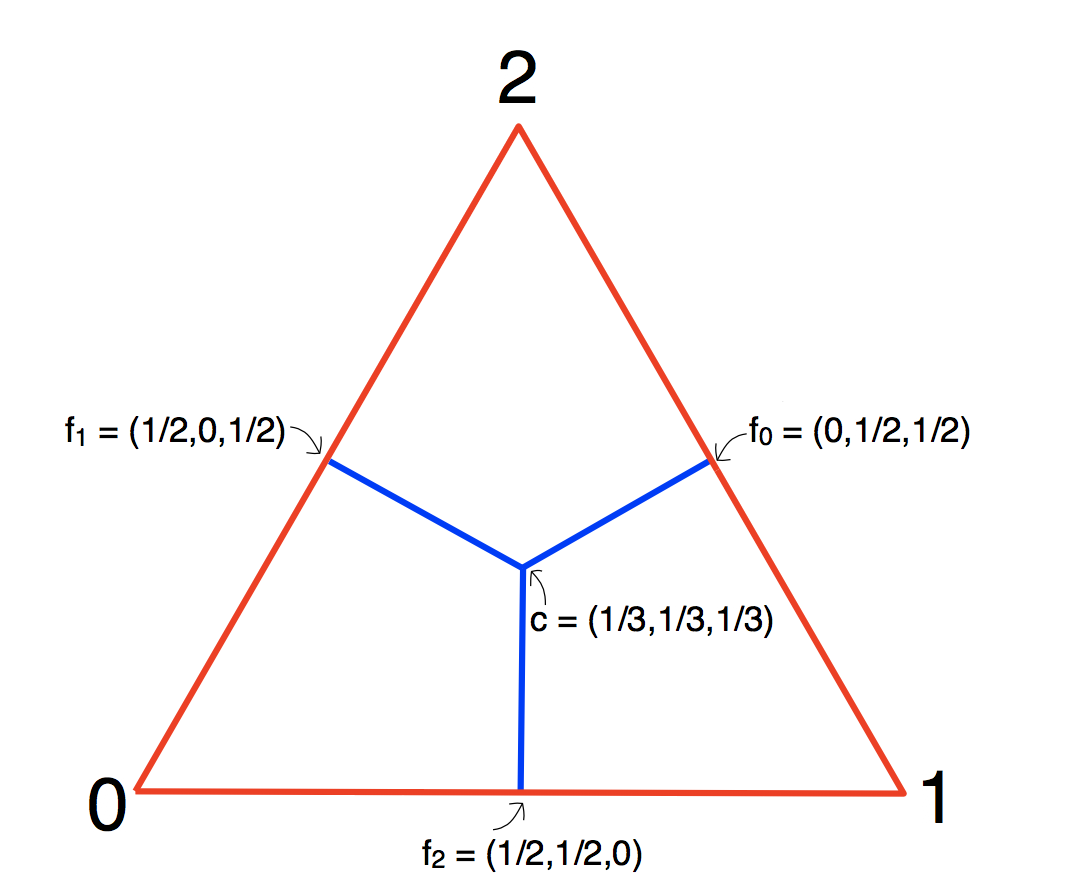}
    \caption{The Poincaré dual cellulation, whose 1-skeleton is in blue. We'll be able to express the dual cells in terms of the points $f_0,f_1,f_2,c$.}
    \label{fig:2_Simplex2}
  \end{minipage}
\end{figure}

\begin{figure}[h!]
  \centering
  \includegraphics[width=0.4\linewidth]{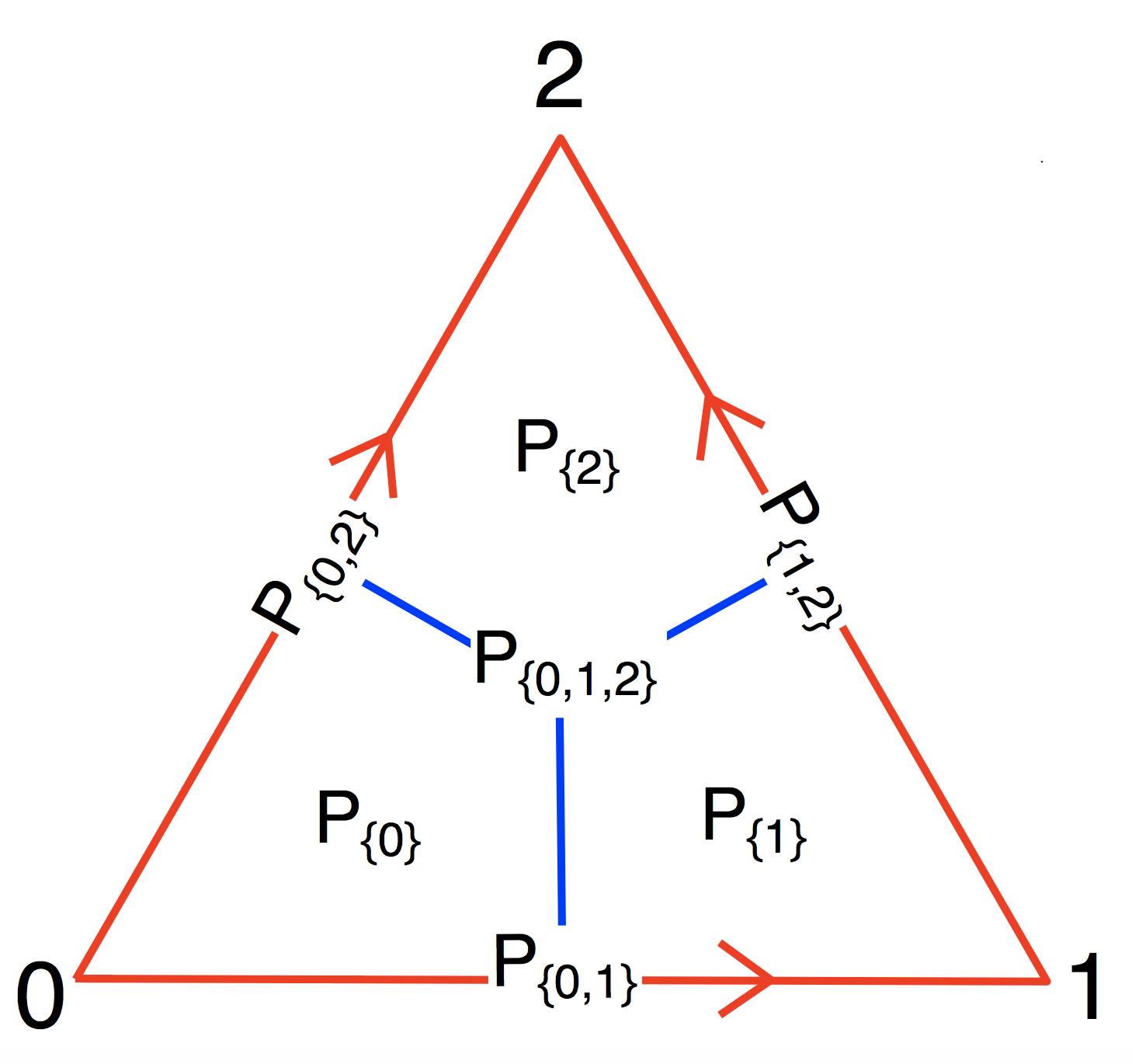}
  \caption{The cells in the Poincaré dual cellulation of $\Delta^2$}
  \label{fig:2_Simplex3}
\end{figure}

Let us now list all the cells in the Poincaré dual decomposition of $\Delta^2$. It is first helpful to define 4 points: $\{c, f_0, f_1, f_2\} \in \Delta^2$. Here, 
\begin{equation}
c = \{1/3,1/3,1/3\}
\end{equation}

is the coordinate of the barycenter of $\Delta^2$. And, 

\begin{equation}
\begin{split}
f_0 &= \{0,1/2,1/2\} \\
f_1 &= \{1/2,0,1/2\} \\ 
f_2 &= \{1/2,1/2,0\}
\end{split}
\end{equation}

are the barycenters of the boundary $1$-simplices of $\Delta^2$ respectively. We denote by $f_i$ the barycenter of the $1$-simplex opposite to the point $v_i$, where $v_i \in \{(1,0,0),(0,1,0),(0,0,1)\}$ is the point on the $x_i$-axis on $\Delta^2$.

There is one 0-cell $P_{\{0,1,2\}}$ which consists of only the center point $c$, 

\begin{equation}
P_{\{0,1,2\}} = \{c\}
\end{equation}

There are three 1-cells $P_{\{0,1\}},P_{\{0,2\}},P_{\{1,2\}}$, which consist of the intersection of $\Delta^2$ with the rays going from $c$ to $f_2, f_1, f_0$. In other words

\begin{equation}
\begin{split}
  P_{\{0,1\}}&=\Delta^2 \cap \{c + (f_2-c)t | t \ge 0\} \\
  P_{\{0,2\}}&=\Delta^2 \cap \{c + (f_1-c)t | t \ge 0\} \\
  P_{\{1,2\}}&=\Delta^2 \cap \{c + (f_0-c)t | t \ge 0\}
\end{split}
\end{equation}

And, there are three 2-cells, $P_{\{0\}},P_{\{1\}},P_{\{2\}}$, which consist of the points 

\begin{equation}
\begin{split}
  P_{\{0\}}=\Delta^2 \cap \{c + (f_1-c)t_1 + (f_2-c)t_2 | t_1, t_2 \ge 0\} \\
  P_{\{1\}}=\Delta^2 \cap \{c + (f_0-c)t_1 + (f_2-c)t_2 | t_1, t_2 \ge 0\} \\
  P_{\{2\}}=\Delta^2 \cap \{c + (f_0-c)t_1 + (f_1-c)t_2 | t_1, t_2 \ge 0\} 
\end{split}
\end{equation}

The reason we chose to name the cells this way was to make it clearer the relationship between the cochains and their dual chains. The statement is that the p-cochain $\alpha$ is dual to the union of the chains under which $\alpha$ doesn't vanish, i.e. $\alpha$ is dual to $\bigcup \{P_{\{i_0,\dots,i_p\}} | \alpha(i_0,\dots,i_p) = 1\}$.

Above, we have given an explicit parametrization of the cells $P_{I}, I \subset \{0,\dots,n\}$. But, it will also be helpful for us to express them in another way. One can easily check that the 1-cells can be written as:

\begin{equation}
\begin{split}
  P_{\{0,1\}}&=\Delta^2 \cap \{(x_0,x_1,x_2) | x_0 = x_1 \ge x_2 \} \\
  P_{\{0,2\}}&=\Delta^2 \cap \{(x_0,x_1,x_2) | x_0 = x_2 \ge x_1 \} \\
  P_{\{1,2\}}&=\Delta^2 \cap \{(x_0,x_1,x_2) | x_1 = x_2 \ge x_0 \}
\end{split}
\end{equation}

In words, $P_{\{i,j\}}$ is where the plane $x_i=x_j$ intersects $\Delta^2$, but restricted to those points where $x_i, x_j$ are greater than or equal to the other coordinates.

And, the 2-cells can be similarly written as:

\begin{equation}
\begin{split}
  P_{\{0\}}=\Delta^2 \cap \{(x_0,x_1,x_2) | x_0 \ge x_1, x_0 \ge x_2 \} \\
  P_{\{1\}}=\Delta^2 \cap \{(x_0,x_1,x_2) | x_1 \ge x_0, x_1 \ge x_2 \} \\
  P_{\{2\}}=\Delta^2 \cap \{(x_0,x_1,x_2) | x_2 \ge x_0, x_2 \ge x_1 \}
\end{split}
\end{equation}

\subsubsection{General dimensions}
We can see general patterns for the dual cell decompositions in $n$ dimensions. 

Just as before, we can define the points $c$, which is the barycenter of $\Delta^n$ and $f_0,\dots,f_n$ which are the barycenters of the $(n-1)-$simplices that are opposite to the points $v_i = (0,\dots,0,1,0,\dots,0)$ on the $x_i$ axis in $\Delta^n$. Explicitly, we'll have that the coordinates of these points are

\begin{equation}
c = \frac{1}{n+1}(1,...,1)
\end{equation}

which comes from setting $x_0 = x_1 = \dots = x_n$ and $\sum_j x_j = 1$. 

And, we'll have

\begin{equation}
\begin{split}
f_i &= ((f_i)_0,\dots,(f_i)_n), \text{ where} \\
(f_i)_j &= 
\begin{cases}
  1/n, & \text{if } i \neq j \\
  0, & \text{if } i = j \\
\end{cases} 
\end{split}
\end{equation}

which comes from setting $x_0 = \dots = \hat{x}_i = \dots = x_n$ \footnote{The notation $\hat{x}_i$ refers to skipping over it in the equality} and $x_i = 0$ and $\sum_j x_j = 1$.

From these points, an $(n-p)$-cell $P_{\{i_0,\dots,i_{p}\}}$ that would appear as a dual chain of a $p-$form with $\alpha(i_0,\dots,i_{p})=1$ can be written as:

\begin{equation}\label{cellParameterizations}
\begin{split}
&P_{\{i_0,\dots,i_{p}\}} = \Delta^n \cap \{c + \sum_{j=1}^{n-p} (f_{\ihat_j} - c)t_j | t_j \ge 0 \text{ for all } j\},\\
&\text{where } \{\ihat_1,\dots,\ihat_{n-p}\} = \{0,\dots,n\} \textbackslash \{i_0,\dots,i_{p}\}
\end{split}
\end{equation}

And in parallel, we can also write 

\begin{equation} \label{dualCellEqns1}
\begin{split}
P_{\{i_0,\dots,i_{p}\}} = \Delta^n \cap \big\{(&x_0,\dots,x_n) | x_{i_0} = \dots = x_{i_n} \text{ and } x_i \ge x_{\ihat},\\
&\text{  for all } i \in \{i_0,\dots,i_{p}\}, \ihat \notin  \{i_0,\dots,i_{p}\} \big\}\\
\end{split}
\end{equation}

which tells us that $P_{\{i_0,\dots,i_{p}\}}$ is where $\Delta^n$ intersects the plane of $x_{i_0} = \dots = x_{i_n}$, restricted to the points where $x_i \ge x_{\ihat}$ for $i \in \{i_0,\dots,i_p\}$ and $\ihat \in \{\ihat_1, \dots, \ihat_{n-p}\}$.

\subsection{More Notation}
Such $p$-cochains $\alpha$ will be denoted as living in the set $C^p(M,\Z_2)$. Closed $p$-cochains live in the set $Z^p(M,\Z_2) \subset C^p(M,\Z_2)$. So $C^p(M,\Z_2)$ with upper-index $p$ is the set of all functions from the $p$-simplices of $M$ to $\Z_2$. Here, $M$ implicitly refers to a manifold equipped with its triangulation. We will refer to the same manifold equipped with its dual cellulation as $M^\vee$. Poincaré duality says that the chains in $C_{n-p}(M^\vee,\Z_2)$ are in bijection with $C^p(M,\Z_2)$.

However, we could also use the words `cochains' and `chains' to describe a related set of objects. Namely, we could also consider $C^{p}(M^\vee,\Z_2)$, which are functions from $p$-cells of $M^\vee$ to $\Z_2$. There will be a completely analogous statement of Poincaré duality that $C^{p}(M^\vee,\Z_2)$ is in bijection with $C_{n-p}(M,\Z_2)$, so that chains living on $M$ are in bijection with cochains on $M^\vee$.

Throughout describing the higher cup products, we'll mostly be referring to `cochains' as being functions on a single $n$-simplex $\Delta^n$. Later on when discussing combinatorial $Spin$ structures, we'll see that representatives of Stiefel-Whitney classes naturally live in $C_{n-p}(M,\Z_2) = C^p(M^\vee,\Z_2)$.

\section{Warm up: The $\cup_0$ product as intersection from a `Morse Flow'} \label{CupZeroSection}
Now, as a warm up, let's review what the formula for $\cup_0$ had to do with vector field flow on the simplex $\Delta^n$. We'll use the standard notation that $\cup_0 = \cup$. Recall that for a $p$-cochain $\alpha \in C^p(X,\Z_2)$ and an $(n-p)$-cochain $\beta \in C^{n-p}(X,\Z_2)$, the value of $\alpha \cup \beta$ on an $n$-simplex $(0,\dots,n)$ is given as

\begin{equation}
(\alpha \cup \beta)(0,\dots,n) = \alpha(0,\dots,p) \beta(p,\dots,n)
\end{equation}

For a manifold $X$ with a simplicial decomposition and a branching structure, it is well known that the cup product on $H^*(X)$ is Poincaré dual to the intersection form on the associated chains, when viewed on $H_*(X)$. There is an elementary way to see directly on the cochain level why the intersection of the chains associated to $\alpha, \beta$ may take this form. This is discussed in \cite{Thorngren2018Thesis}, but it will be helpful to redo the discussion here before moving on to higher cup products.

As before, it will be helpful to explicitly visualize the case of $n=2$ before moving on to higher dimensions.

\subsection{Example: $\cup_0$ product in 2 dimensions}
The simplest example of a nontrivial cup product is the case $n=2$, between two 1-cochains. Suppose $\alpha$ and $\beta$ are both $\Z_2$ valued 1-cochains. Then, the value of $\alpha \cup \beta$ on the simplex $(0,1,2)$ is 

\begin{equation}
(\alpha \cup \beta)(0,1,2) = \alpha(0,1)\beta(1,2)
\end{equation}

Note that $\alpha$ and $\beta$ are both Poincaré dual to 1-chains, and the cell $P_{\{i,j\}}$ is included in the dual chain of $\alpha$ iff $\alpha(i,j)=1$. To see why the quantity $\alpha(0,1)\beta(1,2)$ plays a role in the intersection of $\alpha$ and $\beta$, we will introduce a `Morse Flow' of the chains within the simplex as follows. For some small real number $0 < \epsilon \ll 1$ and some fixed set of real numbers $b_0 < b_1 < b_2$, we will define new coordinates $\Tilde{x}_0,\Tilde{x}_1,\Tilde{x}_2$ on $\R^3$ as:

\begin{equation}
\Tilde{x}_i := x_i + \epsilon b_i, \text{ for } i=0,1,2
\end{equation}

Then, in parallel to our cells $P_{I}$ defined in Eq(\ref{dualCellEqns1}), we can define a set of `shifted' cells $\Tilde{P}_{I}$ defined by

\begin{figure}[h!]
  \centering
  \includegraphics[width=0.5\linewidth]{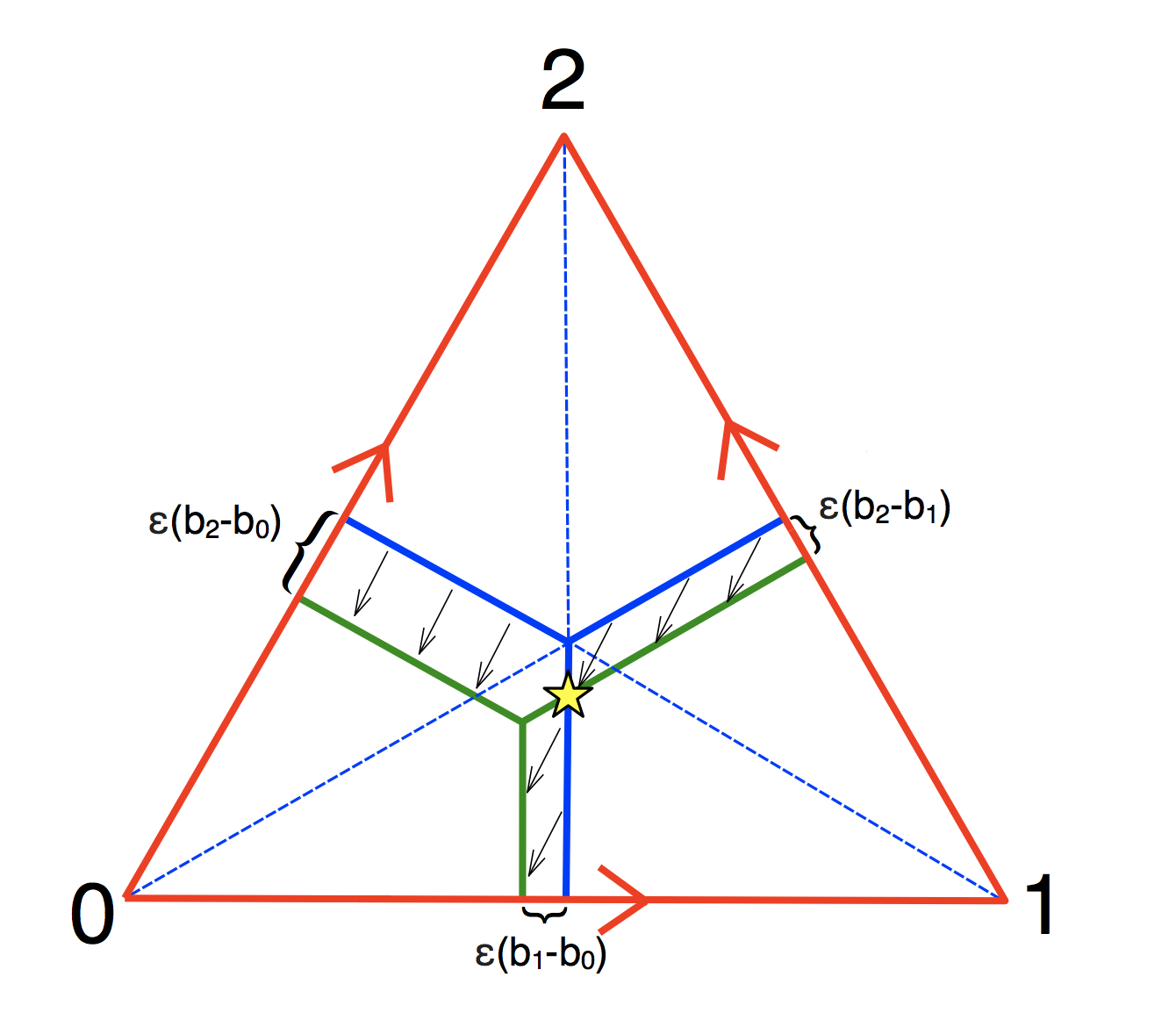}
  \caption{The green lines are the 1-skeleton of the \textit{flowed} 1-cells, i.e. $\Tilde{P}_{\{0,1\}},\Tilde{P}_{\{0,2\}},\Tilde{P}_{\{1,2\}}$. The solid blue lines are the 1-skeleton of the original cells, $P_{\{0,1\}},P_{\{0,2\}},P_{\{1,2\}}$, and the dashed blue lines complete the barycentric subdivision of $\Delta^2$. Notice that the flowed 1-cells of $\Tilde{P}$ only intersect once with the original cells $P$. More precisely, the only intersection point (the yellow star) is between $P_{\{0,1\}}$ and $\Tilde{P}_{\{1,2\}}$. This is the geometric interpretation for why the cup product $(\alpha \cup \beta)(0,1,2) = \alpha(0,1)\beta(1,2)$ represents an intersection.}
  \label{fig:2_Simplex_flowed}
\end{figure}

\begin{equation}\label{dualCellsShifted}
\begin{split}
\Tilde{P}_{\{i_0,\dots,i_{p}\}} = \Delta^n \cap \big\{(&x_0,\dots,x_n) | \Tilde{x}_{i_0} = \dots = \Tilde{x}_{i_n} \text{ and } \Tilde{x}_i \ge \Tilde{x}_{\ihat},\\
&\text{  for all } i \in \{i_0,\dots,i_{p}\}, \ihat \in \{0,\dots,n\} \big\}\\
\end{split}
\end{equation}

The Morse Flow will have this definition in every dimension. Pictorially, we can imagine the branching structures as playing the role of shifting the cells within $\Delta^2$, and creating an additional copy of them, as in Figure(\ref{fig:2_Simplex_flowed}). In that figure, we can see that each of the planes $\Tilde{x}_i=\Tilde{x}_j$ is `shifted' away from the plane $x_i=x_j$ by a transverse distance $b_j-b_i$. 

We can readily notice that the shifted cells $\Tilde{P}$ only intersect the original cells $P$ at exactly one point. Furthermore the only intersection point is between the cells $P_{\{0,1\}}$ and $\Tilde{P}_{\{1,2\}}$. This gives us a nice interpretation for the cup product. In other words, if we represent $\alpha$ by its representative chains on the original cells, $P$, and $\beta$ by its representatives on the $\Tilde{P}$, then we'll have that $\alpha \cup \beta (0,1,2)$ is 1 if those submanifolds intersect in $\Delta^2$ and 0 if they do not. 

Furthermore, for intersections of 0-cells with 2-cells, it's simple to see that the only pairs of such cells that intersect are $(P_{\{0\}},\Tilde{P}_{\{0,1,2\}})$ and $(P_{\{0,1,2\}},\Tilde{P}_{\{2\}})$. This matches up with the intuition that for $\alpha$ a 0-cochain (resp. 2-cochain) and $\beta$ a 2-cochain (resp. 0-cochain), then $\alpha \cup \beta (0,1,2) = \alpha(0)\beta(0,1,2)$  (resp. $\alpha \cup \beta (0,1,2) = \alpha(0,1,2)\beta(2)$).

Also, note that we can see a simple explanation the `non-commutative' property of the cup product, that on the cochain level $\alpha \cup \beta \neq \beta \cup \alpha$: it's simply because the Morse Flow breaks the symmetry of which cells in $\alpha$ intersect with which cells in $\beta$.

This intuition for the cup product will indeed hold for any chains in any dimension, a property which we'll state more precisely and verify in the next section.

\subsection{Cup product in general dimensions}
Let's state our first proposition about the $\cup_0$ product.

\begin{prop}
Fix $n \ge 2$. For $a$ sufficiently small and some subsets $I = \{i_0 < \dots < i_p\}$, $J =\{j_0 < \dots < j_q \}$ of $\{0,\dots,n\}$, the cells $P_{I,J}$ and $\Tilde{P}_{I,J}$ are defined as in Eq(\ref{dualCellsShifted}). Then, 
\begin{enumerate}
\item If $i_p > j_0$, then the intersection of the cells $P_I \cap \Tilde{P}_J$ is empty.
\item If $i_p = j_0$, then $\lim_{\epsilon \to 0} (P_I \cap \Tilde{P}_J) = P_{\{i_0,\dots,i_p=j_0,\dots,j_q\}}$
\item If $i_p < j_0$, then $\lim_{\epsilon \to 0} (P_I \cap \Tilde{P}_J) = P_{\{i_0,\dots,i_p,j_0,\dots,j_q\}}$
\end{enumerate}
where `limit' here means the Cauchy limit of the sets. 
\end{prop}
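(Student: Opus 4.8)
The plan is to bypass the parametrizations of Eq(\ref{cellParameterizations}) and work directly with the half-space descriptions of Eq(\ref{dualCellEqns1}) and Eq(\ref{dualCellsShifted}). A point $x\in\Delta^n$ lies in $P_I$ exactly when there is a scalar $s$ with $x_i=s$ for all $i\in I$ and $x_k\le s$ for all $k\notin I$; likewise $x\in\tilde P_J$ exactly when there is a scalar $t$ with $x_j+\epsilon b_j=t$ for all $j\in J$ and $x_k+\epsilon b_k\le t$ for all $k\notin J$. The entire analysis hinges on comparing the single pair of extreme coordinates $x_{i_p}$, with $i_p=\max I$, and $x_{j_0}$, with $j_0=\min J$ — which is precisely the pair forced to coincide up to the $\epsilon$-shift in the $n=2$ picture of Figure(\ref{fig:2_Simplex_flowed}).

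First I would dispose of part (1). If $i_p>j_0$, membership in $P_I$ gives $x_{i_p}=s$ and $x_{j_0}\le s$, so $x_{i_p}\ge x_{j_0}$; membership in $\tilde P_J$ gives $x_{j_0}+\epsilon b_{j_0}=t\ge x_{i_p}+\epsilon b_{i_p}$, hence $x_{j_0}\ge x_{i_p}+\epsilon(b_{i_p}-b_{j_0})$. Since $i_p>j_0$ forces $b_{i_p}>b_{j_0}$ and $\epsilon>0$, these combine to $x_{j_0}>x_{i_p}\ge x_{j_0}$, a contradiction, so $P_I\cap\tilde P_J=\emptyset$ for every $\epsilon>0$. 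The only idea here is picking the right pair of coordinates.

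Next, parts (2) and (3), where the same comparison now constrains rather than contradicts. When $i_p=j_0$ this common index lies in $I\cap J$ (indeed $\max I=\min J$ forces $I\cap J=\{i_p\}$), so $s=x_{i_p}=t-\epsilon b_{j_0}$; substituting back gives $x_j=s+\epsilon(b_{j_0}-b_j)$ for all $j\in J$, $x_i=s$ for all $i\in I$, and $x_k\le s-\epsilon\max(0,b_k-b_{j_0})$ for $k\notin I\cup J$. When $i_p<j_0$ (so $I\cap J=\emptyset$) the two inequalities instead confine $x_{j_0}$ to the interval $[\,s-\epsilon(b_{j_0}-b_{i_p}),\,s\,]$, whence $t=s+O(\epsilon)$, every coordinate indexed by $I\cup J$ equals $s+O(\epsilon)$, and $x_k\le s+O(\epsilon)$ for $k\notin I\cup J$. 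Either way, $P_I\cap\tilde P_J$ is the polytope cut out of $\Delta^n$ by a linear system whose coefficients depend continuously on $\epsilon$ and which at $\epsilon=0$ reduces exactly to the relations $x_k=s$ $(k\in I\cup J)$, $x_k\le s$ $(k\notin I\cup J)$ defining $P_{I\cup J}$.

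The one genuine obstacle is upgrading this to convergence in the Cauchy (Hausdorff) metric. One direction — that $P_I\cap\tilde P_J$ lies in an $O(\epsilon)$-neighborhood of $P_{I\cup J}$ — is immediate from the coordinate estimates above. The substantive direction is the reverse: one must rule out the intersection collapsing onto a proper face of $P_{I\cup J}$. I would establish this by constructing an explicit near-identity map $P_{I\cup J}\to P_I\cap\tilde P_J$: starting from $y\in P_{I\cup J}$, move the coordinates indexed by $I\cup J$ by the prescribed $O(\epsilon)$ amounts to meet the shifted equalities, and absorb the compensating $O(\epsilon)$ total into the coordinates indexed outside $I\cup J$, keeping the point in $\Delta^n$ and the (now strict, hence stable) inequalities satisfied — doing this first on the relative interior, where there is slack, and extending by continuity. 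Since this map and its inverse are $O(\epsilon)$-close to the identity, the two inclusions yield the claimed limit. Alternatively one can invoke the standard continuity of a polytope in its half-space data, valid here because $P_{I\cup J}$ equals the closure of its relative interior.
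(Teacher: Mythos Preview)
Your proof is correct and follows essentially the same route as the paper's: both work directly with the half-space descriptions of Eq(\ref{dualCellEqns1}) and Eq(\ref{dualCellsShifted}), and both hinge on the comparison of the extreme coordinates $x_{i_p}$ and $x_{j_0}$ to obtain the contradiction in Part~1 and the $O(\epsilon)$ confinement in Parts~2 and~3. Your treatment of the Hausdorff limit is in fact more careful than the paper's, which simply asserts that the explicit $\epsilon$-dependent system ``becomes precisely'' $P_{I\cup J}$ without addressing the possibility of collapse onto a proper face; your explicit near-identity map (or the appeal to continuity of polytopes in their half-space data) fills a gap the paper leaves implicit.
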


This is \textit{almost} the statement we want, modulo the subtlety which is Part 3 of the proposition. However, note that if $i_p < j_0$ then the cell $P_{\{i_0,\dots,i_p,j_0,\dots,j_q\}}$ is a dimension $n-p-q-1$ cell; this is a lower dimension than the case $i_p = j_0$ where $P_{\{i_0,\dots,i_p,j_0,\dots,j_q\}}$ is a dimension $n-p-q$ cell. Also note that for any finite $\epsilon >0$, the intersection of the cells will be an $(n-p-q)$-dimensional manifold. So in short, this proposition tells us that in the limit of $\epsilon \to 0$, the only intersections that retains the full dimension $n-p-q$ are between the $I = \{i_0,\dots,i_p\}$ and $J=\{j_0,\dots,j_q\}$ such that $i_p = j_0$.

Translated back to the cochain language, this proposition says for our $\Z_2$ cochains $\alpha, \beta$, the Poincaré dual of $\alpha \cup \beta$ is the union of all of the intersections as $\epsilon \to 0$ cells of $\big\{P_{\{i_0,\dots,i_p\}} | \alpha(i_0,\dots,i_p)=1 \big\}$ with $\big\{\Tilde{P}_{\{j_0,\dots,j_q\}} | \beta(j_0,\dots,j_q)=1 \big\}$ that survive as \textit{full, $(n-p-q)$-dimensional} cells in the limit of $\epsilon \to 0$, which satisfy $i_p = j_0$. This is a direct way to see how the cup product algebra interacts with the intersection algebra. Now, we can give the proof.

\begin{proof}
Recall that $P_I$ and $\Tilde{P}_J$ are defined, respectively, by the relations:
\begin{equation}
\begin{split}
P_{\{i_0,\dots,i_{p}\}} = \Delta^n \cap \big\{(&x_0,\dots,x_n) | x_{i_0} = \dots = x_{i_n} \text{ and } x_i \ge x_{\ihat},\\
&\text{  for all } i \in \{i_0,\dots,i_{p}\}, \ihat  \in \{0,\dots,n\} \big\}\\
\end{split}
\end{equation}

and

\begin{equation}
\begin{split}
\Tilde{P}_{\{j_0,\dots,j_q\}} = \Delta^n \cap \big\{(&\Tilde{x}_0,\dots,\Tilde{x}_n) | \Tilde{x}_{j_0} = \dots = \Tilde{x}_{j_q} \text{ and } \Tilde{x}_j \ge \Tilde{x}_{\jhat},\\
&\text{  for all } j \in \{j_0,\dots,j_q\}, \jhat \in \{0,\dots,n\} \big\}\\
\end{split}
\end{equation}

The definition of $\Tilde{P}_{\{j_0,\dots,j_q\}}$ can be rewritten as

\begin{equation}
\begin{split}
\Tilde{P}_{\{j_0,\dots,j_q\}} = \Delta^n \cap \big\{(&x_0,\dots,x_n) | x_{j_0} = x_{j_1} + \epsilon (b_{j_1} - b_{j_0}) = \dots = x_{j_q} + \epsilon (b_{j_q} - b_{j_0}) \text{ and } x_j \ge x_{\jhat} + \epsilon(b_{\jhat}-b_j),\\
&\text{  for all } j \in \{j_0,\dots,j_q\}, \jhat  \in \{0,\dots,n\} \big\}\\
\end{split}
\end{equation}

Now, we can see why Part 1 is true. Suppose $i_p > j_0$ and $\epsilon>0$. Any point $(x_0,\dots,x_n)$ in the intersection would need to satisfy $x_{j_0} \ge x_{i_p} + \epsilon(b_{i_p}-b_{j_0}) > x_{i_p} \ge x_{j_0}$, i.e. $x_{j_0} > x_{j_0}$ which is impossible. Here, we used that $b_{i_p}-b_{j_0} > 0$ for $i_p > j_0$. So, there are no points in $P_I \cap \Tilde{P}_J$.

The argument for Part 2 is similar. It's not hard to check that the intersection $P_I \cap \Tilde{P}_J$ is defined by the equations 

\begin{equation}
\begin{split}
P_I \cap \Tilde{P}_J = \Delta^n \cap \bigg\{(&x_0,\dots,x_n) | x_{i_0} = \dots = x_{i_p} = x_{j_0} = x_{j_1} + \epsilon (b_{j_1} - b_{j_0}) = \dots = x_{j_q} + \epsilon (b_{j_q} - b_{j_0}),\\
&\text{  and } x_k \ge x_{\hat{k}} + \epsilon \Tilde{D}_{k \hat{k}} \text{ for all } k \in \{i_0,\dots,i_p=j_0,\dots,j_q\}, \hat{k}  \in \{0,\dots,n\}, \\
&\text{  where } \bigg(\Tilde{D}_{k \hat{k}} := 
  \begin{cases}
    0               &\text{ if } k \notin J \\
    b_{\hat{k}}-b_k &\text{ if } k \in J \textbackslash \{i_p=j_0\} \\
    max\{0, b_{\hat{k}}-b_k\} &\text{ if } k = i_p = j_0 \\
  \end{cases} \bigg)
\bigg\}\\
\end{split}
\end{equation}

And, in the limit $\epsilon \to 0$, we'll have that this set becomes precisely $P_I \cap \Tilde{P}_J \to P_{i_0,\dots,i_p=j_0,\dots,j_q}$.

The argument for Part 3 is again similar to both of the previous parts. Similarly to Part 1, we have the constraint $x_{j_0} \ge x_{i_p} - \epsilon(b_{j_0}-b_{i_p}) \ge x_{j_0} - \epsilon(b_{j_0}-b_{i_p})$. But, since now $i_p < j_0$, we'll have that this constraint limits $x_{i_p}$ to lie in the range $[x_{j_0} - \epsilon(b_{j_0}-b_{i_p}),x_{j_0}]$. In the limit of $\epsilon \to 0$, this will enforce $x_{i_p} = x_{j_0}$. So, in the limit of $\epsilon \to 0$, we'll have that 

\begin{equation}
\begin{split}
P_I \cap \Tilde{P}_J &\xrightarrow{\epsilon \to 0} \Delta^n \cap \big\{(x_0,\dots,x_n) | x_{i_0} = \dots = x_{i_p} = x_{j_0} = \dots = x_{j_q}  \text{ and } x_k \ge x_{\hat{k}},\\
&\quad\quad\quad\quad\quad\quad \text{ for all } k \in \{i_0,\dots,i_p,j_0,\dots,j_q\}, \hat{k}  \in \{0,\dots,n\} \big\}\\
&= P_{\{i_0,\dots,i_p,j_0,\dots,j_q\}}
\end{split}
\end{equation}
\end{proof}

\subsection{Comparing the vector fields on different simplices} \label{flowedSimplices}
While our vector fields satisfy the desired intersection properties within each simplex, one minor issue that we should address is to think of how the vector fields compare on the boundaries of neighboring simplices. It will not be the case that the vector fields will match on neighboring two simplices. However, the branching structure will ensure that the vector fields can be smoothly glued together without causing any additional intersections between the chains (see Figure(\ref{neighboringTwoSimplices})). This is because the branching structure will make sure that the flowed simplices will be flowed on the same side of the original simplex. So those flows can be connected between different faces to avoid any further intersections than the ones inside the simplices themselves. So, the intersection numbers on the whole triangulated manifold will just be given by the intersections on the interiors. In the cases where the intersection classes are higher dimensional, the intersection classes themselves can be also be connected to meet on the boundaries.

We expect that these flows can be \textit{smoothly} connected to match on the boundaries. However, we will avoid explicitly smoothing the vector fields at the boundaries due to the technicalities that tend to be involved in such constructions. For example, in high dimensions a single piecewise-linear structures on a manifold generically corresponds to many smooth structures, so we would expect an explicit smoothing of these maps to depend on the particular smooth structure. However, in discussing the GWGK Integral, we'll be able to explicitly connect the vector fields in a neighborhood of the 1-skeleton. This is since we'll do everything in local coordinates which aren't as technical to deal with just near the 1-skeleton.

\begin{figure}[h!]
  \centering
  \includegraphics[width=0.5\linewidth]{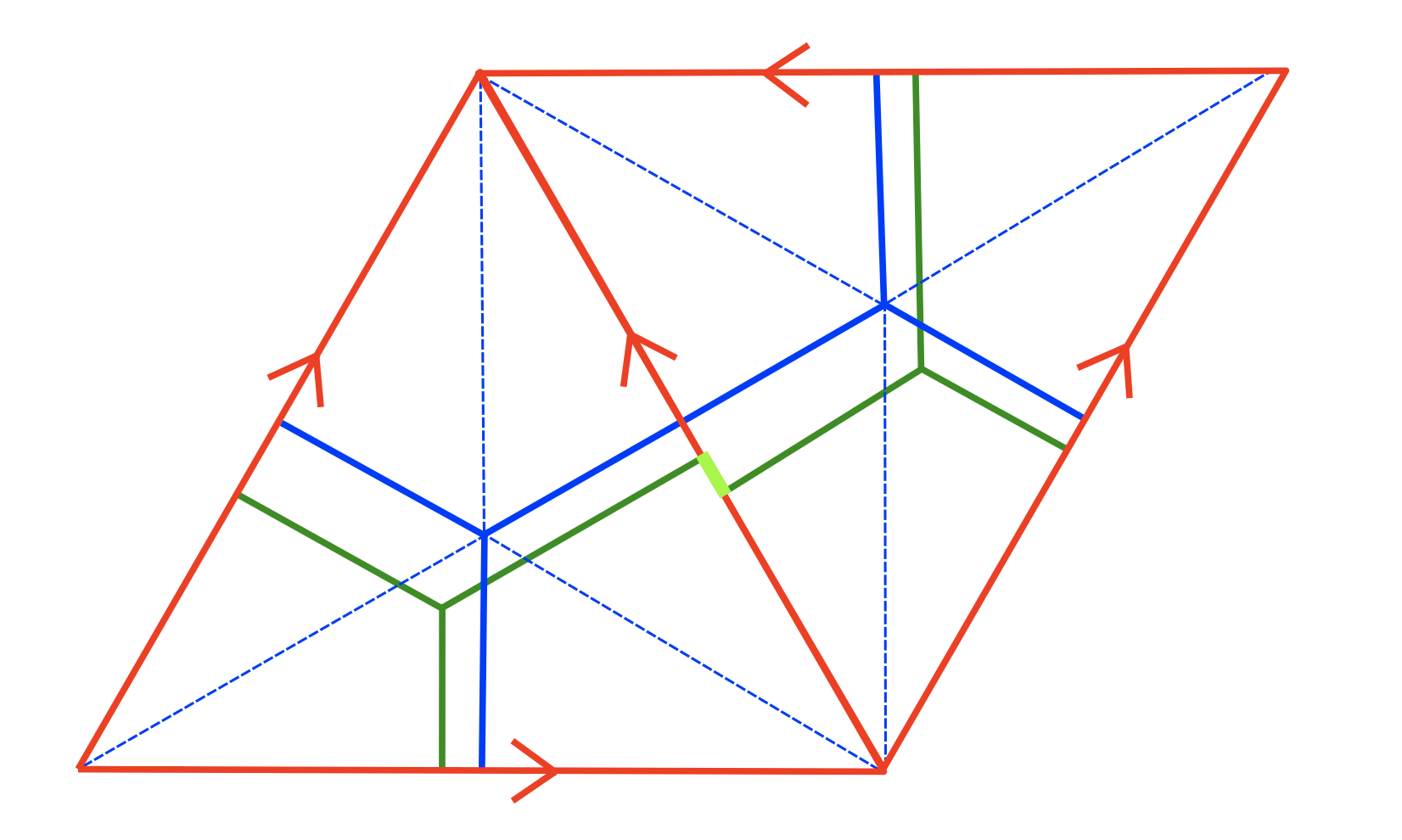}
  \caption{The flows on different simplices may not match on the boundaries. But, the flows can be connected (the bright green line) to avoid any additional intersections between the flowed (green) cells and the original (blue) cells.}
  \label{neighboringTwoSimplices}
\end{figure}

\section{$\cup_i$ products from $(i+1)$-parameter Morse Flows} \label{sec:HigherCupProductMain}
First, let us recall the definition and some properties of the higher cup products (see e.g. \cite{MosherTangora}). Given some $\alpha$ a $p$-cochain and $\beta$ a $(n-p+i)$-cochain, we'll have that $\alpha \cup_i \beta$ is an $n$-cochain, such that  when restricted to an $n$-simplex, 

\begin{equation} \label{higherCupFormula}
(\alpha \cup_i \beta)(0,\dots,n) = \sum_{0 \le j_0 < \dots < j_i \le n} \alpha(0 \to j_0, j_1 \to j_2, \dots) \beta(j_0 \to j_1, j_2 \to j_3,\dots)
\end{equation}

where we use the notation $j \to k$ to refer to $j,\dots,k$. There is a caveat in the above definition, that we just restrict to those $j_0 < \dots < j_i$ such that $\#\{0 \to j_0, j_1 \to j_2, \dots \} = p+1$ and $\#\{j_0 \to j_1, j_2 \to j_3, \dots \} = n-p+i+1$, so not all $\{j_0,\dots,j_i\}$ contribute to the sum. For example, if $\alpha$ and $\beta$ are both 2-cochains, then $\alpha \cup_1 \beta$ is a 3-cochain with $(\alpha \cup_1 \beta)(0,1,2,3) = \alpha(0,1,3)\beta(1,2,3) + \alpha(0,2,3)\beta(0,1,2)$, so only two of the ${4 \choose 2} = 6$ choices of $0 \le j_0 < j_1 \le 3$ contribute in this case.

It is well-known that the $\cup_i$ products are \textit{not} cohomology operations, i.e. that $\alpha \cup_i \delta \lambda$ may not be cohomologically trivial or even closed, even though $\delta \lambda$ is exact. Despite the fact that $\cup_i$ is not a \textit{binary} cohomology operation, it will in fact be a \textit{unary} cohomology operation - the maps called the `Steenrod Squares',

\begin{equation}
\alpha \mapsto \alpha \cup_i \alpha =: Sq^{p-i}(\alpha)
\end{equation}

will always be closed for $\alpha$ closed and (up to a boundary) only depend on the cohomology class of $\alpha$. The root algebraic property of the $\cup_i$ products is the formula:

\begin{equation}
\delta(\alpha \cup_k \beta) = \delta\alpha \cup_k \beta + \beta \cup_k \delta\alpha + \alpha \cup_{k-1} \beta + \beta \cup_{k-1} \alpha
\end{equation}

This implies that if $\delta \alpha = 0$, then $\delta(\alpha \cup_i \alpha) = \delta\alpha \cup_{i} \alpha + \alpha \cup_{i} \delta\alpha + 2 \alpha \cup_{i-1} \alpha = 0$, so $\alpha \cup_i \alpha$ is closed for closed $\alpha$. And, $(\alpha + \delta \beta) \cup_k (\alpha + \delta \beta) = \alpha \cup_k \alpha + \delta\big(\alpha \cup_{k+1} \delta\beta + \beta \cup_k \delta\beta + \beta \cup_{k-1} \beta \big)$, meaning $Sq^{p-k}(\alpha)$ and $Sq^{p-k}(\alpha + \delta \beta)$ are closed cocycles in the same cohomology class.

An important consequence of the above equality is that, up to a coboundary, we'll have that for $\alpha$ a cocycle and $\lambda$ any cochain, we'll have (up to a coboundary):

\begin{equation} \label{higherCupSymm}
\alpha \cup_i \delta\lambda \equiv \alpha \cup_{i-1} \lambda + \lambda \cup_{i-1} \alpha
\end{equation}

which relates the $\cup_i$ to how the $\cup_{i-1}$ differs under switching the order of $\alpha,\lambda$ (or equivalently under reversing the branching structure).

\subsection{Motivation for the $\cup_1$ product}
Now, let's give a key example to motivate why `thickening' the chains should seem useful to describe the higher cup products. We'll start with the simplest example of the $\cup_1$ product.

Let's consider the $\cup_1$ product between a closed cochain $\alpha$ and some boundary $\delta\lambda$. Let's consider $\alpha$ a $p$-cocycle and $\lambda$ an $(n-p)$-cochain, so that $\alpha \cup_1 \delta \lambda$ is an $n$-cochain. From Eq(\ref{higherCupSymm}), we'll have:

\begin{equation}
\alpha \cup_1 \delta\lambda \equiv \alpha \cup \lambda + \lambda \cup \alpha
\end{equation}

To think about what this term means, we should first think about what each of the $\alpha \cup \lambda, \lambda \cup \alpha$ mean. Based on our observations in the previous section, we can see that $\alpha \cup \lambda$ measures where $\lambda$ intersects with a version of $\alpha$ shifted in the direction of the \textit{positive} Morse flow. And $\lambda \cup \alpha$ measures where $\lambda$ intersects with a copy of $\alpha$ flowed in the \textit{negative} direction. So, we see that $\int \alpha \cup_1 \delta\lambda = \int \lambda \cup \alpha + \int \alpha \cup \lambda$ measures how the intersection numbers of the chains representing $\alpha$ and $\lambda$ change with respect to the positive and negative Morse Flows.\footnote{Note that $\alpha \cup \lambda$ may not equal $\lambda \cup \alpha$ plus a coboundary, since $\lambda$ may not be closed. So $\int \alpha \cup \lambda$ may not equal $\int \lambda \cup \alpha$}

\begin{figure}[h!]
  \centering
  \begin{minipage}{0.44\textwidth}
    \centering
    \includegraphics[width=\linewidth]{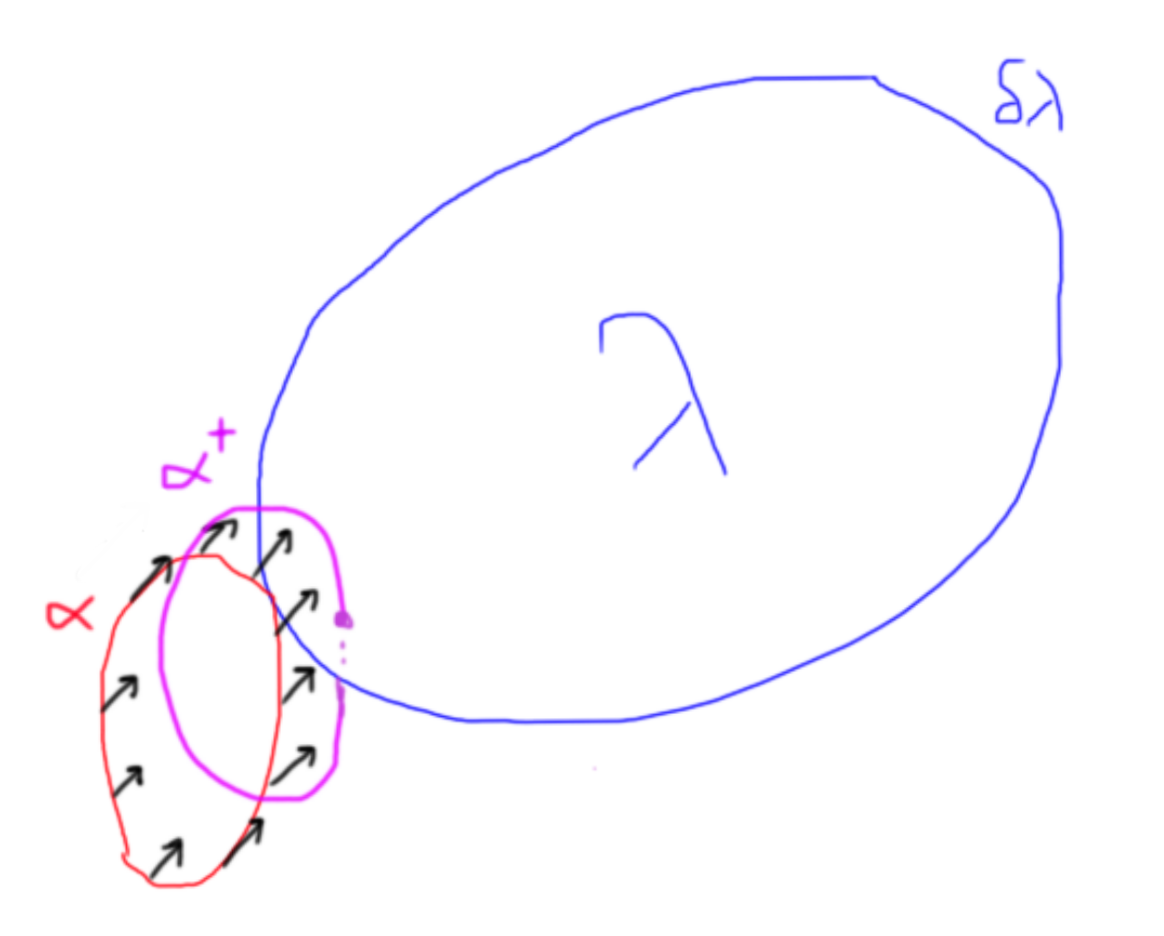}
  \end{minipage} \quad \quad \quad
  \begin{minipage}{0.44\textwidth}
    \centering
    \includegraphics[width=\linewidth]{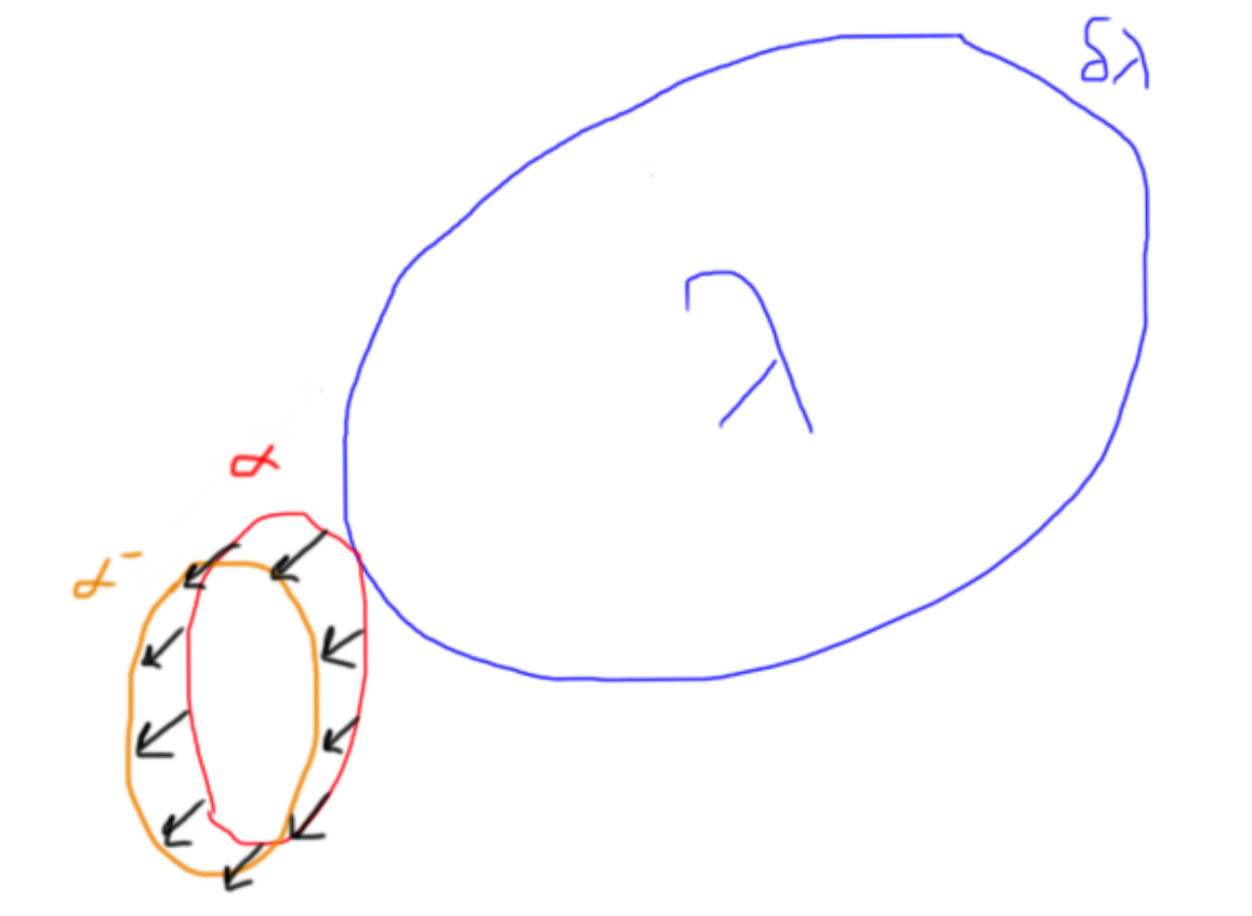}
  \end{minipage}
  
  \caption{The change in the linking number of $\delta\lambda$ and $\alpha$ under the Morse flow gives $\int \alpha \cup_1 \delta \lambda$. Note that by construction, $\alpha$ and $\delta \lambda$ can only intersect at the barycenters of the 3-simplices of the triangulation and the lines connecting these barycenters. Here, we chose to depict them only meeting at the barycenter of a single one simplex. This would happen if, e.g. $\alpha(0,1,2)=\alpha(0,1,3)=\delta\lambda(0,2,3)=\delta\lambda(1,2,3)=1$ with all other entries being zero at a 3-simplex. \\
  (Left) Shifting a 1-form $\alpha$ via the positive Morse flow, giving a shifted curve $\alpha^+$. $\alpha^+$ intersecting $\lambda$ once means that $\alpha^+$ and $\delta\lambda$ have a linking number of 1. (Right) Shifting a 1-form $\alpha$ via the negative Morse flow, giving a shifted curve $\alpha^-$. $\alpha^-$ doesn't intersect $\lambda$, so  $\alpha^-$ and $\delta\lambda$ have a linking number of 0. }
  \label{LinkingNumberCup1}
\end{figure}

In three dimensions, we can visualize this as follows. Suppose $\alpha$ and $\delta \lambda$ are 2-cocycles, so that $\lambda$ is a 1-cochain. This means that $\alpha$ and $\delta \lambda$ are dual to closed 1D curves on the dual lattice and $\delta \lambda$ is dual to a the boundary of the 2-surface that's dual to $\lambda$. Recall $\int \alpha \cup_1 \delta\lambda = \int \alpha \cup \lambda + \int \lambda \cup \alpha$ gives the difference between the intersection numbers of $\alpha$ with $\lambda$ in the positive and negative Morse flow directions. In the case where $\alpha$ is a trivial curve and the manifold is $S^3$, we can visualize this process as how the \textit{linking number} of $\alpha$ and $\delta \lambda$ changes under the Morse flow, due to the well known fact that the (mod 2) linking number between two curves $C_1, C_2$ is the number of times (mod 2) that $C_1$ intersects a surface that $C_2$ bounds. This is shown in Figure(\ref{LinkingNumberCup1}).

While this linking number picture is a nice way to visualize the integrals of certain $\cup_1$ products in three dimension, it is still somewhat unsatisfactory. First, the linking number is often subtle to define and may not make sense; e.g. in higher dimensions, or if the manifold or the curves themselves are topologically nontrivial, it's not always possible to define linking numbers. Next, a linking number is a global quantity that requires global data to compute it, whereas the higher cup products are local quantities defined on every simplex. And, most glaringly, this picture only gives us information about cochains of the form $\alpha \cup_1 \delta\lambda$, while it'd be nice to understand it for more general pairs of cochains.

\begin{figure}[h!]
  \centering
  \includegraphics[width=0.60\linewidth]{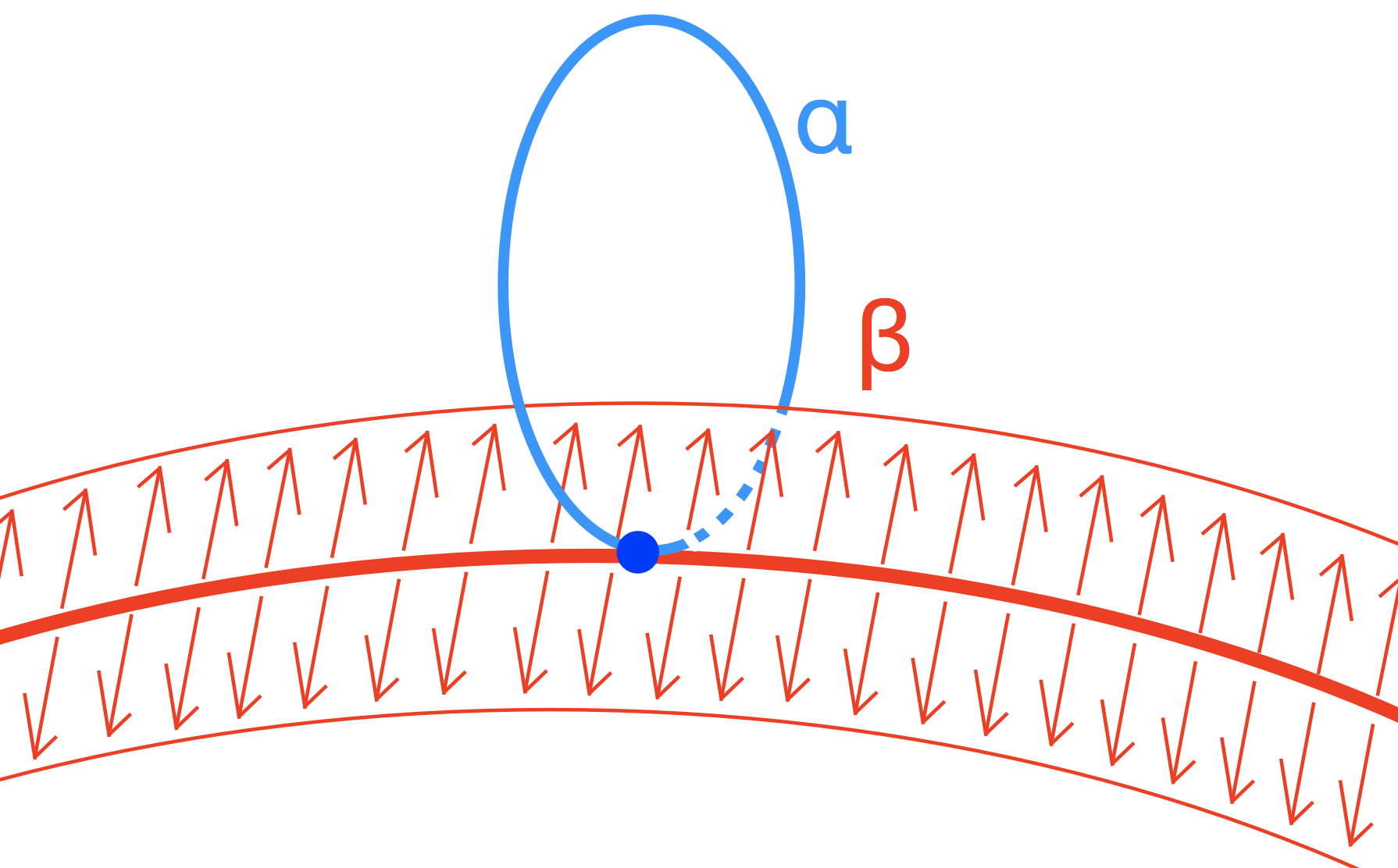}
  \caption{$\alpha$ and $\beta$ are represented by the blue curve and central red curves. The thickening of $\beta$ is given in both the positive and negative directions of the Morse flow (both directions pointing away from the central red curve). $\int \alpha \cup_1 \beta$ measures the intersection of $\alpha$ with this thickening of $\beta$. Note that if $\beta$ was a trivial curve, this integral gives how the linking numbers of $\alpha$ and $\beta$ change with respect to the Morse flow.}
  \label{fig:Cup1Thiccening}
\end{figure}

Following this intuition of trying to give a `local' geometric definition, we are lead to the idea of `thickening' the chains. In particular, we can note that this difference of linking numbers can be also attributed to `thickening' $\delta \lambda$ in \textit{both} directions of the Morse flow and then measuring the intersection number of $\alpha$ with this thickening of $\delta \lambda$. For example, see Figure(\ref{fig:Cup1Thiccening}). This could be anticipated from the linking number intuition, since the change in the linking number under the Morse flow only depends on the surface in a neighborhood of the second curve.

So, it seems like we've found a potential geometric prescription to assign to the $\cup_1$ product. While this is in line with our intuition, we quickly run into an issue when we try to implement this on the cochain level: the intersection between the original cells and their thickenings is degenerate (i.e. the curves intersect on their boundaries). We can see this by drawing the simplest example, see Figure(\ref{fig:CochainThicceningNotWorkVsWork}), of the intersection of a 1-chain with the thickening of a 1-chain. It's not hard to convince oneself that the only intersection point between a cell of $\alpha$ and the thickened version of a different cell of $\beta$ will be the barycenter, $c$, which is at the boundary of both the cell in $\alpha$ and the thickened cell of $\beta$. And, the intersection of a cell with its own thickening will simply be itself, not anything lower-dimensional.

This was basically the same issue we faced with the original cup product. The way we dealt with this degenerate intersection before was to shift $\alpha$ along the direction of the Morse Flow, which made the intersection nondegenerate. We could again try shifting $\alpha$ along the Morse flow, but we'll quickly realize that these shifted cells of $\alpha$ will only intersect at the thickened $\beta$'s edge: simply because the thickened $\beta$ was defined with respect to the Morse flow in the first place! To resolve this ambiguity, we will need to shift $\alpha$ by along a vector that's linearly independent from all the other vectors. This way, we can arrange for there to be a definite intersection point between the thickened cells of $\beta$ and the shifted cells of $\alpha$. 

There is one aspect in this that we should be careful about. Let's say we thickened $\beta$ along the original Morse flow vector $\Vec{v}$ by some thickness $\epsilon_1$. Then, we'll want to shift $\alpha$ along the second Morse flow vector $\Vec{w}$ by some distance $\epsilon_2 \ll \epsilon_1$. This is because once $\epsilon_2$ becomes too big compared to $\epsilon_1$, then the intersection locus might change its topology, which can be seen by examining Figure(\ref{fig:CochainThicceningNotWorkVsWork}).

\begin{figure}[h!]
  \centering
  \begin{minipage}{0.44\textwidth}
    \centering
    \includegraphics[width=\linewidth]{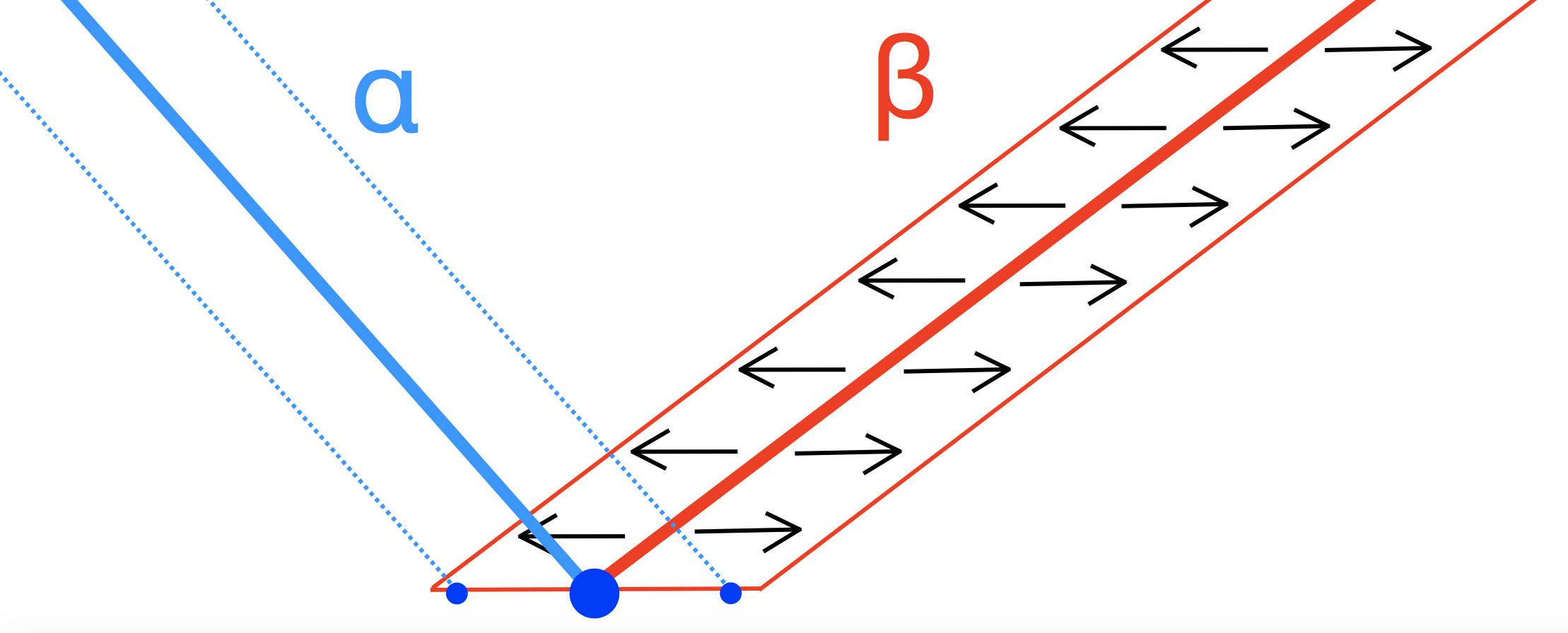}
  \end{minipage} \quad \quad \quad
  \begin{minipage}{0.44\textwidth}
    \centering
    \includegraphics[width=\linewidth]{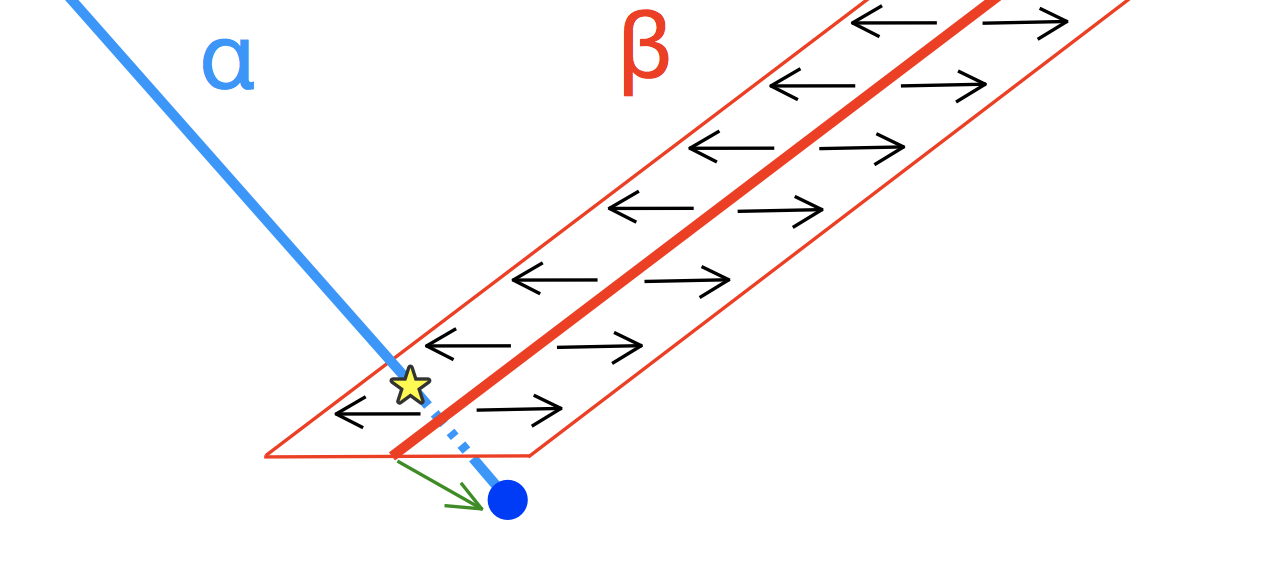}
  \end{minipage}
  
  \caption{(Left) $\alpha$ and $\beta$ are represented by the \textit{thick} blue and red curves.  The thickening of $\beta$ is given by the interior of the red rectangle. And some shifts of $\alpha$ along the Morse Flow are given by the dashed blue lines. No matter how you move $\alpha$ along the \textit{original} Morse flow, $\alpha$ and the thickening of $\beta$ will always have a degenerate intersection, at their boundaries. This means that the intersection of a chain with the thickened version of itself will be the chain itself, not a lower dimensional version. (Right) However, if we shift $\alpha$ by a \textit{new} vector (green arrow) that's linearly independent from the original Morse flow vector (black arrows), then we can say whether or not this shifted $\alpha$ and the thickened $\beta$ intersect, non-degenerately. Note that if we shift $\alpha$ by an amount comparable to how much we thickened $\beta$, then the intersection region (yellow star) will change in topology.}
  \label{fig:CochainThicceningNotWorkVsWork}
\end{figure}

\textbf{\underline{Intuition for Higher Cup Products}}

From here, we can see a general pattern of thickening and shifting that we can perform to try to compare to the higher cup products. For example, after accepting this for the $\cup_1$ product, we can apply the intuitive reasoning to

\begin{equation*}
\alpha \cup_2 \delta\lambda \equiv \alpha \cup_1 \lambda + \lambda \cup_1 \alpha
\end{equation*}

to see that $\cup_2$ could be thought of as measuring how $\alpha \cup_1 \lambda$ changes under a Morse Flow. We'll see that the $\cup_i$ product is obtained from thickening the cells of one cochain by an $i$-parameter Morse Flow and then shifting the other a small distance to make a well-defined intersection. 

It is often said that the higher cup products measure how much the lower cup products `fail to commute'. For example, the $\cup_1$ product gives an indication of how badly $\cup_0$ product doesn't commute on the cochain level. Geometrically, this is saying that the $\cup_1$ product `measures' the how the intersection of the cells differs under the Morse flow between $\epsilon$ positive or negative. Looking forward, we will see that the geometric way to see $\cup_1$ is to `thicken' the cells under \textit{both} the positive and negative direction of the Morse flow and measure an intersection of the original cell with the thickened cell. However, to measure such an intersection, we \textit{again} need to break the symmetry by introducing an additional vector flow, for a similar reason that we needed to break the symmetry to measure intersection in the first place. 

In general, the $\cup_i$ product will involve an $(i+1)$-parameter Morse Flow, where the first $i$ directions of the flow thicken the manifold and the last direction breaks the symmetry in order to be able to measure an intersection number. We note that much of this discussion was proposed by Thorngren in \cite{Thorngren2018Thesis}. The rest of this section will be devoted to setting up the algebra needed to realize this and showing that the higher cup product formulas are exactly reproduced by such a procedure.

\subsection{Defining the thickened cells} \label{definingThickenedCells}
Our first goal should be to write down parametric equations defining the points of the flowed cells, analogous to the ones in Eq(\ref{cellParameterizations}). To do this, we will need to define some variables $\Tilde{x}_i$  in an analogous way as we did for the 1-parameter Morse flow. Recall, that we defined $\Tilde{x}_j = x_j + \epsilon b_j$ where $b_0 < \dots < b_n$. Here, there was a single $\epsilon$ that played the role of the Morse Flow parameter and the vector $\Vec{b} = (b_j)$ was the Morse Flow vector. For the $\cup_m$ product, we will need an $(m+1)$-parameter Morse Flow, which means that we need $(m+1)$ linearly independent vectors. Let's call these vectors $\Vec{b}_i$ for $i \in \{1 \dots m+1\}$ and denote by $b_{ij}$ the matrix of these vectors,

\begin{equation}
\Vec{b}_i = (b_{i 0},\dots,b_{i n}) \text{  for } i = 1,\dots,m+1
\end{equation}

We'll define quantities $\epsilon_{j}$ which play the role of the Morse Flow parameters, and we'll similarly define our shifted coordinates $\Tilde{x}_i$ as:

\begin{equation}
\Tilde{x}_i = x_i + \epsilon_1 b_{1 i} + \dots + \epsilon_{m+1} b_{m+1, i}
\end{equation}

Now to parametrically define our thickened cells, we'll define the points $\Tilde{c}$ near the barycenter of $\Delta^n$ and the points $\Tilde{f}_j$ near the centers of the faces as follows. $\Tilde{c}$ will be defined by setting all the $\Tilde{x}_j$ coordinates equal, $\Tilde{x}_0 = \dots = \Tilde{x}_n$, analogously to how we defined the center $c$, earlier. Before writing the expression for $\Tilde{c}$, we will find it convenient to define new quantities $B_i$ as

\begin{equation}
B_i := \sum_{j=0}^n b_{ij}
\end{equation}

Then, solving the equations $x_0 + \dots + x_n = 1$ and $\Tilde{x}_0 = \dots = \Tilde{x}_n$, it's straightforward to see that:

\begin{equation} \label{shiftedCenterMorse}
\begin{split}
\Tilde{c}   &= (\Tilde{c}_0,\dots,\Tilde{c}_n), \text{  where} \\
\Tilde{c}_i &= \frac{1}{n+1}(1+B_1 \epsilon_1 + \dots + B_{m+1} \epsilon_{m+1}) - (b_{1 i}\epsilon_1 + \dots + b_{m+1,i}\epsilon_{m+1})
\end{split}
\end{equation}

And, we'll similarly define our points $\Tilde{f}_j$ by setting $\Tilde{x}_0 = \dots = \hat{\Tilde{x}}_j = \dots = \Tilde{x}_n$ and $x_j = 0$ and $x_0 + \dots + x_n = 1$. Solving these equations will give us:

\begin{equation} \label{shifedFacesMorse}
\begin{split}
\Tilde{f}_j     &= ((\Tilde{f}_j)_0,\dots,(\Tilde{f}_j)_n), \text{  where} \\
(\Tilde{f}_j)_i &= \bigg(\frac{1}{n}(1+(B_1 - b_{1 j}) \epsilon_1 + \dots + (B_{m+1}-b_{m+1,j}) \epsilon_{m+1}) - (b_{1 i}\epsilon_1 + \dots + b_{m+1, i}\epsilon_{m+1})\bigg) \delta_{i \neq j}, \text{  and} \\
\delta_{i \neq j} &= 
\begin{cases}
1 \text{  if  } i \neq j \\
0 \text{  if  } i   =  j \\
\end{cases}
\end{split}
\end{equation}

We should clarify that these $\Tilde{c}$ and $\Tilde{f}$ are really functions of the $\epsilon_j$.

Now, for some fixed set of $\epsilon_j$, we can define the shifted cells $\Tilde{P}_{\{i_0 \dots i_p\}}$ entirely analogously to the original cells in Eq(\ref{cellParameterizations}):

\begin{equation}
\begin{split}
\Tilde{P}_{\{i_0,\dots,i_{p}\}}(\epsilon_1,\dots,\epsilon_{m+1};\varepsilon) &= \Delta^n \cap \{\Tilde{c} + \sum_{j=1}^{n-p} (\Tilde{f}_{\ihat_j} - \Tilde{c})t_j | t_j \ge 0 \text{ for all } j\},\\
&\text{where } \{\ihat_1,\dots,\ihat_{n-p}\} = \{0,\dots,n\} \textbackslash \{i_0,\dots,i_{p}\}
\end{split}
\end{equation}

These $\Tilde{P}_{\{i_0 \dots i_p\}}$ are defined with respect to a fixed set of $\epsilon_j$: as of now they are not thickened. Note that they depend on multiple parameters $\epsilon_1,\dots,\epsilon_{m+1}$ and that this definition agrees with our previous expression Eq(\ref{dualCellsShifted}). We could have defined such a parameterization earlier when talking about the $\cup_0$ product, but it wasn't necessary at that point as it is now. Another equivalent way to express this is \footnote{This is because the cells can be thought of as shifting by the vectors $\{\epsilon_j (\Vec{b}_j - \frac{1}{n+1} B_j \Vec{b}_0)\}$, which are the projections of $\{\epsilon_j \Vec{b}_j\}$ onto $\Delta^n$}:

\begin{equation}
\begin{split}
\Tilde{P}_{\{i_0,\dots,i_{p}\}}(\epsilon_1,\dots,\epsilon_{m+1};\varepsilon) = \Delta^n \cap \{& \epsilon_1 \Vec{b}_1 + \cdots + \epsilon_{m+1} \Vec{b}_{m+1} - \frac{1}{n+1}(\epsilon_1 B_1 + \cdots \epsilon_{m+1}B_{m+1}) \cdot \Vec{b}_0 \\
                                                                                   & + \sum_{j=1}^{n-p} (f_{\ihat_j} - c)t_j | t_j \ge 0 \text{ for all } j\},\\
\text{where } & \{\ihat_1,\dots,\ihat_{n-p}\} = \{0,\dots,n\} \textbackslash \{i_0,\dots,i_{p}\}
\end{split}
\end{equation}

Here, $\Vec{b}_0$ refers to the vector $(1,\dots,1)$. To thicken them, we should also treat the $\epsilon_j$ as a parameter to be varied. So overall, our thickened cells $\Tilde{P}^\text{thick}_{\{i_0 \dots i_p\}}((\epsilon_{m+1};\varepsilon))$ will be written as:

\begin{equation} \label{thickenedCellDefinition}
\Tilde{P}^\text{thick}_{\{i_0 \dots i_p\}}(\epsilon_{m+1};\varepsilon) = \bigsqcup_{\epsilon_j \in (-\varepsilon, \varepsilon)  \text{ for } j \in \{1,\dots,m\}} \Tilde{P}_{\{i_0 \dots i_p\}}(\epsilon_1,\dots,\epsilon_{m+1};\varepsilon)
\end{equation}

Above, we only allowed the $\epsilon_1,\dots,\epsilon_m$ to vary since we're considering an $m$-parameter thickening. And, we only thickened the cells up to some small fixed number $0 < \varepsilon \ll 1$. Eventually $\epsilon_{m+1} \ll \varepsilon$ will induce a small shift to let us define a non-degenerate intersection.

What we really mean by these $\ll$ signs is that we're going to be considering the cells' intersections in the following order of limits:

\begin{equation} \label{thickenedCellIntersection}
\lim_{\varepsilon \to 0} \lim_{\epsilon_{m+1} \to 0^+} P_{\{i_0 \dots i_p\}} \cap \Tilde{P}^\text{thick}_{\{j_0 \dots j_q\}}(\epsilon_{m+1};\varepsilon)
\end{equation}

\subsection{Statement of the main proposition on $\cup_m$} \label{mainPropCup_m}
Now, we are in a position to begin to write down the main proposition of this note relating the $\cup_m$ formulas to the generalized intersections. Specifically given our $m$-parameter thickening, we want to find what the limit in Eq(\ref{thickenedCellIntersection}) equals. More specifically, if $\alpha$ is a $p$-cochain and $\beta$ is a $q$-cochain, then $\alpha \cup_m \beta$ will be an $(p+q-m)$-form, so we really care about which $(n+m-p-q)$-dimensional cells survive the limit. Recall for the regular cup product, we had that many pairs of $(n-p)$- and $(n-q)$-cells had limits of intersections that were `lower-dimensional' of dimension $(n-p-q-1)$ that survived the limit. Likewise, for these thickened cases, we'll have that there may be many pairs of $(n-p)-$cells whose intersection with the thickened $(n-q)$-cells limit to cells that have dimension less than $(n+m-p-q)$.

Now, before we state the main proposition, let's look more closely about what exactly the formula for the higher cup product is saying. For general indices, it reads that for a $p$-cochain $\alpha$ and a $q$-cochain $\beta$:

\begin{equation} \label{higherCupFormula2}
(\alpha \cup_m \beta)(i_0,\dots,i_{p+q-m}) = \sum_{\{j_0 < \dots < j_m\} \subset \{i_0,\dots,i_{p+q-m}\}} \alpha(i_0 \to j_0, j_1 \to j_2, \dots) \beta(j_0 \to j_1, j_2 \to j_3,\dots)
\end{equation}

For this subsection, we will refer to $j_\gamma \to j_{\gamma +1}$ as $\{j_\gamma,\dots j_{\gamma+1}\} \cap \{i_0,\dots,i_{p+q+m}\}$ where $\#\{i_0 \to j_0, j_1 \to j_2, \dots \} = p+1$ and $\#\{j_0 \to j_1, j_2 \to j_3, \dots \} = q+1$. Writing this statement in terms of the dual cells, this means that only pairs of cells of the form $P_{\{i_0 \to j_0, j_1 \to j_2, \dots\}}$ and $P_{\{j_0 \to j_1, j_2 \to j_3, \dots\}}$ will contribute to $(\alpha \cup_m \beta)$. Let's think about what kind of restrictions this would imply for general cells. Let's call the sets $J_1 := \{i_0 \to j_0, j_1 \to j_2, \dots\}$, $J_2 := \{j_0 \to j_1, j_2 \to j_3, \dots\}$. 

Note that the two sets $J_1$ and $J_2$ will always share exactly $m+1$ indices $\{j_0,\dots,j_m\}$. We'll have that  any $i \in  \{i_0,\dots,i_{p+q-m}\} \textbackslash \{j_0,\dots,j_m\}$ will be contained in some interval $j_k < i < j_{k+1}$ for some $k \in \{0 \dots m-1\}$. The forms of the sets $J_1, J_2$ tell us that $i \in J_1$ iff k is even, and $i \in J_2$ iff k is odd.

Now we are ready to state our main proposition.

\begin{prop} \label{mainProp}
Choose two cells $P_{K}$ and $P_{L}$ of $\Delta^n$, where $K =  \{k_0 < \dots < k_p\}$ has $(p+1)$ elements and $L = \{\ell_0 < \dots < \ell_q\}$ has $(q+1)$ elements. Let's say that $K \cup L = \{i_0 < \dots < i_r\}$ has $r+1$ elements. Then there exists a set of linearly independent vectors $\Vec{b}_i, i \in \{1,m+1\}$ such that, given $\Tilde{P}^\text{thick}_{L}(\epsilon_{m+1};\varepsilon)$ as defined in Eq(\ref{thickenedCellDefinition}), the following statements hold

\begin{enumerate}
    \item If $r \neq p+q-m$, then $\lim_{\varepsilon \to 0} \lim_{\epsilon_{m+1} \to 0^+} P_{\{i_0 \dots i_p\}} \cap \Tilde{P}^\text{thick}_{\{j_0 \dots j_q\}}$ will be empty or consist of cells whose dimensions are lower than $n+m-p-q$
    \item If $r = p+q-m$, then $K$ and $L$ share $(m+1)$ elements which we'll denote $j_0 < \dots < j_m$.
    \begin{enumerate}
        \item If $K = \{i_0 \to j_0, j_1 \to j_2, \dots\}$ and $L = \{j_0 \to j_1, j_2 \to j_3, \dots\}$, then $\lim_{\varepsilon \to 0} \lim_{\epsilon_{m+1} \to 0^+} P_{K} \cap \Tilde{P}^\text{thick}_{L}(\epsilon_{m+1};\varepsilon) = P_{K \cup L}$.
        \item Otherwise, $\lim_{\varepsilon \to 0} \lim_{\epsilon_{m+1} \to 0^+} P_{K} \cap \Tilde{P}^\text{thick}_{L}(\epsilon_{m+1};\varepsilon)$ will be empty.
    \end{enumerate}
\end{enumerate}
Furthermore, we can choose the $\Vec{b}_i$ so that any subset of $n$ vectors chosen from the set $\{\Vec{b}_1, \dots, \Vec{b}_{m+1}, (c-f_0),\dots,(c-f_n)\}$ are linearly independent.
\end{prop}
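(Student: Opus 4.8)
The plan is to reduce the geometric claim to a linear‑algebra question about the vectors $\Vec{b}_i$ alone, solve that with a Vandermonde‑adapted choice, and then upgrade to the genericity clause by a perturbation argument. \textbf{Setting up the intersection:} exactly as in the proof of the $\cup_0$ proposition, I would first write the linear equalities and inequalities cutting out $P_K\cap\Tilde P_L(\epsilon_1,\dots,\epsilon_{m+1})$ inside $\Delta^n$ for fixed parameters, using Eq(\ref{dualCellEqns1}), Eq(\ref{dualCellsShifted}) and $\Tilde x_i=x_i+\sum_c\epsilon_c b_{ci}$. Writing $A=K\setminus L$, $B=L\setminus K$, $C=K\cap L$ and $\mathrm{sh}_i=\sum_c\epsilon_c b_{ci}$, the equalities of $P_K$ ($x$ constant on $K$) and of $\Tilde P_L$ ($\Tilde x$ constant on $L$) together pin every $x_i$ with $i\in K\cup L$ to one value $X$ plus explicit $\epsilon$‑linear corrections, and split the residual conditions into: (i) the $\mathrm{sh}_i$ must agree over $i\in C$, a system of $|C|-1$ homogeneous linear equations in $(\epsilon_1,\dots,\epsilon_{m+1})$; (ii) an ordering $\mathrm{sh}_a\le\mathrm{sh}_c\le\mathrm{sh}_b$ for $a\in A$, $b\in B$, $c\in C$, a polyhedral cone in $\epsilon$‑space; (iii) inequalities between $K\cup L$‑indices and the complementary ones, which degenerate in the limit $\epsilon\to 0$ to the defining relations of $P_{K\cup L}$. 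One checks that for $|C|\ge 2$ the set $P_K\cap\Tilde P_L(\epsilon)$ is nonempty only if $\epsilon$ satisfies (i), in which case it is $(n+m-p-q)$‑dimensional, so $P_K\cap\Tilde P^{\mathrm{thick}}_L(\epsilon_{m+1};\varepsilon)$ is the union of such cells over the points of the thickening box lying in the solution locus of (i).

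\textbf{Part 1 and the reduction.} For any $\epsilon_{m+1}>0$ and any $\epsilon'$ in the box, $P_K\cap\Tilde P_L(\epsilon',\epsilon_{m+1})$ is contained in $P_{K\cup L}$ up to $O(\epsilon)$ shifts, so the double limit of Eq(\ref{thickenedCellIntersection}) always lies in the cell $P_{K\cup L}$, of dimension $n-r$. If $r>p+q-m$ this is already $<n+m-p-q$, which settles that branch of Part 1 for free. If $r<p+q-m$ then $|C|>m+1$, so (i) is an overdetermined homogeneous system in the $m$ free parameters $\epsilon_1,\dots,\epsilon_m$ with $\epsilon_{m+1}\ne 0$; for $\Vec{b}_i$ in general position this has no solution, so the limit is empty. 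What remains is $r=p+q-m$, where $|C|=m+1$ and (i) has a unique solution $\epsilon^{\ast}=\epsilon_{m+1}\,w$, which lies in the box once $\epsilon_{m+1}\to 0^{+}$; the whole statement then reduces to whether $\epsilon^{\ast}$ satisfies the cone inequalities (ii).

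\textbf{The sign analysis (the main obstacle).} Here I would take the $\Vec{b}_i$ from a Vandermonde‑type family, tuned so that for every $(m+1)$‑subset of pivots $j_0<\dots<j_m$ the relevant minors of $(b_{ci})$ are nonzero — this makes (i) nondegenerate and makes $\epsilon^{\ast}$ depend genuinely on the shift direction $\Vec{b}_{m+1}$ — and so that, at $\epsilon^{\ast}$, the sign of $\mathrm{sh}_i(\epsilon^{\ast})-\mathrm{sh}_c(\epsilon^{\ast})$ for a non‑pivot index $i$ is the sign of a Vandermonde‑type determinant depending only on which interval $(j_{k-1},j_k)$ contains $i$, alternating with the parity of $k$. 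The real work is to show that these signs are compatible with all the inequalities of (ii) simultaneously precisely when $K,L$ have the alternating form $K=\{i_0\to j_0,\,j_1\to j_2,\dots\}$, $L=\{j_0\to j_1,\,j_2\to j_3,\dots\}$ recalled before the statement — in which case (ii) holds strictly, the intersection is $P_{K\cup L}$ (Part 2(a)) — while for any other placement of a non‑pivot index that index falls into an interval of the wrong parity, one inequality of (ii) is violated, and the intersection is empty (Part 2(b)). The combinatorial fact that an index between consecutive pivots lies in $K$ or $L$ according to the parity of its interval is exactly the bridge between Eq(\ref{higherCupFormula2}) and the geometry, and checking the determinant signs is where I expect the effort to concentrate.

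\textbf{Genericity.} Every condition I have imposed on the $\Vec{b}_i$ is either the non‑vanishing of an explicit minor or the membership of $\epsilon^{\ast}$ in a prescribed open chamber, and there are only finitely many of them as $(K,L)$ ranges over all pairs of faces of $\Delta^n$; the Vandermonde‑type choice sits in this open set. Since requiring that every $n$‑element subset of $\{\Vec{b}_1,\dots,\Vec{b}_{m+1},(c-f_0),\dots,(c-f_n)\}$ be linearly independent only removes a finite union of proper linear subspaces, an arbitrarily small perturbation of that choice preserves all the earlier open conditions and in addition secures the final assertion.
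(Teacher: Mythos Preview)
Your plan is correct and is essentially the paper's own argument: both reduce the question to a linear system in the flow parameters, pick Vandermonde-type vectors (the paper takes $b_{uv}=(1/(1+v))^u$ explicitly), and read off the alternating sign pattern from the resulting determinant ratios to recover the combinatorics of Eq(\ref{higherCupFormula2}). The only cosmetic differences are that the paper works from the parametric description Eq(\ref{cellParameterizations}) rather than the inequality description Eq(\ref{dualCellEqns1}), first reduces to the case $K\cup L=\{0,\dots,n\}$, and verifies the final linear-independence clause directly for the explicit Vandermonde choice rather than by perturbation; the sign computation you correctly flag as the crux is exactly Eqs(\ref{mainSolZ})--(\ref{mainSolA}) together with the Vandermonde product formula that follows them.
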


We can readily verify Part 1 of the proposition.

\begin{proof}[Proof of Part 1 of Proposition \ref{mainProp}] 
Let us first verify the case that $r > p+q-m$. Note that $\lim_{\varepsilon \to 0} \lim_{\epsilon_{m+1} \to 0^+} P_{K} \cap \Tilde{P}^\text{thick}_{L}$ should be a subset of $P_K \cap P_L = P_{K \cup L}$. This is immediate from the definition of the Cauchy limit of sets, since $\lim_{\varepsilon \to 0} \lim_{\epsilon_{m+1} \to 0^+} \Tilde{P}^\text{thick}_{L} = P_{L}$. So if $r > p+q-m$, then $P_{K \cup L}$ will be of dimension $n-r < n+m-p-q$.

Now, if $r < p+q-m$, then we'll show that each $P_{K} \cap \Tilde{P}^\text{thick}_{L}$ is empty for $\epsilon_{m+1} \neq 0$, so there can't be any intersection points at all. For this second case, we need the property that any subset of $n$ vectors chosen from the set $\{\Vec{b}_1, \dots, \Vec{b}_{m+1}, (c-f_0),\dots,(c-f_n)\}$ are linearly independent. Let us write $\Tilde{P}^\text{thick}_{L}(\epsilon_{m+1})$ to indicate that this hyperplane is a function of $\epsilon_{m+1}$. $P_{K}$ is a subset of an $(n-p)$-dimensional plane $Q_1$ in $\Delta^n$ and $\Tilde{P}^\text{thick}_{L}(\epsilon_{m+1})$ is a subset of an $(n+m-q)$-dimensional plane $Q_2(\epsilon_{m+1})$ with $Q_2(0)$ containing $P_L$. 

Since $r < p+q-m$, then $Q_1$ and $Q_2(0)$ will share an $n-r > n+m-p-q$ dimensional subspace, consisting of the points of the plane containing $P_{K \cup L}$. However, we claim that for any $\epsilon_{m+1} \neq 0$, $Q_1 \cap Q_2(\epsilon_{m+1})$ is empty, which would then imply that $P_{K} \cap \Tilde{P}^\text{thick}_{L}$ is empty. Note that $Q_1 = \{c + t_1(f_{\hat{k}_1} - c) + \dots t_{n-p} (f_{\hat{k}_{n-p}}-c) | t_i \in \R \}$ and $Q_2 = \{(c + \epsilon_{m+1} b_{m+1}) + s_1(f_{\hat{l}_1} - c) + \dots s_{n-q} (f_{\hat{l}_{n-q}} - c) + s_{n-q+1} b_1 + \dots + s_{n-q+m} b_m | s_i \in \R\}$, where $\{\hat{k}_1,\dots,\hat{k}_{n-p}\} = \{1,\dots,n\} \textbackslash K$ and $\{\hat{\ell}_1,\dots,\hat{\ell}_{n-q}\} = \{1,\dots,n\} \textbackslash L$. We'll have that of the $(n-p)+(n-q+m)$ vectors in $\{f_{\hat{k}_1} - c, \dots, f_{\hat{k}_{n-p}} - c\} \cup \{f_{\hat{\ell}_1} - c, \dots, f_{\hat{\ell}_{n-q}} - c\} \cup \{b_1,\dots,b_m\}$, $(n-r)$ of these are repeated. So, there are $(n+m+r-p-q)$ unique vectors, which we may call $\{w_1,\dots,w_{n+m+r-p-q}\}$. Finding where $Q_1$ and $Q_2(\epsilon_{m+1})$ intersect amounts to solving the equation

$$\epsilon_{m+1} b_{m+1} + s_1 w_1 + \dots + s_{n+m+r-p-q} w_{n+m+r-p-q} = 0$$

But, since $r < p+q-m$, we'll have $n+m+r-p-q < n$. And since $b_{m+1}$ is not contained in $\{w_i\}$, this would imply that $\{b_{m+1},w_1,\dots,w_{n+m+r-p-q}\}$ (of size $\le n$) are linearly dependent. But, since $\epsilon_{m+1} \neq 0$, solving these equations contradicts the fact that we chose the $b$ so that any subset of $n$ of the $\{\Vec{b}_1, \dots, \Vec{b}_{m+1}, (c-f_0),\dots,(c-f_n)\}$ are linearly independent. So $Q_1 \cap Q_2 (\epsilon_{m+1}) = 0$, meaning that $\lim_{\varepsilon \to 0} \lim_{\epsilon_{m+1} \to 0^+} P_{K} \cap \Tilde{P}^\text{thick}_{L}$ is empty.
\end{proof}

To verify Part 2 in the case where $r=p+q-m$ we need to do some more work and then actually construct vectors $\Vec{b}$ with the desired properties. But, let us observe that we'll only have to worry about the case when $r=p+q-m=n$, i.e. when $K \cup L = \{0,\dots,n\}$. This is because if $K \cup L = \{i_0,\dots,i_r\}$ with $\{\ihat_1,\dots,\ihat_{n-r}\} = \{0,\dots,n\} \textbackslash \{i_0,\dots,i_r\}$, then we can restict to the subsimplex $\Delta' = \Delta_{\{i_0,\dots,i_r\}} = \Delta^n \cap \{(x_0,\dots,x_n) | x_{\ihat_1} = \dots = x_{\ihat_{n-r}} = 0\}$ and consider the intersection question on that subsimplex. We can similarly define the dual cells associated to and the $\Delta'$ and consider how the Morse Flows and thickenings act on those cells. We can analyze this by defining the center, $c'$, and the centers of the faces, $f'_j$, of $\Delta'$ and explicitly writing the cell decompositions in terms of these variables. If we do this, it will be immediate that the restriction to $\Delta'$ of the cells' intersections limit to the center $c'$ iff they limit to $P_{K \cup L}$ thoughout $\Delta$. This is because the shifted cells are all parallel to the original cells, so if the intersection on that boundary cell is nonempty, then the intersection throughout $\Delta$ will be a either be a shifted version of $P_{K \cup L}$ that limits to $P_{K \cup L}$. If it is empty, then it won't contain $c'$ and will be a lower dimensional cell.

Also note that while we will explicitly construct the fields inside each simplex and the fields won't necessarily match on the simplices' boundaries. But, we expect that the observations of Section \ref{flowedSimplices} will also apply to these constructions allowing us to define the vector fields continuously on a simplicical complex, so we wouldn't have to worry about any additional intersections that come from these boundary mismatches.

\subsection{Proof of Part 2 of Main Propostion} \label{proofOfMainPropCup_m}
Now, let us set-up our main calculation for the case of $K \cup L = \{0,\dots,n\}$. For the rest of this section, we will use the variables $i,j$ to label the cells and $k$ to label coordinates, and they won't be related to our previous usages. 

Recall that we want to find the intersection points of the cell 

\begin{equation} \label{mainProofCellEqn1}
{P}_{\{i_0 \dots i_p\}} = \Delta^n \cap \{c + (f_{\ihat_1} - c)s_1 + \dots + (f_{\ihat_{n-p}} - c)s_{n-p} | s_1,\dots,s_{n-p} \ge 0 \}
\end{equation}

with the shifted cell 

\begin{equation} \label{mainProofCellEqn2}
\begin{split}
\Tilde{P}^\text{thick}_{\{j_0 \dots j_{n+m-p}\}}(\epsilon_{m+1};\varepsilon) = \Delta^n \cap \{& \epsilon_1 \Vec{b}_1 + \cdots + \epsilon_{m+1} \Vec{b}_{m+1} - \frac{1}{n+1}(\epsilon_1 B_1 + \cdots \epsilon_{m+1}B_{m+1}) \cdot \Vec{b}_0 \\
& + c + (f_{\jhat_1} - c)t_1 + \dots + (f_{\jhat_{n-p}} - c)t_{p-m} | t_1,\dots,t_{p-m} \ge 0 , -\varepsilon \le \epsilon_1, \dots, \epsilon_{m} \le \varepsilon\}
\end{split}
\end{equation}

where we define $\{\ihat_1,\dots,\ihat_{n-p}\} = \{0,\dots,n\} \textbackslash \{i_0,\dots,i_p\} $ and $\{\jhat_1,\dots,\jhat_{p-m}\} = \{0,\dots,n\} \textbackslash \{j_0,\dots,j_{n+m-p}\}$.

So, we want to solve the equations, 

\begin{equation}
\begin{split}
c + (f_{\ihat_1} - c)s_{\ihat_1} + \dots + (f_{\ihat_{n-p}} - c)s_{\ihat_{n-p}} = & \epsilon_1 \Vec{b}_1 + \cdots + \epsilon_{m+1} \Vec{b}_{m+1} - \frac{1}{n+1}(\epsilon_1 B_1 + \cdots \epsilon_{m+1}B_{m+1}) \cdot \Vec{b}_0 \\ 
& + c + (f_{\jhat_1} - c)t_{\jhat_1} + \dots + (f_{\jhat_{p-m}} - c)t_{\jhat_1}, \quad \text{or} \\
c(1-s_{\ihat_1} - \dots - s_{\ihat_{n-p}}) + f_{\ihat_1} s_{\ihat_1} + \dots + f_{\ihat_{n-p}} s_{\ihat_{n-p}} = & \epsilon_1 \Vec{b}_1 + \cdots + \epsilon_{m+1} \Vec{b}_{m+1} - \frac{1}{n+1}(\epsilon_1 B_1 + \cdots \epsilon_{m+1}B_{m+1}) \cdot \Vec{b}_0 \\ 
& + c(1-t_{\jhat_1} - \dots - t_{\jhat_{p-m}}) + f_{\jhat_1} t_{\jhat_1} + \dots +  f_{\jhat_{p-m}} t_{\jhat_{p-m}}
\end{split}
\end{equation}

We have $n$ equations for $x_1,\dots,x_n$ (the $x_0$ equation is redundant since $x_0 + \dots + x_n = 1$). And, we have $n$ variables $\{s_{\ihat_1},\dots,s_{\ihat_{n-p}},t_{\jhat_1},\dots,t_{\jhat_{p-m}},\epsilon_1,\dots,\epsilon_m\}$ to solve for. It'll be convenient to change variables, and instead solve for $\{S_{\ihat_1},\dots,S_{\ihat_{n-p}},T_{\jhat_1},\dots,T_{\jhat_{p-m}},A_1,\dots,A_m\}$, defined as 

\begin{equation}
\begin{split}
s_{i}        &= \epsilon_{m+1} S_i, \text{ for } i \in \{\ihat_1,\dots,\ihat_{n-p}\} \\
t_{j}        &= \epsilon_{m+1} T_j, \text{ for } j \in \{\jhat_1,\dots,\jhat_{p-m}\} \\
\epsilon_{i} &= \epsilon_{m+1} A_i, \text{ for } i \in \{1,\dots,m\}
\end{split}
\end{equation}

When we expand out each equation for $x_k$, $k=1,\dots,n$, we get that

\begin{equation}
\begin{split}
&\frac{1}{n}\big(S_{\ihat_1}\delta_{k \neq \ihat_1} + \dots + S_{\ihat_{n-p}}\delta_{k \neq \ihat_{n-p}} \big) - \frac{1}{n+1}\big(S_{\ihat_1} + \dots + S_{\ihat_{n-p}}\big)\\
 = &\frac{1}{n+1}(B_1 A_1 + \dots + B_m A_m + B_{m+1}) - b_{1 k}A_1 - \dots - b_{m k} A_m - b_{m+1, k}  \\
& +\frac{1}{n}\big(T_{\jhat_1}\delta_{k \neq \jhat_1} + \dots + T_{\jhat_{p-m}}\delta_{k \neq \jhat_{p-m}} \big) - \frac{1}{n+1}\big(T_{\jhat_0} + \dots + T_{\jhat_{p-m}} \big)
\end{split}
\end{equation}

We can then multiply by $n(n+1)$ and do some rearranging to give the equations

\begin{equation}
\begin{split}
&\big(S_{\ihat_1} + \dots + S_{\ihat_{n-p}}\big) - (n+1)S_k \delta_{k \in \{\ihat\}} \\
&= \big(T_{\jhat_1} + \dots + T_{\jhat_{p-m}}\big) - (n+1)T_k \delta_{k \in \{\jhat\}} \\
&\quad - n(n+1)\big(b_{1 k} A_1 + \dots + b_{m k} A_m + b_{m+1,k} \big) + n(B_1 A_1 + \dots + B_m A_m + B_{m+1})
\end{split}
\end{equation}

where $\delta_{k \in \{\ihat\}}$ is $1$ if $k \in \{\ihat_1,\dots,\ihat_{n-p}\}$ and $0$ otherwise, and similarly for $\delta_{k \in \{\jhat\}}$. We are also abusing notation above, since we only defined $S_k$ for $k \in \{\ihat_1,\dots,\ihat_{n-p}\}$ in the first place. But, this is inconsequential since the term would vanish anyways for $k \notin \{\ihat_1,\dots,\ihat_{n-p}\}$. 

We will find it convenient to cast these equations in a more symmetric form by a change of variables. First, let us define the sets:

\begin{equation}
\begin{split}
\{\lambda_0,\dots,\lambda_m\} &= \{i_0,\dots,i_p\} \cap \{j_0,\dots,j_{n+m-p}\}, \text{ and } \\
\{\hat{\lambda}_1,\dots,\hat{\lambda}_{n-m}\} &= \{0,\dots,n\} \textbackslash \{\lambda_0,\dots,\lambda_{m}\}
\end{split}
\end{equation}

Note that $\{\hat{\lambda}_1,\dots,\hat{\lambda}_{n-m}\} = \{\ihat_1,\dots,\ihat_{n-p}\} \sqcup \{\jhat_1,\dots,\jhat_{p-m}\}$. And, let us redefine the variables for $k \in \{\hat{\lambda}_1,\dots,\hat{\lambda}_{n-m}\}$,

\begin{equation}
Z_k = 
\begin{cases}
S_k  &\text{ if } k \in \{\ihat_1,\dots,\ihat_{n-p}\} \\
-T_k &\text{ if } k \in \{\jhat_1,\dots,\jhat_{p-m}\}
\end{cases}
\end{equation}

Then, we can rewrite our equations in their final form as:

\begin{equation} \label{mainProofFinalForm}
\begin{split}
&(Z_{\hat{\lambda}_1} + \dots + Z_{\hat{\lambda}_{n-m}}) - n(B_1 A_1 + \dots + B_m A_m + B_{m+1}) \\
&= (n+1) Z_k \delta_{k \in \{\hat{\lambda}\}} - n(n+1)\big(b_{1 k} A_1 + \dots + b_{m k} A_m + b_{m+1,k} \big)
\end{split}
\end{equation}

where $\delta_{k \in \{\hat{\lambda}\}}$ is 1 if $k \in \{\hat{\lambda}\}$ and 0 otherwise. While these may again seem tricky to solve, some computer algebra experimentation shows that they have an elegant solution in terms of the $b_i$. Namely, the solutions are:

\begin{equation} \label{mainSolZ}
\begin{split}
Z_{\hat{\lambda}} &= n \frac{\det
\begin{pmatrix}
1      & b_{1 \lambda_0}     & \cdots & b_{m+1,\lambda_0}     \\
\vdots &  \vdots             &        & \vdots                \\
1      & b_{1 \lambda_m}     & \cdots & b_{m+1,\lambda_m}     \\
1      & b_{1 \hat{\lambda}} & \cdots & b_{m+1,\hat{\lambda}} \\
\end{pmatrix}
}{\det
\begin{pmatrix}
1      & b_{1 \lambda_0} & \cdots & b_{m \lambda_0} \\
\vdots & \vdots          &        & \vdots          \\
1      & b_{1 \lambda_m} & \cdots & b_{m \lambda_m} \\
\end{pmatrix}
} \\ 
&\text{ where } \hat{\lambda} \in \{\hat{\lambda}_1, \dots, \hat{\lambda}_{n-m} \}
\end{split}
\end{equation}

\begin{equation}\label{mainSolA}
\begin{split}
A_{\ell} &= (-1)^{m-\ell+1}\frac{\det
\begin{pmatrix}
1      & b_{\Tilde{\imath}_1 \lambda_0} & b_{\Tilde{\imath}_2 \lambda_0} & \cdots & b_{\Tilde{\imath}_m \lambda_0} \\
1      & b_{\Tilde{\imath}_1 \lambda_1} & b_{\Tilde{\imath}_2 \lambda_1} & \cdots & b_{\Tilde{\imath}_m \lambda_1} \\
\vdots &            \vdots              &        \vdots                  &        &        \vdots                  \\
1      & b_{\Tilde{\imath}_1 \lambda_m} & b_{\Tilde{\imath}_2 \lambda_1} & \cdots & b_{\Tilde{\imath}_m \lambda_m} \\
\end{pmatrix}
}{\det
\begin{pmatrix}
1      & b_{1 \lambda_0} & b_{2 \lambda_0} & \cdots & b_{m \lambda_0} \\
1      & b_{1 \lambda_1} & b_{2 \lambda_1} & \cdots & b_{m \lambda_1} \\
\vdots &    \vdots       &   \vdots          &        &   \vdots          \\
1      & b_{1 \lambda_m} & b_{2 \lambda_1} & \cdots & b_{m \lambda_m} \\
\end{pmatrix}
}\\ 
&\text{ where } \ell \in \{1,\dots,m\} \text{ and } \{\Tilde{\imath}_1,\dots,\Tilde{\imath}_m\} = \{1,\dots,m+1\} \textbackslash \{ \ell \}
\end{split}
\end{equation}

We prove that these formulas solve Eq(\ref{mainProofFinalForm}) in Appendix \ref{proofOfVandermonde}. But for now, let's explore their consequences. We only care about the solutions to the $Z_{\hat{\lambda}}$. Recall that since we were choosing $\epsilon_{m+1} \to 0^+$ we'll have that $\epsilon_{m+1} > 0$. In terms of our original variables, we want each of the $s_{\ihat} > 0$ and each $t_{\ihat} > 0$. So, since $s_{\ihat} = \epsilon_{m+1} S_{\ihat}$ and $t_{\ihat} = \epsilon_{m+1} T_{\ihat}$, we will want to impose that each $S_{\ihat} > 0$ and each $T_{\jhat} > 0$. This translates to saying that we want to find solutions when $Z_{\hat{\lambda}} > 0$ if $\hat{\lambda} \in \{ \ihat\}$ and $Z_{\hat{\lambda}} < 0$ if $\hat{\lambda} \in \{ \jhat \}$. 

Now is when we pick our matrix $b_{uv}$. A nice choice to connect to the higher cup formula will be:

\begin{equation} \label{higherCupSolnMatrix}
b_{uv} = (\frac{1}{1+v})^{u}
\end{equation}

Given this choice, we'll have that $Z_{\hat{\lambda}}$ can be written in terms of the ratio of Vandermonde determinants:

\begin{equation}
\begin{split}
Z_{\hat{\lambda}} &= n \frac{\det
\begin{pmatrix}
1      & \frac{1}{1+\lambda_0}     & \cdots & (\frac{1}{1+\lambda_0})^{m+1}    \\
\vdots &              \vdots       &        &  \vdots                           \\
1      & \frac{1}{1+\lambda_m}     & \cdots & (\frac{1}{1+\lambda_m})^{m+1}    \\
1      & \frac{1}{1+\hat{\lambda}} & \cdots & (\frac{1}{1+\hat{\lambda}})^{m+1}\\
\end{pmatrix}
}{\det
\begin{pmatrix}
1      & \frac{1}{1+\lambda_0} & \cdots & (\frac{1}{1+\lambda_0})^m \\
\vdots &         \vdots        &        &  \vdots       \\
1      & \frac{1}{1+\lambda_m} & \cdots & (\frac{1}{1+\lambda_m})^m \\
\end{pmatrix}
}  \\
&=  n (\frac{1}{1+\hat{\lambda}}-\frac{1}{1+\lambda_0}) \cdots (\frac{1}{1+\hat{\lambda}}-\frac{1}{1+\lambda_m})
\end{split}
\end{equation}

Note that each of the factors $(\frac{1}{1+\hat{\lambda}}-\frac{1}{1+\lambda_k})$ is positive iff $\lambda_k > \hat{\lambda}$. So, this implies that if $\hat{\lambda} < \lambda_0$, then $Z_{\hat{\lambda}} > 0$. And in general, if $\lambda_k < \hat{\lambda} < \lambda_{k+1}$, then $Z_{\hat{\lambda}} < 0$ iff $k$ is even and $Z_{\hat{\lambda}} > 0$ iff $k$ is odd. But this is exactly the condition that we wanted to show to relate this to the higher cup product formula! 

More specifically, for valid solutions to the intersection equations where $S_{\ihat} > 0$ and $T_{\jhat} > 0$, we'll need that $\{i_0,\dots,i_p\} = \{0 \to \lambda_0,\lambda_1 \to \lambda_2, \dots \}$ and $\{j_0,\dots,j_p\} = \{\lambda_0 \to \lambda_1,\lambda_2 \to \lambda_3,\dots\}$ so that each $Z_{\ihat} > 0$ and each $Z_{\jhat} < 0$. This shows that the only cells with solutions to the intersection equations are exactly the pairs that appear in the higher cup product formulas.

And, it is straightforward to check that any $n$ of the $\{\Vec{b}_1,\dots,\Vec{b}_{m+1}, c-f_0,\dots,c-f_{n}\}$ are linearly independent.

We also note that there are many related choices of the $b_{uv}$ that reproduce the higher cup products. Really, choosing 

$$b_{uv} = g(v)^u$$

for any positive function $g(v) > 0$ satisfiying $g(v) > g(w)$ if $v < w$ works, and the same Vandermonde argument applies. We also note that the solutions for the $A$ are certain Schur polynomials in the $g(v)$. The signs of the $A$ can thus be determined using known formulas for Schur polynomials: so we can determine which sides of the thickened cell the intersection happens.

\section*{\LARGE{\underline{Interpreting the GWGK Grassmann Integral}}} 
\addcontentsline{toc}{section}{\LARGE{\underline{Interpreting the GWGK Grassmann Integral}}}

Now, let's discuss how this geometric viewpoint of the higher cup product can be used to give in general dimensions a geometric interpretation of the GWGK integral as formulated in \cite{GaiottoKapustin} for triangulations of $Spin$ manifolds, and extended by Kobayashi \cite{KobayashiPin} to non-orientable $Pin^-$ manifolds. Apart from the conceptual interpretation will give us two practical consequences would be helpful in doing computations. First, we will be able to give equivalent expressions of the Grassmann integrals of \cite{GaiottoKapustin, KobayashiPin} without actually using Grassmann variables. Next, we will be able to formulate the Grassmann integral on any branched triangulation of a manifold, whereas in, \cite{GaiottoKapustin, KobayashiPin}, only the cases of a barycentric subdivision were considered, which will have many more cells than a typical triangulation.

In two dimensions, the geometric meaning of the Grassmann integral was explained in Appendix A of \cite{GaiottoKapustin}. We also note that entirely analogous ideas of considering combinatorial $Spin$ structures in two dimensions were developed in \cite{CimasoniReshetikhin,CimasoniNonorientable} in the context of solving the dimer model, a statistical mechanics problem whose solution can be phrased in terms of Grassmann integration.

\section{Background and Properties of the GWGK Grassmann integral} \label{backgroundPropertiesOfGuWenGrassmannIntegral}
Let's discuss some background material and some formal properties of the Grassmann integral that we will want to reproduce. First, we'll need to start out with some background material on how geometric notions like $Spin$ and $Pin$ structures may be encoded on a triangulation. After this, we'll recall the formal properties of the Grassmann integral that we'll want to reproduce with geometric notions. We won't give its detailed definition on a barycentrically subdivided triangulation here and refer the reader to \cite{GaiottoKapustin, KobayashiPin}. But we won't need its definition to proceed with out discussion.

\subsection{Spin/Pin structures and $w_1, w_2$, $w_1^2$ on a triangulated manifold}

Let $E \to M$ by a vector bundle over $M$ of rank $m$. We'll denote by $w_i(E) \in H^i(M,\Z_2)$ the $i^{th}$ Stiefel-Whitney class of $E$ over $M$. When $E$ is the tangent bundle $TM$, we'll often refer to $w_i(TM)$ as $w_i$ and as the Stiefel-Whitney classes of $M$. The way we'll choose to think about the Stiefel-Whitney classes is via their obstruction-theoretic definitions as follows. Choose a frame of $(m-i+1)$ `generic' sections of $E$. The condition of being generic means that they are linearly independent almost everywhere, and the locus of points on $M$ where they are linearly dependent will form a closed codimension-$i$ submanifold of $M$. This locus of points will be Poincaré dual to some cohomology class $w_i(E) \in H^i(M,\Z_2)$. This obstruction theoretic definition will be useful for us because it will help us make use of the vector fields we defined earlier in this note. 

Now, we'll want to figure out how to represent $w_1,w_2$ on a simplicial complex. We'll note first that the Poincaré duals of $w_p$ will be more naturally defined as chains living on the simplices themselves, i.e. as elments of $C_{n-p}(M,\Z_2) = C^p(M^\vee,\Z_2)$. This is in contrast to the simplicial cochains we considered previously, whose duals were naturally defined by chains living on the dual cellulation, $C_{n-p}(M^\vee,\Z_2) = C^p(M^,\Z_2)$. We can see this by noting the simplest case, of $w_1$. 

A canonical definition of $w_1$ is that it's represented by the set of all $(n-1)$-simplices for which the branching structure gives adjacent $n$-simplices opposite local orientations. In particular, this is encoded for us by noting that if a vector field frame reverses orientations between adjacent $n$-simplices, then the orientation must reverse upon passing their shared $(n-1)$-simplex. These $(n-1)$-simplices taken together will be Poincaré dual to the cohomology class $w_1$. A manifold is orientable iff the sum of such $(n-1)$-simplices are the boundary of some collection of $n$-simplices. If not, we may typically choose a simpler representative than this canonical one to do calculations. So while this is a canonical way to define $w_1$, it may practically be helpful to choose a representation with fewer simplices. 

In general, similar constructions for formulas for chain representatives of \textit{any} Stiefel-Whitney class on a branched triangulation are have been known since the 1970's, like in \cite{GoldsteinTurner}. For a barycentrically subdivided triangulation, the answer is particularly simple \cite{HalperinToledo}: that \textit{every} $(n-i)$-simplex is part of $w_i$. This is one reason that the GWGK Integral in \cite{GaiottoKapustin, KobayashiPin} was more readily formulated on a barycentrically subdivided triangulation. We expect that the vector fields constructed above are closely related to these older constructions, but we have not explicitly found the relationship and the formulas of \cite{GoldsteinTurner} may not apply directly to us. 

Soon, we will see a way to use our vector fields to give a canonical definition of $w_2$ on a branched triangulation. But for now, it'll be helpful to talk about $Spin$/$Pin$ structures on a triangulated manifold. A quick review of $Spin$ and $Pin^\pm$ groups are given in Appendix \ref{spinPinAppendix}.

One can generally define a $Spin$ structure on a vector bundle $E \to M$ for which $w_1(E)$ and $w_2(E)$ vanish in cohomology. A $Spin$ structure of $E \to M$ is a cochain $\eta \in C^1(M^\vee,\Z_2)$ with $\delta \eta = w_2(E)$. We say $\eta$ is a $Spin$ structure of $M$ if it's $Spin$ structure of $TM$. Note that a $Spin$ structure of $M$ needs that $M$ is orientable and is only defined if its $w_2$ vanishes.

Let $\det(TM)$ denote the determinant line bundle on $TM$. We can use $\det(TM)$ to characterize $Pin^{\pm}$ structures on $M$ (c.f. \cite{KirbyTaylor}). A $Pin^+$ structure may be defined on orientable or nonorientable manifolds. It has the same obstruction condition as a $Spin$ structure, and can be repesented by a cochain $\eta$ s.t. $\delta \eta = w_2$. Equivalently, it can be thought of as a $Spin$ structure on $TM \oplus 3 \det(TM)$ And, a $Pin^-$ structure may also be defined on orientable or nonorientable manifolds. The obstruction condition is different from the other ones, and is defined by a cochain $\eta$ s.t. $\delta \eta = w_2 + w_1^2$. It can also be thought of as a $Spin$ structure on $TM \oplus \det(TM)$. In general, such structures on a manifold are considered to be equivalent iff they differ by a coboundary. So, $Spin$/$Pin$ strucutres on $M$ are in bijection with $H^1(M,\Z_2)$. 

Another way to think a $Spin$ structure is to consider how we restrict $E \to M$ to the 1-skeleton and the 2-skeleton of $M$. $Spin$ structure on $E$ can be thought of as a frame of $(m-1)$ linearly independent sections of $E$ over the 1-skeleton, which can be extended (generically) over the 2-skeleton of $M$, possibly becoming linearly dependent at some even number of points within each 2-cell \footnote{Note that if $E$ is orientable, then we can put a nonvanishing positive definite metric on $E$, which given our $(m-1)$ linearly independent vectors, define a trivialization of $E$ over the 1-skeleton. An example of a manifold where the $(m-1)$ sections must be linearly dependent at some points inside a 2-cell is the tangent bundle of $S^2$, since any generic vector field will vanish at two points on the sphere.}. 

This is consistent with our obstruction theoretic definition, since it would be impossible to arrange this if every generic set of $(m-1)$ sections vanishes a total of an odd number of times on the 2-skeleton. And, this gives us a hint of how to construct a canonical representative of $w_2(E)$. In particular, suppose we chose some trivialization of $TM$ over the 1-skeleton where a generic extension becomes linearly dependent at some number of points $k$ on some 2-cell, $P$. Then, we'll have that $w_2(E)(P)=0$ if $k$ is even and $w_2(P)=1$ if $k$ is odd. So this gives a chain representative of $w_2$. So if we can always construct some trivialization of $E$ over the 1-skeleton and we know how to compute how many points vanish on each 2-cell, we'll have gotten our representative of $E$. 

We'll see later on that we can construct such a framing this canonically for $E=TM$, which will be a `canonical' chain representative of $w_2(TM)$. Although $w_1$ and $w_2$ have canonical chain representatives which can be expressed solely in terms of the branching structure, $w_1^2$ (as far as we know) does not have such an intrinsic chain-level definition. $w_1^2$ is a `self-intersection' of the orientation reversing wall, which can be defined by perturbing $w_1$ by a generic vector field and seeing the locus where it intersects its perturbed version. So, to define $w_1^2$, we need to specify a vector field to perturb along. The reference \cite{KobayashiPin} encodes this self-intersection in their definition of the Grassmann integral. Similarly in our geometric construction of a $Pin^-$ structure, such a choice will be encoded in the user's choice of a trivialization of $TM \oplus \det(TM)$, which we'll see equivalently encodes this perturbing vector. So given a branching structure and this additional user choice, we can represent $w_1^2$. 

We'll also see how given these framings, we can encode $Spin/Pin^-$ structures as adding `twists' in the background framing, which change the background framing into extending to even-index singularities on each $2$-cell $P$. We'll see that this can only be arranged if $w_2 + w_1^2$ is trivial. In Appendix \ref{combDefinePinPlus}, we give the construction for $Pin^+$ structures.

\subsection{Formal Properties of the GWGK Grassmann integral} \label{formalPropertiesGuWen}
Now, let's recall the formal properties of the GWGK Integral that we'll need to reproduce. In this section, when we denote by $M$ some manifold, we'll implicitly think of $M$ as encoding a triangulated manifold equipped with some branching structure. Formally, the GWKG Integral $\sigma(M,\alpha)$ depends on a branched triangulation, $M$, of some $n$-manifold and some closed cochain $\alpha \in Z^{n-1}(M,\Z_2)$. Note that elements $\alpha \in Z^{n-1}(M,\Z_2)$ are dual to some sum of closed loops on the dual graph. These loops are physically meant to represent worldlines of the fermions in this Euclidean setting.

On an orientable manifold, $\sigma(M,\alpha)$ takes values in $\Z_2 = \{\pm 1\}$. On a nonorientable manifold, we'll have $\sigma(M,\alpha)$ takes values in $\Z_4 = \{\pm 1, \pm i\}$, and $\sigma(M,\alpha) = \pm i$ iff $\int \alpha \cup w_1 = 1$. The definition of $\sigma(M,\alpha)$ depends on the (canonical) chain representative of $w_2$ and the (user-defined) chain representative of $w_1^2$. Given this, the main properties of $\sigma(M,\alpha)$ are:

\begin{enumerate}
    \item Suppose $\lambda \in C^2(M,\Z_2)$ is Poincaré dual to an elementary 2-cell of the dual complex, so that $\delta \lambda$ is dual to the boundary of an elementary cell. Then $\sigma(M,\delta \lambda)$ = $(-1)^{\int (w_2+w_1^2) (\lambda)} = (-1)^{\int_\lambda w_2+w_1^2}$, which is 1 if $(w_2+w_1^2)$ is zero on $\lambda$ and $-1$ if $(w_2+w_1^2)$ is nonzero on $\lambda$.
    \item (quadratic refinement) $\sigma(a)\sigma(b) = (-1)^{\int a \cup_{n-2} b} \sigma(a+b)$ 
\end{enumerate}

These two properties uniquely define $\sigma$ on homologically trivial loops. For homologically nontrivial loops, it is not determined by the above properties. So to compute the Grassmann integral for nontrivial loops, if we have the value of $\sigma$ for some loops that form a representative basis of $H_1(M,\Z_2)$, then we can use the quadratic refinement property to define it for any sum of closed curves on $M$.

Now we should consider how $\sigma$ changes under a re-triangulation or a bordism. Suppose $M_1 \sqcup \bar{M_2} = \partial N$, so that $N$ is some triangulated bordism between $M_1$ and $M_2$. And, suppose that $\alpha \in Z^{n-1}(N,\Z_2)$ is a gauge field that restricts to $\alpha_{1,2}$ on $M_{1,2}$. Then, arguments of \cite{GaiottoKapustin, KobayashiPin} show that 

\begin{equation} \label{guWenBordism}
\sigma(M_1,\alpha_1) = \sigma(M_2,\alpha_2) (-1)^{\int_N Sq^2(\alpha) + (w_2 + w_1^2)(\alpha)}
\end{equation}

A special case is that if the manifold is admits a $Pin^-$ structure and is $Pin^-$ null-bordant, we have the following formula (c.f. \cite{GaiottoKapustin}):

\begin{equation}
\sigma(\delta\lambda,M) = (-1)^{\int_M \lambda \cup_{d-3} \delta \lambda + \lambda \cup_{d-4}\lambda + (w_2+w_1^2) (\lambda)}
\end{equation}

Note that this formula only works if $(w_2 +w_1^2)$ is trivial on $N$, since otherwise shifting $\lambda \to \lambda + \mu$ for some $\mu$ with $\int (w_2 + w_1^2) \cup \mu = 1$ will change the integral by a factor $-1$.

Now, let's comment on why the Grassmann integral is important in the context of spin-TQFT's. It is due to the fact that under a cobordism the Grassmann integral changes by a factor of $(-1)^{\int_N Sq^2(\alpha) + (w_2 + w_1^2)(\alpha)}$, which can be thought of as a `retriangulation anomaly'. We can consider coupling the theory to a $Spin$ or $Pin^-$ structure, depending on whether $w_1 = 0$. This would entail finding some cochain $\eta$ with $\delta \eta = w_2$ or $\delta \eta = w_2 + w_1^2$. Then, the combination $z_{\Pi}(M,\eta,\alpha) := \sigma(M,\alpha)(-1)^{\int \eta \cup \alpha}$ will change by a factor of $(-1)^{\int_N Sq^2(\alpha)}$ under a cobordism. So, coupling to a $Spin$ structure cancels part of this retriangulation anomaly. Note that for $M$ a 2-manifold, $Sq^2$ kills all 1-forms, so the factor $(-1)^{\int_N Sq^2(\alpha)}$ is trivial. This means that for 2-manifolds, $z_{\Pi}(M,\eta,\alpha)$ is invariant under bordisms. 

In fact, in \cite{GaiottoKapustin, KobayashiPin} they show that the sum of $z_{\Pi}(M,\eta,\alpha)$ over all possible loop configurations $\alpha$ can be identified precisely with the Arf invariant, or Arf-Brown-Kervaire invariant of a $Spin$ or $Pin^-$ manifold, which exactly classifies the bordism class of a 2-manifold equipped with a $Spin$/$Pin^-$ structure.

\section{Warm up: Geometric interpretation of $\sigma(M,\alpha)$ in 2D} \label{geometricGuWenIn2D}
Now, let's review the geometric interpretation of the Grassmann integral in two dimensions. This was reviewed in an Appendix of \cite{GaiottoKapustin} for the case of orientable surfaces, but was also known earlier in a slightly different context, in \cite{CimasoniReshetikhin,CimasoniNonorientable}. In particular, the observations and pictures drawn in \cite{CimasoniNonorientable} will be helpful in extending this understanding to the case of nonorientable manifolds, both in two and higher dimensions. 

\subsection{Orientable surfaces and the winding of a vector field}

We will start by focusing on the story for orientable surfaces. First, we will describe a pair of linearly independent vectors along the 1-skeleton. Then, we'll give our definition of $\sigma(M,\alpha)$, which is related to how many times the vector field winds with respect to the tangent vector of the loop. Then, we'll show that our definition of $\sigma(M,\alpha)$ satisfies both formal properties that we care about. Then after this subsection, we'll explain how to modify the picture for the case of nonorientable surfaces.

\subsubsection{Framing of $TM$ along the dual 1-skeleton and its winding along a loop}
Let's describe the frame of vectors we'll use along the dual 1-skeleton. For this purposes in this section, it will suffice to describe them pictorially. First, we can note that on an orientable surface with a branched triangulation, it is possible to consistently label the 2-simplices as either $+$ or $-$. A consistent labeling means that if two of the simplices are adjacent, then their labelings of $\pm$ will be the same iff the local orientations defined by the branching structures match. So, we choose some consistent labeling of the simplices. 

Given such a consistent labeling, the framing along the 1-skeleton can be described as in the Figure(\ref{vectorFieldOn2Simplices}). Away from the center of the 2-simplex, we'll have one vector that runs along the 1-skeleton and the vector `perpendicular' to it will be in an opposite direction of the arrow defining the branching structure. This vector field is related to the flow that we constructed earlier, in Fig(\ref{fig:2_Simplex_flowed}) for the 2-simplex. This is because when we deform the vector fields in the manner depicted close to the center, and one of the vectors will be pointing in the same direction as that flow. Note if there's a globally defined orientation, these vector fields will be consistent with each other when glued together on the boundaries of the adjacent simplices. However, in the nonorientable case, there will be some inconsistencies that occur when the representative of $w_1$ doesn't vanish one the simplices' shared boundary.

\begin{figure}[h!]
  \centering
  \includegraphics[width=\linewidth]{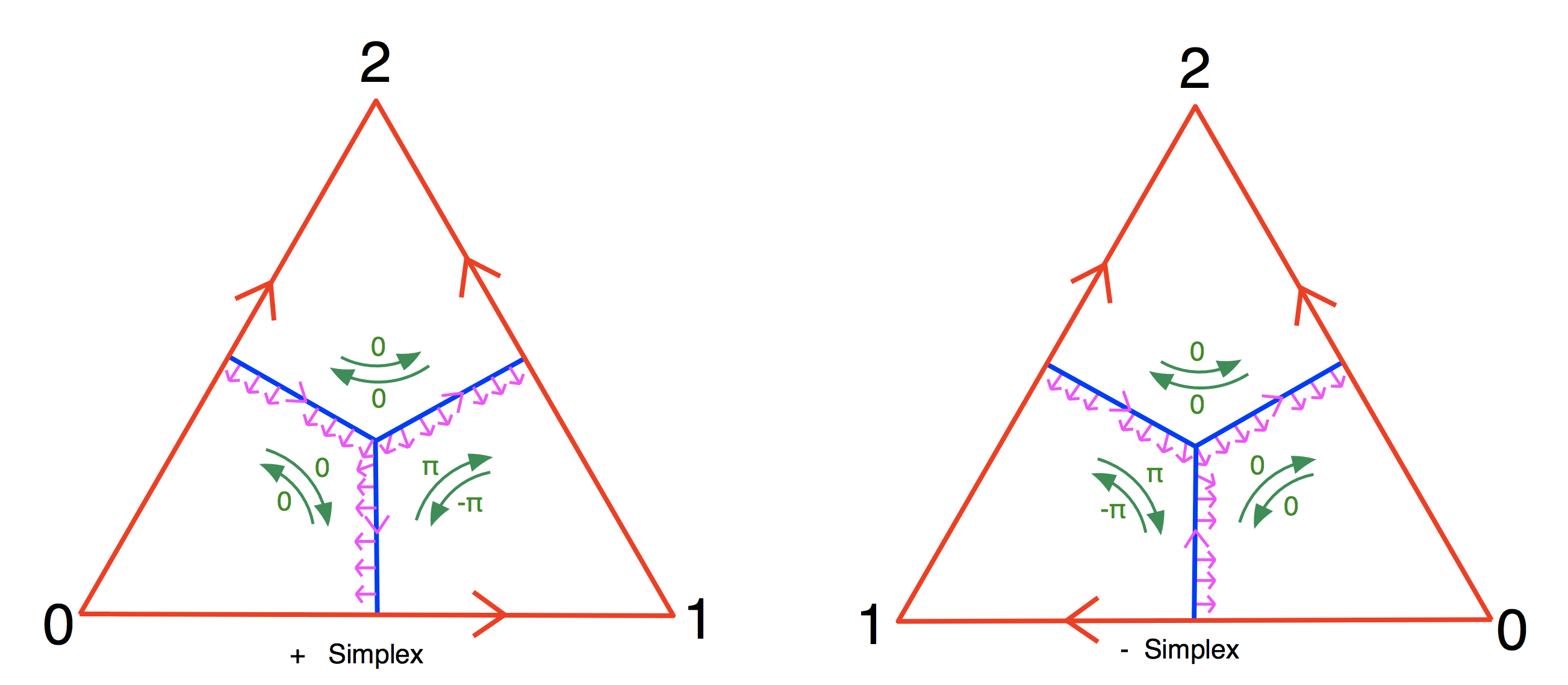}
  \caption{A pair of vector fields (in pink) along the 1-skeleton of a triangulation of a surface, for both positively-oriented (left) and negatively-oriented (right) 2-simplices. In this picture they're drawn `perpendicular' to each other. I.e. one vector field is parallel to the 1-skeleton away from the center and the other vector field is `perpendicular' to the 1-skeleton away from the middle. Note that in the center of the simplex, one of the vector fields is parallel to the flow vector depicted in Fig(\ref{fig:2_Simplex_flowed}). In green, we show the counterclockwise `winding angle' between these vector fields and the tangent vector of a curve that's restricted to the 1-skeleton.}
  \label{vectorFieldOn2Simplices}
\end{figure}

Also, if there's a global orientation means that we can talk about how many times this vector field frame `winds' in a counterclockwise direction with respect to the tangent vector of the loop. In Fig(\ref{vectorFieldOn2Simplices}), we show what these winding angles would look like for orientable manifolds. This winding will be crucial in constructing $\sigma(M,\alpha)$. Note that for the nonorientable case, we will have to be more careful in defining this winding, since `clockwise' and `counterclockwise' won't make sense. It will turn out that the analog of the `winding' can be expressed by a matrix, and these matrices won't necessarily commute.

\subsubsection{Definition of $\sigma(M,\alpha)$ in 2D and its formal properties}
Now, let us define $\sigma(M,\alpha)$ in two dimensions and show that it satisfies the formal properties we listed in Section \ref{formalPropertiesGuWen}. Given some closed cocycle $\alpha \in Z^1(M,\Z_2)$, we can represent it by some collection of curves on the dual 1-skeleton, which we'll denote $C_1,\dots,C_k$. Since the dual 1-skeleton is a trivalent graph, this decomposition into loops is unambiguous. For each curve $C_i$, define the quantity $wind(C_i)$ as the number of times the above vector field winds with respect to the tangent vector. Then the weight $\sigma(M,\alpha)$ will be defined:

\begin{equation}
\sigma(M,\alpha) = \prod_{i=1}^k (-1)^{1+wind(C_i)} = (-1)^{\text{\# of loops}}\prod_{i=1}^k (-1)^{wind(C_i)} 
\end{equation}

It's clear that this is well-defined, since $wind(C)$ will be the same mod 2 if we consider the curve going forwards as opposed to going backwards. Now let's see why this quantity satisfy the formal properties we cared about. First, we should show that a loop $C$ surrounding an elementary dual 2-plaquette, $P$, has a sign of $-1$ if $\int_P w_2 = 1$ and a sign of $-1$ if $\int_P w_2 = 0$. So, we should show

$$(-1)^{\int_P w_2} = \sigma(\alpha_C)$$

where $\alpha_C$ is the cochain representing the elementary plaquette loop $C$. The winding number definition will actually naturally (perhaps tautologically) satisfy this due to the obstruction theoretic definition of $w_2$. Suppose a vector field has winding number $wind(C)$ with respect to the tangent of a simple closed curve $C$. Then (depending on sign conventions) a generic extension of the vector field to the interior, $P$ of $C$ will vanish at $(\pm 1 \pm wind(C))$ points. So, our obstruction theoretic definition tells us that:

$$\int_{P} w_2 = 1 + wind(C) \quad \quad \text{(mod 2)}$$

which matches up with $\sigma(\alpha_C) = (-1)^{1+wind(C)} = (-1)^{\int_P w_2}$ for such elementary plaquette loops.

Next, we should show the quadratic refinement property, i.e. we should show for cochains $\beta$ and $\beta'$ that: 

$$\sigma(\beta)\sigma(\beta') = (-1)^{\int_M \beta \cup \beta'} \sigma(\beta+\beta')$$

So, the Grassmann integral of the sum of two cocycles will be the product of the Grassmann integrals of each summand, times this extra $(-1)^{\text{mod 2 intersection number of } \beta, \beta'}$. The argument for this is due to Johnson \cite{Johnson} who was studying the closely related notion of quadratic forms associated to 2D $Spin$ structures. 

\begin{figure}[h!]
    \centering
    \includegraphics[width=\linewidth]{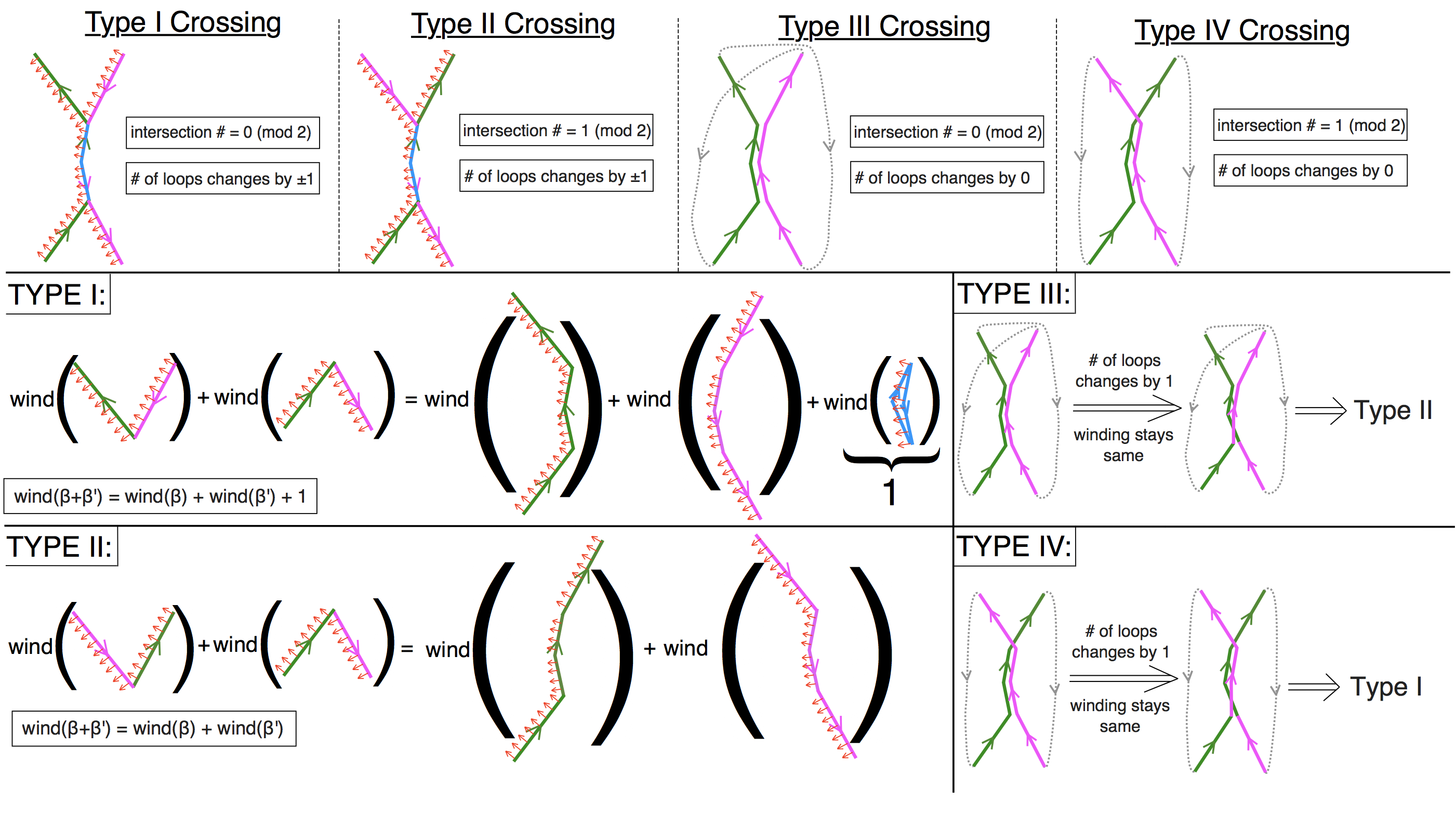}
    \caption{Different cases of shared segments of loops intersecting on a trivalent graph. Sums here are implicitly done modulo 2. The loop from $\beta$ is in green and the loop from $\beta'$ is in pink, the reference vector field is in red, and the intersection region that's shared by both $\beta$ and $\beta'$ is in blue. For the cases of Type I or Type II crossings, combining the loops after discarding the intersection region makes the number of loops change by 1. Type I intersections cause have intersection number zero and cause the winding number to change by 1 (mod 2) after resolving the intersection. Type II crossings have intersection number of one and cause the winding to stay the same. In general when $\beta$ and $\beta'$ share several different segments, we must choose the first resolved intersection to be either a Type I or II crossing. After the first intersection is resolved creating a combined loop, Types III and IV crossings can be resolved. Resolving a Type III or IV crossing doesn't change the number of loops, and is a two-step process where we reconnect the combined loop into two loops, which then allow us to reduce the resolution to either a Type II or I crossing, respectively.}
    \label{intersectionsAndWindings}
\end{figure}

Note that when the loops representing the cocycles never intersect, the formula is immediate, so we only need to consider  what happens when loops from the different cycles intersect each other. In particular, we'll want to visualize what happens to the windings when we combine the loops and discard the pieces that they both share. For these loops living on these kinds of trivalent graphs, loops intersecting will necessarily share some finite segment of edges. And in general, we'll have that the loops may intersect at a collection of more than one different segments of edges. The strategy is to resolve each intersecting segment of edges one at a time. So, we need to show that quadratic refinement holds as we resolve each intersection. We'll summarize the logic here, but refer to Fig(\ref{intersectionsAndWindings}) for a more detailed view.

For the first segment of intersections that is resolved, we have the freedom to change the directions of the curves so that they are directed oppositely to each other on their intersections of segments. The cases we'll need to distinguish are if the loops exit their shared line segments on the \textit{same} side of the shared segments, or on \textit{opposite} sides of the shared segments, which are labeled as Type I and Type II crossings In Fig(\ref{intersectionsAndWindings}). One can check that both cases change the number of loops by $\pm 1$. Type I crossings will contribute $0$ to the mod 2 intersection number, and Type II crossings will contribute $1$ to the mod two intersection number. And, Type I crossings change the total winding number by 1 whereas Type II crossings don't change the total winding number at all. This means that for Type I crossings, $(-1)^{\text{\# of loops}}(-1)^{\text{winding}}$ for the sum $\beta + \beta'$ is \textit{locally} the same as the $\cup$ products for $\beta$ and $\beta'$. Whereas for Type II crossings, $(-1)^{\text{\# of loops}}(-1)^{\text{winding}}$ for $\beta + \beta'$ differs \textit{locally} by a factor of $-1$ from the $\cup$ products for $\beta$ and $\beta'$. So, summing over all intersections, the quantity $\sigma(\beta + \beta') \{\sigma(\beta)\sigma(\beta')\}^{-1}$ will be the number of Type II crossings between $\beta$ and $\beta'$, which is just the mod 2 intersection number of $\beta$ and $\beta'$. This is precisely the statement of quadratic refinement.

If this segment was the only intersection region, then we're done. But now, we want to resolve the rest of the segments of intersections. Resolving the first intersection segment functioned as combining the two curves into one, and this combined curve may intersect itself in many different places. Some of these intersection regions look exactly like Type I or II crossings, for which the same logic applies as the previous paragraph. But there's also the possibility that the combined curve's shared regions are pointing in the same direction as each other, which are the Type III and IV crossings in Fig(\ref{intersectionsAndWindings}). We can resolve these intersections in a two-step process. First, reconnect the edges which turns the combined loop into two loops as in Fig(\ref{intersectionsAndWindings}). Then, for a Type III or IV crossing, after reversing one of these two reconnected loops we'll respectively get Type II and I crossings, which can then be resolved as such. Note that resolving these kinds of intersections ends up \textit{not} changing the number of loops, but the quadratic refinement property does hold after each such resolution.

\subsection{Nonorientable surfaces and `non-commuting' windings on $Pin^-$ surfaces}

Now, we will describe how to define $\sigma(M,\alpha)$ on nonorientable surfaces and see how we can connect it to the geometry of $Pin^-$ structures. This presentation is motivated by the entirely analogous ideas of \cite{CimasoniNonorientable}, who found a way to combinatorially encode the construction of \cite{KirbyTaylor} of $\Z_4$-valued quadratic forms on $Pin^-$ surfaces. Recall that a $Pin^-$ structure on $TM$ can be thought of as a $Spin$ structure on $TM \oplus \det(TM)$. So, we'll have that $\sigma(M,\alpha)$ will be related to some winding with respect to a trivialization of $TM \oplus \det(M)$ over $M$'s 1-skeleton.

First, we'll describe possible framings of $TM \oplus \det(TM)$ along the 1-skeleton and see how different choices of the framing can be related to different choices of the chains representing $w_1^2$. Then we'll define $\sigma(M,\alpha)$ and show how its formal properties match the ones we want.

Recall that the main issue in dealing with nonorientable surfaces is that it's not possible to consistently label 2-simplices as $+$ and $-$ with neighboring simplices having the same labeling iff their orientations locally agree. To deal with this, we'll just choose some labeling of $+$ and $-$ 2-simplices, and there will be some set of 1-simplices representing $w_1$ for which the local orientations don't match with their labeling. 

\subsubsection{Framing of $TM \oplus \det(TM)$ along the dual 1-skeleton} \label{sec:TMdetTMFraming}
Since we are adding an extra direct summand of $\det(TM)$ to the tangent bundle, it will be natural for us to visualize $TM \oplus \det(TM)$ at a point via a 2D plane parallel to the surface and a `third dimension' sticking out perpendicular to the plane. 

We've depicted such a framing for a positively oriented simplex in Fig(\ref{framingOfTM+DetTM}). Inside a 2-simplex, the framing of $TM \oplus \det(TM)$ along the 1-skeleton look similar to the framings in Fig(\ref{vectorFieldOn2Simplices}), except there will be an extra vector pointing in a direction `normal' to the surface, representing the framing of $\det(TM)$, in addition to the two vectors we had before, pointing in the directions along the surface. We'll refer to this vector along $\det(TM)$ as the `orientation vector'. We can give this framing an order by saying that the first (`$x$') vector is the orientation vector, the third (`$z$') vector is the one in $TM$ pointing along the 1-skeleton, and the second (`$y$') vector is the other vector along $TM$ pointing along the 1-skeleton, but transverse to the 1-skeleton.

Similarly, we can define the same kind of framing on a negatively oriented simplex, and as long as two neighboring simplices are not separated by a representative of $w_1$, this framing can be extended in the same way as the orientable case. The fact that $TM \oplus \det(TM)$ is always orientable ensures that this `normal' direction, or `orientation vector' along the dual 1-skeleton is well-defined on the interior of a 2-simplex.

\begin{figure}[h!]
    \centering
    \includegraphics[width=\linewidth]{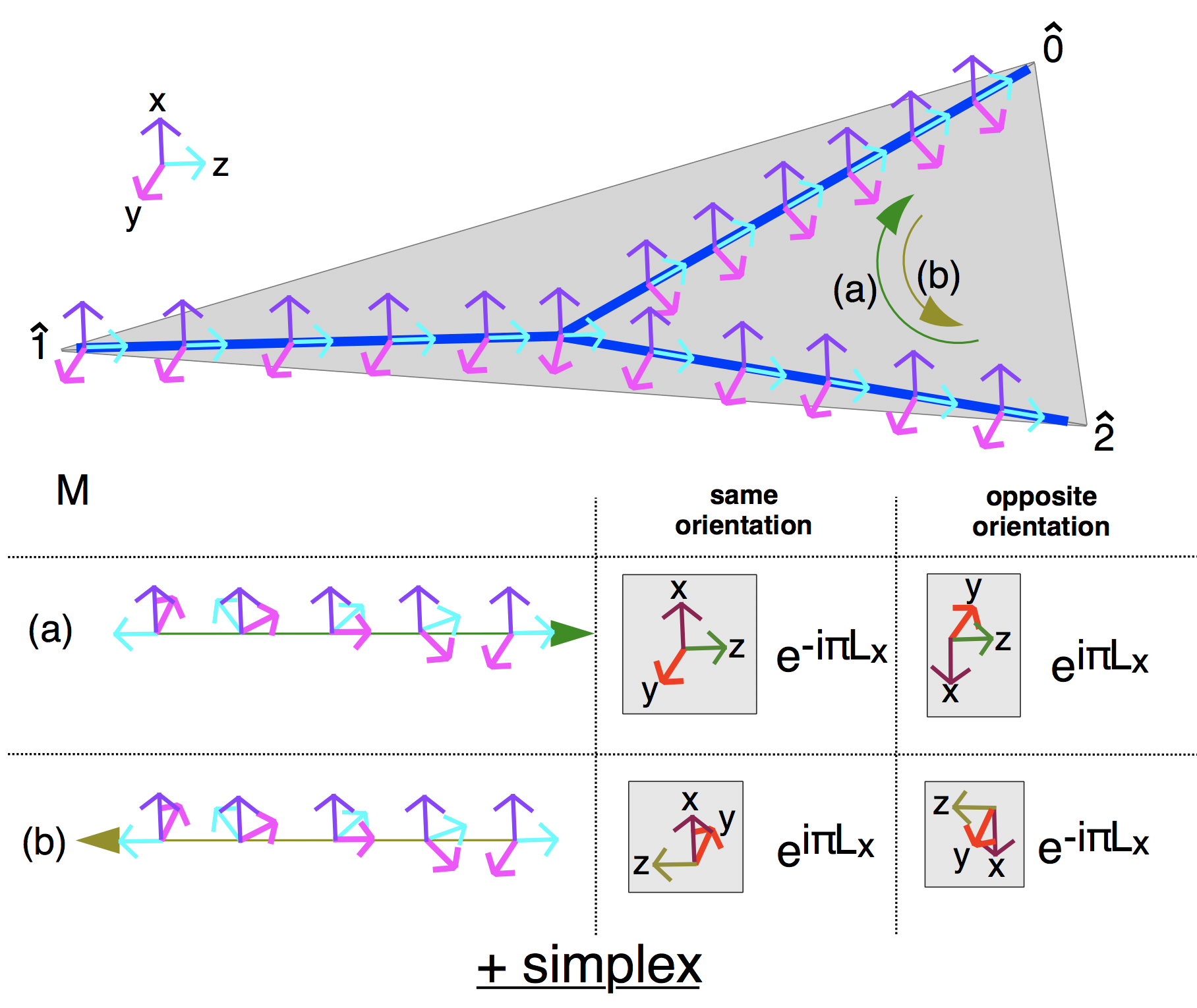}
    \caption{The framing of $TM \oplus \det(TM)$ along the dual 1-skeleton of a positively oriented simplex. The directions $y,z$ represent the coordinates for $TM$ going along the surface and the direction $x$ represents the coordinate for $\det(TM)$ which is depicted as `normal' to the surface. A curve that goes through this simplex can have $TM \oplus \det(TM)$ framed relative to these curves in two different ways as described in the main text, with the curve's `orientation vector' starting in either the same or opposite direction as that of the 2-simplex. ``Same Orientation'' refers to if the two orientation vectors start in the same orientation and ``opposite orientation'' means they start at different orientations. As one traverses along the directed curve, the curve's framing may change with respect to the framing along the 1-skeleton. The only directions in which the framing changes are listed as (a) and (b), as well as the explicit matrix under which the framing changes. Here $e^{\pm i \pi L_x}$ refers to a path in $SO(3)$ parameterized as $e^{\pm i t L_x}$ for $t:0 \to \pi$, whose endpoints for different choices of $\pm$ will be the same in $SO(3)$ but lift to different elements of $SU(2)$.}
    \label{framingOfTM+DetTM}
\end{figure}

Although these assignments can unambiguously determine framings inside each 2-simplex, there's an issue of what happens when the local orientation on $TM$ reverses, i.e. when we cross a 1-simplex where $w_1 \neq 0$. Since an orientation of $TM$ can't be defined everywhere, a trivialization of $TM \oplus \det(TM)$ requires that we rotate $TM$ and $\det(TM)$ into each other near $w_1$, where the orientation of $TM$ reverses. For each potential choice of mismatching framings, we can consider two different ways of extending them to match across $w_1$, as depicted in Fig(\ref{framingAcrossW1}). In particular, we'll only consider the possibilities of rotating into each other the `orientation vector' and the vector going along the dual 1-simplex. When doing this, our two choices to consider amount to our choice of which direction along the dual 1-simplex the orientation vector points as it traverses $w_1$. 

This choice can give us a choice of vector field transverse to the $w_1$ surface along the 1-skeleton as follows (see Fig(\ref{framingAcrossW1})). The orientation vector as it crosses $w_1$ will point to one side of $w_1$. On that side of $w_1$, we consider the background frame's `$z$' vector that usually points along the 1-skeleton. The direction this `$z$' vector points will determine the transverse direction. And, as depicted in Fig(\ref{w1Squared}), this choice of extensions of the framings across $w_1$ will define a representative of $w_1^2$. This is because if two adjacent dual edges have this vector pointing in opposite directions, then in between those two edges, there will be an odd number of self intersections of $w_1$, as defined by some extension of these vectors.

\begin{figure}[h!]
    \centering
    \includegraphics[width=\linewidth]{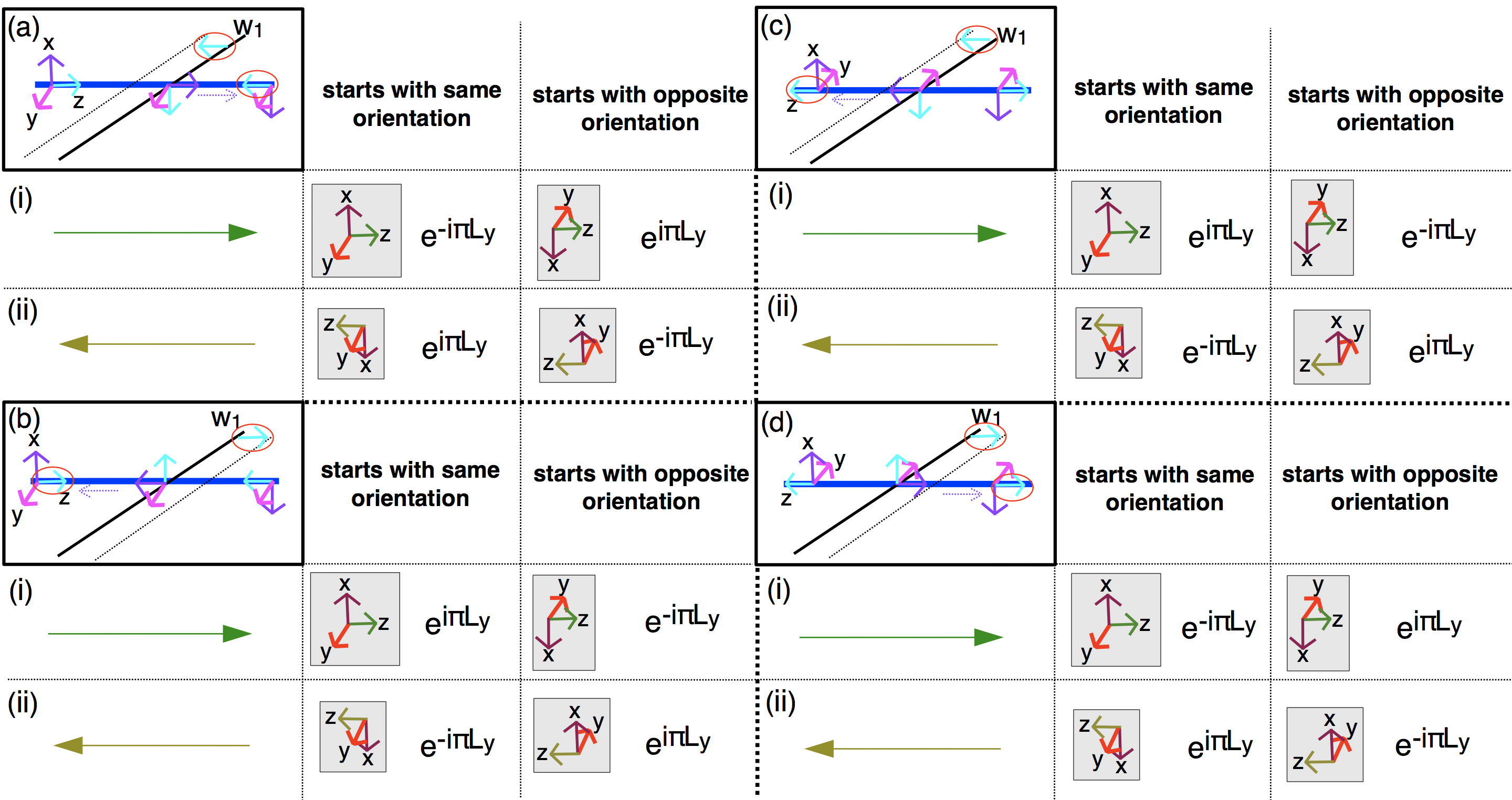}
    \caption{The framing of $TM \oplus \det(TM)$ when crossing a representative of $w_1$. Since an orientation of $TM$ can't be defined everywhere, we need to rotate $\det(TM)$ and $TM$ into each other across $w_1$. The pairs (a,b) and (c,d) are the different choices of extension for framings that match away from $w_1$. These different choices will end up corresponding to different windings with respect to the loop, whose different possibilities are listed in the table. In addition, they correspond to different choices of vectors transverse to $w_1$ at the 1-skeleton, which correspond shifting $w_1$ along a perturbing vector. This is depicted as the shift of the solid black line, representing $w_1$, to the dashed black line, representing its shift along the perturbing vector. The perturbing vector is given by the `$z$' vector along the 1-skeleton, on the side of the $w_1$ surface that the orientation vector points to as it traverses $w_1$, circled in red and pointed to by the purple dashed arrow, which points in the same direction as the orientation vector along $w_1$. }
    \label{framingAcrossW1}
\end{figure}

\begin{figure}[h!]
    \centering
    \includegraphics[width=0.5\linewidth]{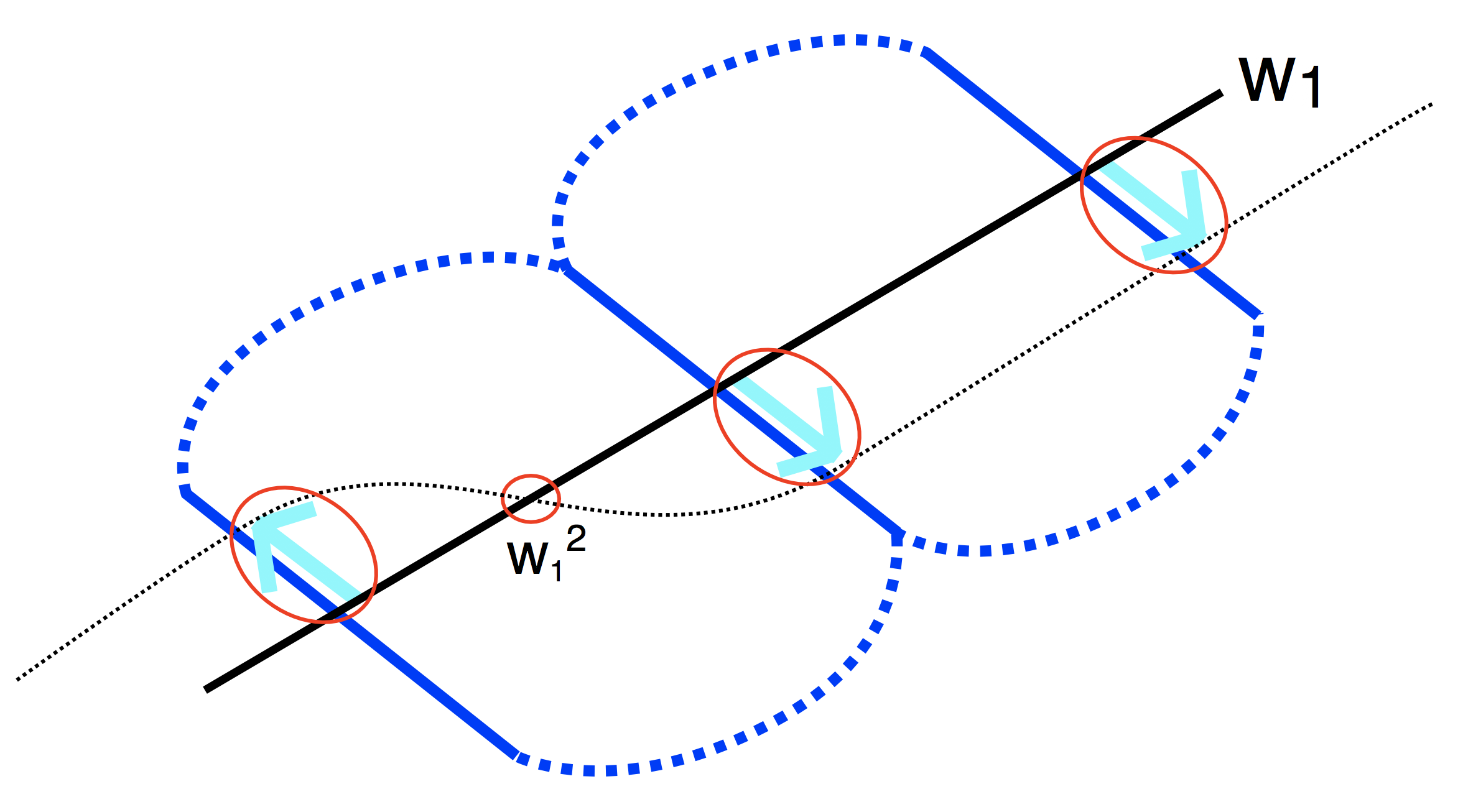}
    \caption{The extension of the framing $TM \oplus \det(TM)$ across $w_1$ defines vectors along the 1-skeleton transverse to $w_1$ as described in Fig(\ref{framingAcrossW1}). Some generic extension of this vector field will allow us to define a vector field along which we can define $w_1^2$, the self-intersection of $w_1$. In particular, $w_1^2$ is nonzero on some dual 2-plaquette iff the vector field points in opposite directions along the two links on the plaquette that intersect $w_1$.}
    \label{w1Squared}
\end{figure}

\subsubsection{Winding matrices around a loop}\label{2DWindingMatrices}

Now that we've defined a background framing of $TM \oplus \det(TM)$ along the 1-skeleton, we should think about how to define the winding matrices, the analog of the orientable case's winding matrices. 

\underline{\textbf{The relative framing around a loop}}

Recall in the orientable case, we compared the winding of the background vector field with respect to the tangent of the curve. For the nonorientable case, we'll be comparing this background framing of $TM \oplus \det(TM)$ to a certain `tangent framing' of $TM \oplus \det(TM)$ along a loop. This tangent framing along the loop is defined as follows. Pick a starting point along the curve on the interior of a 2-simplex. Then, first vector (`$x$') will be identified with the `orientation vector' of the framing, along the $\det(TM)$ part. The tangent vector of the curve will define the third (`$z$') vector of the tangent frame. And, the orientation of $TM \oplus \det(TM)$ will determine the second (`$y$') vector \footnote{Strictly speaking, we'd need to define some positive-definite metric on $TM \oplus \det(TM)$ to do this unambigously.}. 

Given the background framing of $TM \oplus \det(TM)$ and this tangent framing along some given loop $C$, we will want to compare these framings as we go along a loop. Assuming that, with respect to some positive definite metric, these frames are orthonormal, each point along the loop defines some element of $SO(3)$. So, going around the loop will mean that these relative framings define some path in $SO(3)$, i.e. a function $f_C: [0,1] \to SO(3)$. 

The possible changes in relative framing (i.e. changes in $f_C$) for a loop traversing inside a 2-simplex are given in (a,b) of Fig(\ref{framingOfTM+DetTM}) \footnote{We denote be $i L_x=\begin{pmatrix} 0 & 0 & 0 \\ 0 & 0 & -1 \\ 0 & 1 & 0 \end{pmatrix}, i L_y=\begin{pmatrix} 0 & 0 & 1 \\ 0 & 0 & 0 \\ -1 & 0 & 0 \end{pmatrix}, i L_z=\begin{pmatrix} 0 & -1 & 0 \\ 1 & 0 & 0 \\ 0 & 0 & 0 \end{pmatrix}$ the basis of the $\mathfrak{so}(3)$ Lie Algebra in its defining representation, satisfying $[i L_x,i L_y] = i L_z, [i L_y,i L_z] = i L_x, [i L_z,i L_x] = i L_y$ } for a positively oriented simplex, and can similarly be found for a negatively oriented simplex. And the path of winding going across $w_1$ are given in Fig(\ref{framingAcrossW1}). Note that the path of winding depends on the direction we traverse and also on the whether the orientation vector of the background framing agrees with orientation vector of the tangent framing. We will find it convenient to normalize $f_C$ so that $f_C(0) = \mathbbm{1}$, so that we only measure the change in framing from the start. 

Denote by $\alpha_C \in H^1(M,\Z_2)$ the cohomology class associated to $C$. We note that if $\int w_1 \cup \alpha_C = 0$, i.e. if $C$ crosses $w_1$ an even number of times, then $f_C(1)$ will be the identity, $f_C(1) = \mathbbm{1} \in SO(3)$. To see this, first note that the vectors along the 1-skeleton will be in the same relative orientation at the beginning and end of the loop. Next, note that crossing the $w_1$ surface an even number of times means that the tangent framing's orientation vector at the end of the loop will agree with the background framing's orientation vector, just like in the beginning of the loop. This means that the relative framings are the same at the beginning and end of the loop, i.e. that $f_C(1) = \mathbbm{1}$. 

By similar reasoning, we can note that if $\int w_1 \cup \alpha_C = 1$, i.e. $C$ crosses $w_1$ an odd number of times, then (relative to the coordinates `x,y,z')

$$f_C(1) = 
\begin{pmatrix}
-1 & 0 & 0 \\
0 & -1 & 0 \\
0 &  0 & 1 \\
\end{pmatrix} \in SO(3)$$

Again, the relative orientations of the vectors along the 1-skeleton isn't different at the beginning versus at the end. But, the orientation vectors at the end of the loop will have a relative sign change since there's such a sign change every time we cross $w_1$. This means that relatively between the initial and final frames, the `$x$' and `$y$' coordinates will flip sign.

\underline{\textbf{The quadratic form on a $Pin^-$ surface}}

Now, we should mention what exactly these relative framings have to do with a $\Z_4$-valued quadratic form. For motivation, let us recall the definition of the $\Z_4$-valued quadratic form of  Kirby/Taylor \cite{KirbyTaylor}. There, given some closed, non-self-intersecting loop $C$ in the $Pin^-$ surface $M$, we can consider the restriction $TM \oplus \det(TM)$ to $C$, which we call $\tau$. First, a $Pin^-$ structure on $M$ gives a trivialization of $\tau$. Now, denote $E$ as the total space of the bundle $\det(TM) \to M$. Another way to decompose $\tau$ is as:

$$\tau = TM \oplus \det(TM)|_{C} = TC \oplus N_{C \subset M} \oplus N_{M \subset E}$$

where $TC$ is the tangent bundle of the curve $C$ and $N_{A \subset B}$ denotes the normal bundle of a submanifold $A \subset B$. Note that we can trivialize $TC$ by considering tangent vectors in the direction we traverse. The definition of the quadratic form of \cite{KirbyTaylor} involves a comparing the framing induced by the $Pin^-$ structure's trivialization of $\tau$ with the framing induced by this bundle decomposition, in the same way that the orientable winding number was gotten by comparing two frames. To do this, first, we pick some framing of $\tau$ that's homotopic to the one induced by the $Pin^-$ structure and for which the third `$z$' vector lies on the curve's tangent. Then, we can choose the orientations of both framings to match each other at the starting point of the curve. Then the quadratic form $z(C)$ associated to $C$ is

$$z(C) = -i^{\text{number of right half-twists (mod 4)}}$$

where `number of right half-twists (mod 4)' is the number of right handed half-twists (mod 4) that $N_{C \subset M}$ makes traversing the loop compared to this background framing homotopic to the $Pin^-$ framing. \footnote{The (mod 4) factor comes in because different choices of framings homotopic to the $Pin^-$ one will differ by 4 in the number of right half-twists. This is because framings of the rank-3 bundle $\tau$ form a $\pi_1(SO(3))=\Z_2$ torsor, while framings of the rank-2 bundle $N_{C \subset M} \oplus N_{M \subset E}$ form a $\pi_1(SO(2))=\Z$ torsor. So, two framings of $N_{C \subset M} \oplus N_{M \subset E}$ that differ in $\Z$ by 2 will be homotopically the same framing of $\tau$ in $\Z_2$, and they will differ by 4 right half-twists going around.} The $-1$ in front corresponds to the $(-1)^\text{number of loops}$ factor that we had in the orientable case. 

It's shown in \cite{KirbyTaylor} that given some set of disjoint loops $C_1,\dots,C_k$ on $M$ representing $\alpha \in H^1(M,\Z_2)$ that the function

\begin{equation}
z(\alpha) = \prod_{i=1}^k z(C_i)
\end{equation}

doesn't depend on the representative curves of $\alpha$ so is a function on cohomology classes. And they also show the quadratic refinement property holds, that $z(\beta)z(\beta') = z(\beta + \beta') (-1)^{\int \beta \cup \beta'}$ for $\beta, \beta' \in H^1(M,\Z_2)$

\underline{\textbf{Our analog of the quadratic form on the 1-skeleton and how to compute it}}

Now, let's think about how this `number of right half-twists' is encoded in our function $f_C$. Since our function $f_C$ is a function $[0,1] \to SO(3)$ with $f_C(0) = \mathbbm{1}$, we'll be able to lift it to a unique function $\Tilde{f}_C: [0,1] \to SU(2)$ with $\Tilde{f}_C(0) = \mathbbm{1} \in SU(2)$. The homotopy class of the path $f_C$, and consequently the number of right half-twists it makes, is determined by the endpoint of its lift, i.e. by $\Tilde{f}_C(1)$. Recall we found earlier that 

$$f_C(1) = \mathbbm{1} \in SO(3) \quad \text{iff} \quad \int w_1 \cup \alpha_C = 0 $$

and that 

$$f_C(1) = 
\begin{pmatrix}
-1 & 0 & 0 \\
0 & -1 & 0 \\
0 &  0 & 1 \\
\end{pmatrix} \in SO(3)  \quad \text{iff} \quad \int w_1 \cup \alpha_C = 1$$

This implies that the endpoint of the lift $\Tilde{f}_C(1)$ can take the possible values

$$\Tilde{f}_C(1) = \pm \mathbbm{1} \in SU(2) \quad \text{iff} \quad \int w_1 \cup \alpha_C = 0$$

and \footnote{We denote by $X=\begin{pmatrix} 0 & 1 \\ 1 & 0 \end{pmatrix}, Y=\begin{pmatrix} 0 & -i \\ i & 0 \end{pmatrix}, Z=\begin{pmatrix} 1 & 0 \\ 0 & -1 \end{pmatrix}$ the lifts of $L_x,L_y,L_z$ to matrices of the $\mathfrak{su}(2)$ Lie Algebra in the fundamental representation.}

$$\Tilde{f}_C(1) = \pm i Z \in SU(2)  \quad \text{iff} \quad \int w_1 \cup \alpha_C = 1$$

The cases of $\Tilde{f}_C(1) = \{\mathbbm{1}, iZ, -\mathbbm{1},-iZ\}$ corresponds to the number of right half-twists being $\{0,1,2,3\}$ (mod 4). So, we can see that

\begin{equation} \label{defNumberRightHalfTwists}
\begin{pmatrix} 1 & 0 \end{pmatrix}
\Tilde{f}_C(1)
\begin{pmatrix} 1 \\ 0 \end{pmatrix}  = i^{\text{number of right half-twists}}
\end{equation}

Now, one may be concerned that this definition depends on things like the starting point of the curve and the direction we traverse the curve. We can show that this is not the case as follows. To show this, we'll first translate the windings of Figs(\ref{framingOfTM+DetTM},\ref{framingAcrossW1}), which denoted changes in $f_C$, into how they lift as corresponding changes of $\Tilde{f}_C$. 

We'll have to address that the winding on a part of the loop depends on relative direction of the orientation vector. To do this, it's convenient to introduce a 2-component tuple of orientations:

\begin{equation}
\mathcal{O} = 
\begin{pmatrix}
\mathcal{O}_\text{same} \\
\mathcal{O}_\text{opposite}
\end{pmatrix}
\end{equation}

Each of $\mathcal{O}_\text{same},\mathcal{O}_\text{opposite}$ will be in $SU(2)$ and only one component at a time will be nonzero. $\mathcal{O}_\text{same}$ being nonzero means that the orientation vectors agree between the background and tangent framings. And $\mathcal{O}_\text{opposite}$ being nonzero means that they disagree. 

As we said before, at the beginning of the loop we choose the orientation vectors to agree. So the intial tuple will be:

$$\mathcal{O}^\text{initial} = 
 \begin{pmatrix} \mathbbm{1} \\ 0 \end{pmatrix}
$$

And, at the end of the loop we'll have some tuple $\mathcal{O}^\text{final}$, from which we can extract $\Tilde{f}(1)$ as: 

\begin{equation}
\Tilde{f}(1) = \mathcal{O}_\text{same}^\text{final} + \mathcal{O}_\text{opposite}^\text{final} = \begin{pmatrix} \mathbbm{1} & \mathbbm{1} \end{pmatrix} \mathcal{O}^\text{final}
\end{equation}

To get from $\mathcal{O}^\text{initial}$ to $\mathcal{O}^\text{final}$, there will be some sequence of matrices $\{W_1,\dots,W_k\}$ so that $\mathcal{O}^\text{final} = W_k \cdots W_1 \mathcal{O}^\text{initial}$. Each $W_j$ will be 2$\times$2 blocks where each block is in $SU(2)$. In the cases where the part $j$ of the loop keeps the orientation vector relatively the same (i.e. parts within a 2-simplex), $W_j$ will be block-diagonal. And the parts where the orientation vector relatively switches (i.e. going across $w_1$), $W_j$ will be block-off-diagonal.

For a `$+$' simplex, we'll have:

$$W_j = 
\begin{pmatrix} 
-i X & 0 \\ 0 & i X
\end{pmatrix} \quad \text{if} \quad \hat{2} \to \hat{0}$$

$$W_j = 
\begin{pmatrix} 
i X & 0 \\ 0 & -i X
\end{pmatrix} \quad \text{if} \quad \hat{0} \to \hat{2}$$

$$W_j = 
\begin{pmatrix} 
\mathbbm{1}& 0 \\ 0 & \mathbbm{1}
\end{pmatrix} \quad \text{otherwise}$$

For a `$-$' simplex, we'll have: 

$$W_j = 
\begin{pmatrix} 
i X & 0 \\ 0 & -i X
\end{pmatrix} \quad \text{if} \quad \hat{2} \to \hat{0}$$

$$W_j = 
\begin{pmatrix} 
-i X & 0 \\ 0 & i X
\end{pmatrix} \quad \text{if} \quad \hat{0} \to \hat{2}$$

$$W_j = 
\begin{pmatrix} 
\mathbbm{1}& 0 \\ 0 & \mathbbm{1}
\end{pmatrix} \quad \text{otherwise}$$

And, we'll have a couple of cases to consider when crossing $w_1$ as we depicted in Fig(\ref{framingAcrossW1}).

$$W_j =
\begin{pmatrix} 
0 & iY \\ -iY & 0
\end{pmatrix} \quad \text{cases (a.i,b.ii,c.i,d.ii) of crossing  } w_1$$

$$W_j =
\begin{pmatrix} 
0 & -iY \\ iY & 0
\end{pmatrix} \quad \text{cases (a.ii,b.i,c.ii,d.i) of crossing  } w_1$$

Now, we can show that the number of right half-twists as defined in Eq(\ref{defNumberRightHalfTwists}) is well-defined: that it doesn't depend on the starting point of the curve and doesn't depend on the direction we go. One thing to note is that although the individual matrices $iX,iY,iZ$ don't commute with each other, all of the matrices $W_j$ will commute due to the block diagonal structure. This ensures that the matrix $\mathcal{O}^\text{final} = W_k \cdots W_1 \mathcal{O}^\text{initial}$ is independent of the starting point of the path. Another thing to note is that for a segment $j$ of the path, $W_j$ is the negative of the matrix gotten by traversing that part in the opposite direction. And, note that the total number of matrices $k$ will be even, since every part that contributes a nontrivial $W_j$ reverses the relative direction of the vector along the 1-skeleton, and this relative direction stays the same between the beginning and the end. So, reversing the path will change $\mathcal{O}^\text{final}$ by an even number of minus signs, i.e. it keeps $\mathcal{O}^\text{final}$ the same. 

We can also note that if we started out with the orientation vector of the tangent frame in the \textit{opposite} direction as the background frame, as opposed to the same direction, then this would also leave the number of right half-twists the same. This would amount to defining $\mathcal{O}^\text{initial} = \begin{pmatrix} 0 \\ \mathbbm{1} \end{pmatrix}$. This would give the same $\Tilde{f}_C(\mathbbm{1})$. We can see this because starting with this $\mathcal{O}^\text{initial}$ is equivalent to starting with $\begin{pmatrix} \mathbbm{1} \\ 0 \end{pmatrix}$ and conjugating $W_k \cdots W_1$ by $\begin{pmatrix} 0 & \mathbbm{1} \\ \mathbbm{1} & 0 \end{pmatrix}$. Conjugating each $W_k$ by this matrix introduces a minus sign, and there will be an even number of minus signs that all cancel.

\subsubsection{The definition of $\sigma(M,\alpha)$ and its formal properties} \label{formalPropsPinMinusSigma}
Now, suppose $\alpha$ is represented by some set of curves $C_1,\dots,C_k$ on the dual 1-skeleton. Then, we'll define the function $\sigma(M,\alpha)$ in a similar way as before:

\begin{equation}
\sigma(M,\alpha) = (-1)^{\text{\# of loops}} \prod_{i=1}^k i^{\text{number of right half-twists for } C_i}
\end{equation}

By the previous discussion, this quantity is well-defined. And, we can see that $\sigma(M,\alpha) = \pm 1$ iff $\int w_1 \cup \alpha = 0$ and $\sigma(M,\alpha) = \pm i$ iff $\int w_1 \cup \alpha = 1$, which we wanted.

\underline{\textbf{$\sigma$ for elementary plaquette loops}}

The next thing we should show is that if an elementary plaquette loop $C$ surrounds the plaquette $P$, then $\sigma(M,\alpha_C) = (-1)^{\int_P w_2 + w_1^2}$. Note that our definition of $w_2$ requires a vector field on $TM$ that is nonvanishing over the entire 1-skeleton. Of the vectors $x,y,z$ of the background framing, the only one that remains in $TM$ over the entire 1-skeleton is the $y$ field, which doesn't pay attention to $w_1$. So, $w_2$ is defined via the $y$ field.

Note that away from the $w_1$ surface, this computation is exactly the same as it was in the orientable case. But for $C$ that lie on $w_1$, the computation is subtle. The reason for this is that simplices that are neighboring each other across $w_1$ have their labels `$+,-$' that are inconsistent with their relative local orientations. This means that the winding number definition needs to be looked at with care to define $\int_P w_2$ since the labeling of the $+$ and $-$ simplices is `wrong' as far as measuring this winding is concerned. To treat this, let's consider the sequence of matrices $W_1,\dots,W_k$ that go into constructing $\mathcal{O}^\text{final} = W_k \cdots W_1 \mathcal{O}^\text{initial}$. And, let's say that the matrices at $i_0 < j_0$ correspond to the segments where the orientation reverses. (Here, we will treat the case where $w_1$ intersects the loop twice. The case of a higher even number of intersections is similar). 

Note that the matrices $W_{i_0}$ and $W_{j_0}$ are of the form $\pm \begin{pmatrix} 0 & iY \\ -iY & 0 \end{pmatrix}$ and the $W_{i_0+1},\dots,W_{j_0-1}$ are all of the form $\pm \begin{pmatrix} -i X & 0 \\ 0 & i X\end{pmatrix}$. An issue we need to deal with is that the $W_{i_0+1},\dots,W_{j_0-1}$ are all the \textit{negative} of what the local orientation would think, relative to the start of the curve. In other words, the winding part $(-1)^{\int_P w_2}$ of $\sigma(M,\alpha_C)$ would be given by the opposite of the sign $\pm \mathbbm{1}_{4 \times 4}$ of 

\begin{align*}
&W_1 \cdots W_{i_0-1} (-W_{i_0 + 1}) \cdots (-W_{j_0 - 1}) W_{j_0 + 1} \cdots W_k \\
&= (-1)^{j_0 - i_0 - 1} W_1 \cdots W_{i_0-1} W_{i_0 + 1} \cdots W_{j_0 - 1} W_{j_0 + 1} \cdots W_k
\end{align*}

From here, if we can verify that the sign of $(-1)^{j_0 - i_0 - 1} W_{i_0} W_{j_0}$ is equal to $(-1)^{\int_P w_1^2}$, then we've shown what we've want, that $\sigma(M,\alpha_C) = (-1)^{\int_P w_2 + w_1^2}$. This can be seen as follows, with the pictures in Fig(\ref{framingAcrossW1}) in mind. Note that $W_{i_0}$ or $W_{j_0}$ is $\begin{pmatrix} 0 & iY \\ -iY & 0 \end{pmatrix}$ if the direction the loop traverses is the same as the direction of the background orientation vector across $w_1$, and it's $\begin{pmatrix} 0 & -iY \\ iY & 0 \end{pmatrix}$ if the loop's direction is opposite that of the background orientation vector across $w_1$. This means that $W_{i_0} W_{j_0} = -\mathbbm{1}_{4 \times 4}$ if the background orientation vectors at the $i_0, j_0$ junctions near $w_1$ point to the same side of $w_1$ as each other, and $W_{i_0} W_{j_0} = \mathbbm{1}_{4 \times 4}$ if they point to opposite sides of $w_1$. Similarly, $(j_0 - i_0 - 1)$ gives the number (mod 2) of half-turns that the $y,z$ vectors make with respect to the curve on one side of $w_1$. So, $(-1)^{j_0-i_0-1}$ is related to the relative directions of the `$z$' vectors of the background framings near the $i_0, j_0$ junctions, on the same side of $w_1$. In particular, $(-1)^{j_0-i_0-1}$ is 1 if these directions are opposite, and it's $-1$ if the directions are the same. Combining these observations with the definition of the perturbing vectors and the defintion of $w_1^2$ shows that $(-1)^{j_0 - i_0 - 1} W_{i_0} W_{j_0} = (-1)^{\int_P w_1^2} \mathbbm{1}_{4 \times 4}$.

So, we've shown that $\sigma(M,\alpha_C) = (-1)^{\int_P w_2 + w_1^2}$.

\underline{\textbf{Quadratic refinement for $\sigma$}}

Fortunately, the quadratic refinement property for $\sigma$ follows from a similar analysis as with the orientable case, for which argument was depicted in Fig(\ref{intersectionsAndWindings}). As before the problem reduces to the case of when $\beta,\beta' \in H^1(M,\Z_2)$ are each represented by a single loop, $C,C'$ resp., on the dual 1-skeleton. The main point of extending that logic to the nonorientable case is that we should compare what happens to the total signs of the winding matrices for each curve before and after combining them on each intersection. 

In particular, let $W := W_k \cdots W_1$ be the total winding matrix for $C$ and $W' := W'_{k'} \cdots W'_{1}$ be the total winding matrix for $C'$. Since all the $W_j,W'_j$ commute with each other, we should consider the total matrix $W_\text{combined}$ is after combining them by resolving a single intersection. First, note that resolving each intersection changes the number of loops by $\pm 1$. Then, we should note that if $W W' = W_\text{combined}$, then the total number of half-twists stays the same (mod 4), and if $W W' = - W_\text{combined}$, then the total number of right half-twists will change by 2 (mod 4), which can be easily seen by examining the definition of the number of right half-twists in terms of the $W,W'$. So, the problem boils down to showing that $W W' = -W_\text{combined}$ for a Type I crossing and that $W W' = W_\text{combined}$ for a Type II crossing. 

If the shared part of the curve doesn't intersect $w_1$, this can also be seen in the same way we saw it in Fig(\ref{intersectionsAndWindings}). But if a shared part \textit{does} interesect $w_1$, then we should be more careful. Let's suppose first that the shared part intersects $w_1$ exactly once. Then, the local orientations at the ends of the shared part will be opposite to each other, so the analogous argument would tell us that there's a relative minus sign between our expected answer. But we should also consider the products of the winding matrices along the shared part. If the curve doesn't intersect $w_1$, then the winding going in one direction of the shared part exactly cancels the winding in the other direction. But since the curve intersects $w_1$ once, the winding matrices going in one direction times the winding in the other direction will actually be $-1$, which can be traced to the fact that 

$$-\begin{pmatrix} 0 & -iY \\ iY & 0 \end{pmatrix} \begin{pmatrix} 0 & -iY \\ iY & 0 \end{pmatrix} = -\mathbbm{1}$$

So this $-1$ from the local orientations will cancel the $-1$ from the winding matrices going in the opposite directions, which means that quadratic refinement still holds. 

\section{$\sigma(M,\alpha)$ in higher dimensions} \label{geometricGuWenInHigherDimensions}
Now, our goal will be to use the lessons from the 2D case and see how to extend this understanding to higher dimensions, for some triangulation $M$ of a $d$-dimensional manifold and some cocycle $\alpha \in Z^{d-1}(M,\Z_2)$. In summary, the basic idea for $\sigma(M,\alpha)$ will remain the same, that schematically:

$$\sigma(M,\alpha) = ``(-1)^{\text{\# of loops}} \prod_{\text{loops}} i^{\text{winding}}"$$

So, our goal will be to formulate what exactly we mean by these quantities and then show that they satisfy the formal properties we care about. One thing is that we need to decide what exactly this `winding' factor means. In 2D, there were two tangent vectors, so we could unambiguously decide what the winding angle or the winding matrices between the `background' and `tangent' framings were going around a loop. But in higher dimensions, it's not as clear how to do this. 

The other issue is we want to ensure is that there is a clear definition of the `loops' on the dual 1-skeleton. It's not immediately obvious that a loop decomposition makes sense, because the dual 1-skeleton in higher dimensions is $(d+1)$-valent. For a trivalent graph it's possible to decompose any cochain into loops, but for higher-valent graphs, there's an issue that if there are four or more edges at at a vertex then there are multiple ways of splitting these edges up into pairs to define the loops.

It will turn out that we will have a clear way to define this winding via a certain \textit{shared framing} along the 1-skeleton. In other words, across the entire 1-skeleton there will be some fixed set of $(d-2)$ vectors on the 1-skeleton that define a `shared framing' and are shared by both the background and tangent framings. Then, there will be two remaining vectors that will differ between the background and tangent framings, from which we can then unambiguously give a notion of winding. This need for additional framing vectors can be anticipated from another way to think about $Spin$ structures in higher dimensions. In particular, in higher dimensions a $Spin$ structure can be thought of assigning either a bounding or non-bounding $Spin$ structure to every \textit{framed} loop so that the $Spin$ structures changes when the framing is twisted by a unit. So, this shared framing will tell us that for any loop that passes through two edges of the dual 1-skeleton at a $d$-simplex, we can assign a winding to that segment of the loop. 

And, there's a way to deal with the problem of a $(d+1)$-valent dual 1-skeleton. To deal with this, one option to unambigously resolve a loop configuration is to resolve the $(d+1)$-valent vertex into $(d-2)$ trivalent vertices. In particular, we want to make sure that our trivalent resolution will allow $\sigma$ to satisfy the quadratic refinement property $\sigma(\beta) \sigma(\beta') = (-1)^{\int_M \beta \cup_{d-2} \beta'} \sigma(\beta + \beta')$. It turns out that for our purposes, this option works. We will see that there is a certain trivalent resolution of the dual 1-skeleton that will yield this property, even though generically not all trivalent resolutions will work.

It will turn out that the interpretation of the higher cup product as a thickening under vector field flows will be crucial in allowing these definitions to work. We will see that the quadratic refinement property can be readily deduced if we thicken the loops under the `shared' framing. And, we'll see that the vector used to shift the loops to resolve intersections will correspond to the `$y$' vector of the background framing. A nice feature of this will be that the constructions and arguments for the 2D case carry over directly to higher dimensions, and we can use the same geometric reasoning to deduce quadratic refinement in higher dimensions. So, we already did a large part of the work in spelling out the 2D case (apart from the trivalent resolution part).

Throughout this section, we will will want to label the edges of the dual 1-skeleton and the $(d-1)$-simplices that comprise of the boundary. As in the last section, for $i \in \{0,\dots,d\}$, we'll use $\ihat$ to refer interchangeably with the $(d-1)$-simplex that comprises of $(0 \dots \ihat \dots d)$ or its dual edge on the 1-skeleton. If the context isn't sufficient to distinguish the $(d-1)$-simplex with its dual edge, we'll refer to the $(d-1)$-simplex as $\ihat_\Delta$ and its dual edge as $\ihat_{ed}$

\subsection{The different framings} \label{theDifferentFramings}
We will first need to illustrate explicitly what all the different framings are that we'll be considering. In particular, we'll want to know how to describe along the 1-skeleton the $(d-2)$ vectors that go into shared framing, and the other two vectors each that go into the background and tangent framings. These will be closely related to the vector fields we constructed in describing the higher cup product. Then, we'll be in a position to see explicitly how to compute the winding of these frames with respect to each other as we enter and exit the $d$-simplex along two edges of the dual 1-skeleton. 

While we were able to do this pictorially in two dimensions, in higher dimensions we'll need to think more carefully to show the analogous statements in higher dimensions. Throughout, we'll see that some nice properties of the Vandermonde matrix will allow us to think about the windings and do the relevant computations.

Since we're dealing with $\Delta^d \subset \R^{d+1}$, the vectors we deal with in our vector fields will have $(d+1)$ components. And the ones that can lie within $\Delta^d$ will be the ones with $(1,\dots,1)$ projected out. However, it will be convenient algebraically to think about the vectors before projecting out $(1,\dots,1)$. So, we will introduce the notation:

\begin{equation}
v \sim w \quad \text{if } v = w + a (1,\dots,1), \text{ for some } a \in \R
\end{equation}

We'll give most of the details of the fields' definitions inside each $d$-simplex in the main text. But, we'll relegate some other details to Appendix \ref{windingsInHigherDims}, like how to glue the vector fields at neighboring simplices and how to compute their windings with respect to each other. 

\subsubsection{The shared framing}
Let's discuss first what is the shared framing that we referred to above and see how it relates to the vector fields we constructed to thicken and intersect the cells. Then, we'll talk about some of this framing that will be necessary for us.

Recall that Eq(\ref{higherCupSolnMatrix}) represented a set of vector fields we could use to connect to the higher cup product. And for $\beta,\beta'$ being $(d-1)$-cochains, the corresponding vectors that we thicken along inside the $d$-simplex are:

$$v^\text{shared}_i \sim \Vec{b}_i = (1,\frac{1}{2^i},\frac{1}{3^i}, \dots, \frac{1}{(d+1)^i}) \text{ for } i = 1,\dots,d-2$$

We'll also include the extra vector $\Vec{b}_0 = (1,\dots,1)$ for convenience.

Inside each $d$-simplex, these vectors will represent the shared framing that are common to both the background and tangent framings. In particular, along each of the $(d+1)$ edges of the dual 1-skeleton, away from the boundary the $d$-simplex these vectors will be constant. But, we'll want to modify the vectors as the points approach the boundary of the $d$-simplex, onto the $(d-1)$-simplex $\ihat_\Delta$. This is so that it will be possible to extend the framing to nearby $d$-simplices. So, as we approach $\ihat_\Delta$, we'll project all of these vectors onto the subspace of $\R^{d+1}$ with the $i$th component being zero (except for $\Vec{b}_0$ which always remains $(1,\dots,1)$). We do this in anticipation that we'll compare the vector fields on different simplices.

Note that the vectors will remain linearly independent as we project out this component. This linear independence follows from the fact that every $k \times k$ minor of a Vandermonde matrix consists of linearly independent $k$-component vectors, and the $\Vec{b}_j$ with any component projected out can be thought of as a $(d-1) \times (d-2)$ submatrix of a Vandermonde matrix.

Another important property is that this frame of vectors is linearly independent from the vectors tangent to any of the dual 2-cells. Let $(ij)$ denote the 2-cell dual to the $(d-2)$-simplex $(0 \dots \ihat \dots \jhat \dots d)$. To see this, note that the vectors that span this dual 2-cell are $\{(c-f_i),(c-f_j)\}$ as defined in Section \ref{sec:Prelim}. Also note that:

$$(c-f_i) \sim \frac{1}{n} (0,\dots0,\underbrace{1}_{i\text{th component}},0\dots,0)$$

So, the shared frame are linearly independent from $(ij)$ for the same reason: projecting out these $i$th and $j$th components leaves the frame linearly independent since that's saying that a $(d-2) \times (d-2)$ minor of a Vandermonde matrix has nonzero determinant.

\subsubsection{The background framing}
Now, let's define the other two vectors that go into the background framing. The first additional vector that we'll add will simply be the vector

$$v^\text{bkgd}_{n-1} \sim \Vec{b}_{n-1} = (1,\frac{1}{2^{n-1}}, \dots, \frac{1}{(d+1)^{n-1}})$$

And again, as we approach the boundary at the $(d-1)$-simplex $\ihat$, we project out the $i$th component of the vector. 

The second vector, $v^\text{bkgd}_n$ will be analogous to the earlier vector that points parallel to the dual 1-skeleton, except near the center of the $d$-simplex. To define this vector, we need to be careful about the direction along the 1-skeleton points, either towards or away from the center. So, we need to be sure that the orientation defined by the entire frame is consistent throughout the $d$-simplex. Let's recall how we did this for the 2-simplex, as in Fig(\ref{vectorFieldOn2Simplices}), for which the prescription differed for `$+$' and `$-$' simplices. For a `$+$' simplex, this vector along the 1-skeleton pointed towards the center along the edges $\hat{1}$ and pointed away from the center for the edges $\hat{0},\hat{2}$, and oppositely for a `$-$' simplex. The reason for this is by considering the induced orientations on the $(d-1)$-simplices $\ihat_\Delta$: the branching structure gives opposite orientations on the simplices labeled by $i$ even versus $i$ odd. 

So, away from the center, we'll have that for a `$+$' simplex,

$$v^\text{bkgd}_n = (-1)^i(c - f_i) \sim \frac{1}{n}(0,\dots,0,\underbrace{(-1)^i 1}_{i\text{th component}},0,\dots,0) \quad \text{ along } \ihat$$

with opposite signs for a `$-$' simplex. Now, while we can make these definitions along $\ihat$ away from the center of a $d$-simplex, we have to be careful when approaching their centers and making sure that we can make a continuous vector field in some neighborhood of the center. The solution to this is that we should first pick some neighborhood of the center whose shape is a $d$-simplex with vertices on each edge of the dual 1-skeleton at some same, small coordinate distance from the center. As the curve goes from edges $\ihat \to \jhat$, then $v^\text{bkgd}_n$ will look like $t (-1)^i(c - f_i) + (1-t)(-1)^j(c - f_j)$ where $t$ is some appropriate parameter of the curve between $\ihat \to \jhat$. 

The important point is that as we approach the center, it will be possible to arrange that:

$$v^\text{bkgd}_n \sim \Vec{b}_n = (1,\frac{1}{2^n}, \dots, \frac{1}{(d+1)^n}) \quad \text{ at the center}$$

We alluded to this previously in Fig(\ref{vectorFieldOn2Simplices}), where we demonstrated visually that a natural continuation of the vector field points in the same direction as the direction of the Morse flow at the center. 

Of course, we need to make sure these constructions make sense and indeed define a nondegenerate framing everywhere in a neighborhood of the 1-skeleton.  We'll verify this and put the constructions on more solid footing in Appendix \ref{windingsInHigherDims}.

\subsubsection{The tangent framing}
We can define the tangent framing in a similar way. Let's consider a path $\ihat \to \jhat$. Then the vector tangent to the curve, which we'll call $v^\text{tang}_n$, will start out as 

$$v^\text{tang}_n = (c - f_i) \sim \frac{1}{n}(0,\dots,0,\underbrace{1}_{i\text{th component}},0,\dots,0) \quad \text{ along } \ihat$$

and end as 

$$v^\text{tang}_n = (f_j - c) \sim \frac{1}{n}(0,\dots,0,\underbrace{-1}_{j\text{th component}},0,\dots,0) \quad \text{ along } \jhat$$

And in between, we'll have that

$$v^\text{tang}_n = t (c-f_i) + (1-t)(c - f_j) \quad \text{ in between } \ihat, \jhat$$

Now, away from the boundary, we can choose the other vector, $v^\text{tang}_{n-1}$, to be 

$$v^\text{tang}_{n-1} \sim \pm \frac{1}{n}(0,\dots,0,\underbrace{-1}_{j\text{th component}},0,\dots,0) \quad \text{ along } \ihat$$

$$v^\text{tang}_{n-1} \sim \pm \frac{1}{n}(0,\dots,0,\underbrace{-1}_{i\text{th component}},0,\dots,0) \quad \text{ along } \jhat$$

$$v^\text{tang}_{n-1} \sim \pm \frac{1}{n}(0,\dots,0,\underbrace{-(1-t)}_{i\text{th component}},0,\dots,0,\underbrace{-t}_{j\text{th component}},0,\dots,0) \quad \text{ in between } \ihat, \jhat$$

The choice of $\pm$ here will depend on a couple factors: whether the simplex is a `$+$' or `$-$' simplex, whether the `orientation vectors' agree or disagree, and the values of $i,j$. The details of this will be given in Appendix \ref{windingsInHigherDims}.

\subsection{Verifying the formal properties: windings, trivalent resolution} \label{verifyingTheFormalPropertiesHigherDimensions}
Given the background and tangent framing, we can ask what are their windings with respect to each other? In other words, the two framings determine a relative element of $SO(d)$ with each other, and we want to know how to determine this relative winding's path in $\pi_1(SO(d)) = \Z_2$. \footnote{Strictly speaking, these framings give a relative framing in $GL^+(d)$ and a loop determines an element of $\pi_1(GL^+(d)) = \Z_2$. But, first note that we can freely choose an appropriate inner product that makes the background framing orthonormal. Then, we can orthogonalize the tangent framing with respect to this inner product. This will give the same element of $\Z_2$, since $GL^+(d)$ deformation retracts onto $SO(d)$ via the orthogonalization procedure.}

Throughout this subsection, we'll again relegate to Appendix \ref{windingsInHigherDims} technical details. In fact, we won't need to explicitly state what the winding matrices are to describe the formal properties for now. But for reference, the windings and the trivalent resolutions are given in Fig(\ref{fig:trivalentResAndWinding}).

\begin{figure}[h!]
  \centering
  \includegraphics[width=\linewidth]{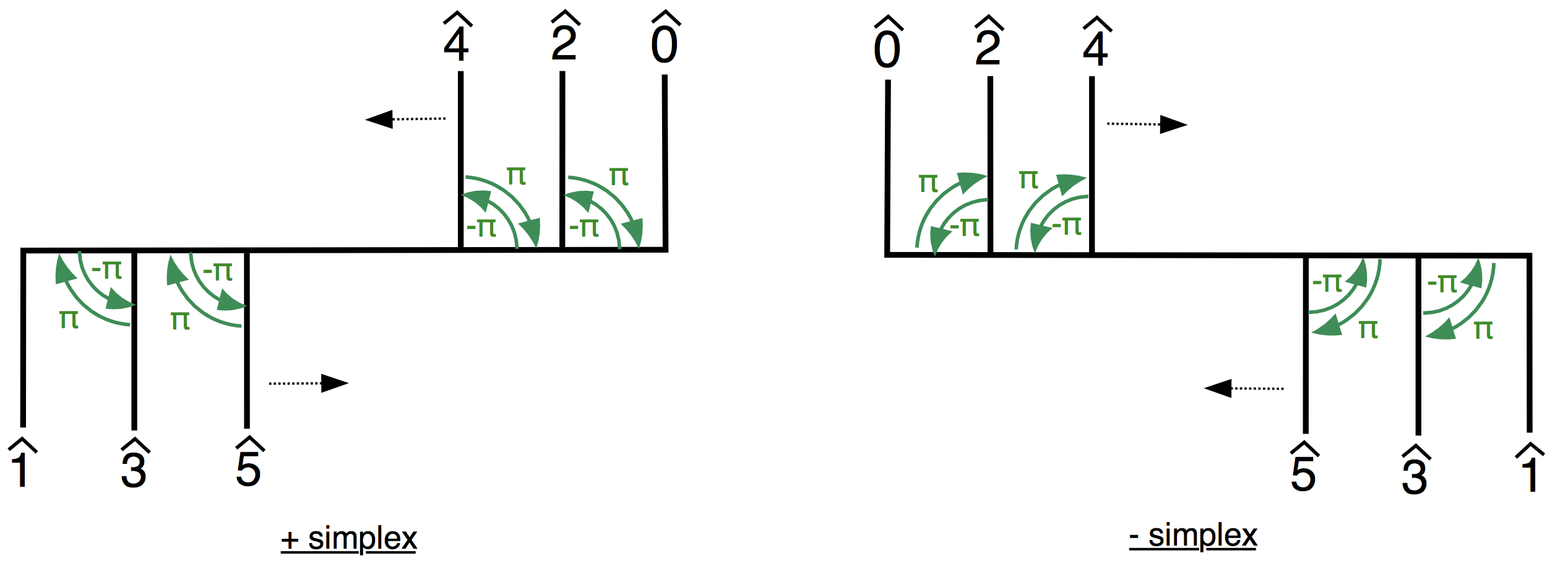}
  \caption{The trivalent resolution used, and the windings on both orientations. The $\pm \pi$ can refer to the winding angle in the orientable case, but more generally refers to the winding matrices of Section \ref{2DWindingMatrices}. Note that winding between $\ihat$ and $\jhat$ occur iff $i \equiv j \text{ (mod 2)}$ and depend on whether or not $i < j$.}
  \label{fig:trivalentResAndWinding}
\end{figure}

The first thing we should can do is verify that for some elementary plaquette loop $C$ bounding the dual 2-cell $P$, that $(-1)^{(w_2 + w_1^2)(P)} = \sigma(M,\alpha_C)$. The reason for this is precisely the same as it was before. For the orientable case, note that the tangent framing will always have two vectors spanning the plaquette's tangent bundle along the boundary. This means that the winding (mod 2) of the background frame with respect to the tangent frame determines the number of singularities of a generic extension that must occur, which shows for the orientable case that $(-1)^{w_2(P)} = \sigma(M,\alpha_C)$. The same argument for the nonorientable case also applies in the same manner as earlier, and gives us the additional $w_1^2$ part.

Next, we should verify the quadratic refinement part. This is the place where viewing the higher cup product as a thickening with respect to a vector field flow will be helpful in showing quadratic refinement. Again, the argument reduces to showing that $\sigma(\beta)\sigma(\beta') = \sigma(\beta + \beta')(-1)^{\int \beta \cup_{d-2} \beta'}$ for when $\beta$ and $\beta'$ are both dual to a single loop on the dual 1-skeleton. In 2D, we were able to verify this by considering the edges shared between the loops of $\beta,\beta'$. In particular, these shared segments defined several possible crossings, which we called Type I, II, III, IV, which corresponded to whether each curve starts and ends on the `same side' or `opposite side' of their shared edges and whether the loops were parallel to each other on the section we were resolving. The different crossings changed the number of loops by either $0$ or $\pm 1$, and they differed in how many times (mod 2) the curves intersected each other with respect the background field along them. In all the cases, this allowed us to identify the change in $(-1)^\text{\# of loop} i^\text{winding}$ after resolving the intersection with the intersection number of the curves, as perturbed by the background vector field. 

There are two issues in trying to extend this logic to higher dimensions is that in higher dimensions, and consequently two cases we need to deal with to verify the quadratic refinement property. The first is related to the issue of why we need to introduce a trivalent resolution in the first place. If there are two curves that meet at a single point at the center of a $d$-simplex, then we need a trivalent resolution of that dual 1-skeleton to unambiguously say whether the curves split up and join each other or whether they stay in tact. So, these kinds of intersections are the first case. The fact that the quadratic refinement holds for this case of intersections is handled in Appendix \ref{windingsInHigherDims}.

The second issue deals with when the shared edges along the curves' intersection are the original edges of the dual 1-skeleton itself. In 2D, it made sense to distinguish the types of intersections based on which `side' of the curves' shared edges the curves start and end. But in higher dimensions, this notion doesn't make sense by itself. However, we can give this notion a meaning via \textit{thickening} the curves along the shared framing. This is because thickening along these $(d-2)$ shared vectors $\{v^\text{shared}_1,\dots,v^\text{shared}_{d-2}\}$ will locally give near the curve a codimension-1 set of points for which it's possible to ask which `side' of this set a curve is on. And, the `background' vector field $v^\text{bkgd}_{n-1}$ will act as the `perturbing' vector to separate the curve from its thickening. 

The fact that we chose the shared frame vectors and the perturbing background vector to be the same ones used to interpret the higher cup product will allow us to interpret quadratic refinement in the same way in higher dimensions as we did in lower dimensions. As depicted in Fig(\ref{higherDimIntersectionsAndWinding}), the `side' of the thickening that the curves enter and exit the shared region correspond exactly to how many times (mod 2) the curve intersects the thickened region after being perturbed by some other vector. We depicted the cases of Type I and II crossings in Fig(\ref{higherDimIntersectionsAndWinding}), but Types III and IV crossings can be drawn similarly. Note that based on the vectors we chose, these intersections points on each shared segment exactly give the contribution of the segment to $\int \beta \cup_{d-2} \beta'$! We can also consider projecting all the vectors the direction of thickening which would flatten the whole image to 2D. Then, resolving this intersection after flattening shows that the relationship after each resolution between the intersection number and the change of winding is follows exactly the same pattern as in 2D. Except the intersection $\int \beta \cup \beta'$ of 2D gets replaced with $\int \beta \cup_{d-2} \beta'$ in higher dimensions.

\begin{figure}[h!]
  \centering
  \begin{minipage}{0.81\textwidth}
    \centering
    \includegraphics[width=\linewidth]{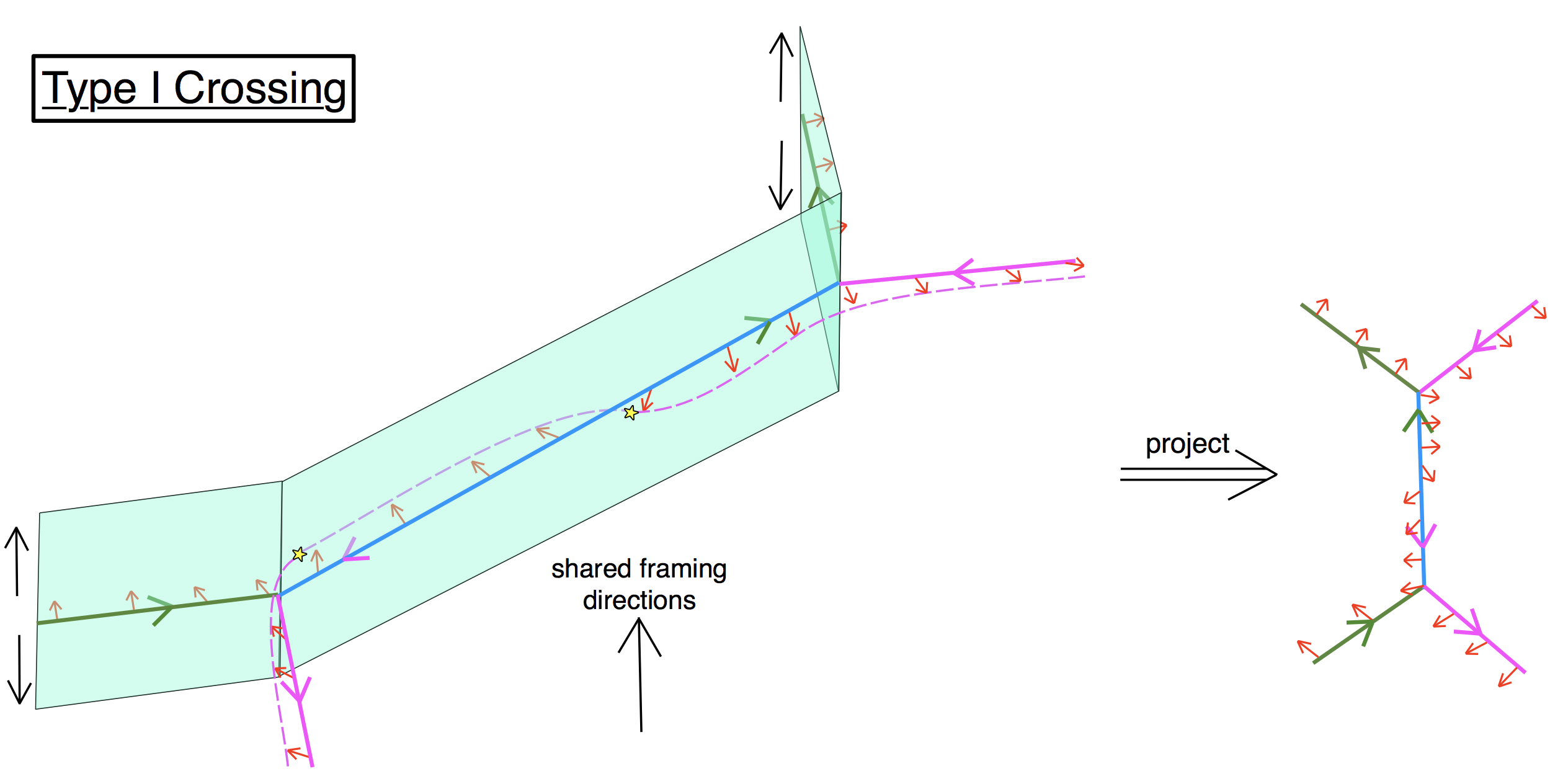}
  \end{minipage} \quad \quad \quad
  \begin{minipage}{0.81\textwidth}
    \centering
    \includegraphics[width=\linewidth]{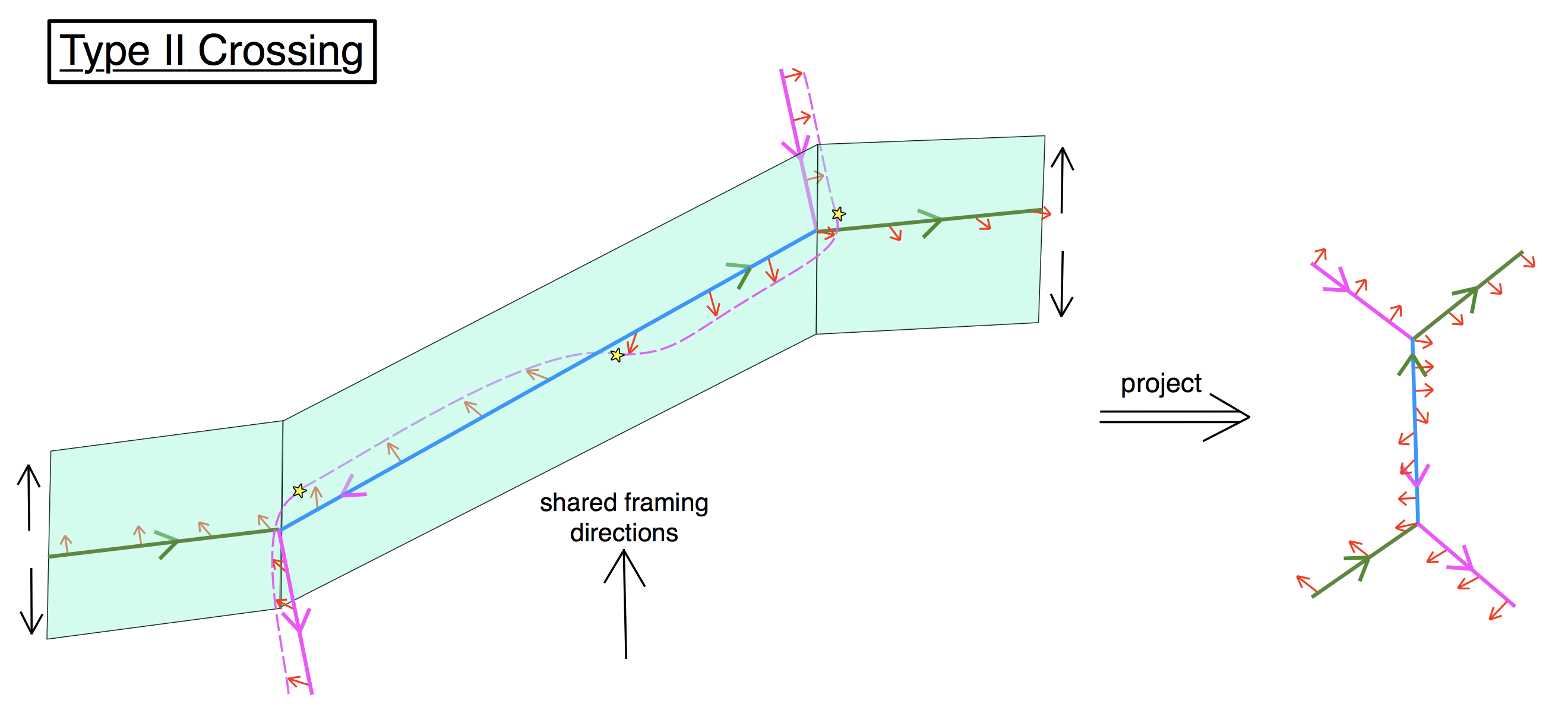}
  \end{minipage}
  \caption{In higher dimensions, we can consider thickening one curve in the direction of the shared framing and perturbing it in the direction of $(n-1)$th vector of the background frame (or equivalently perturbing the other curve in \textit{opposite} direction, as shown here). Given our definitions of the shared framing and the $(n-1)$th vector of the background frame, the number of intersections between this thickening and the other perturbed curve is a contribution of $\int \beta \cup_{d-2} \beta'$, where $\beta$ is the green curve, $\beta'$ is the pink curve, and their shared region is in blue as in Fig(\ref{intersectionsAndWindings}). By projecting out all components of the vectors in the direction of the shared framing, we can reduce the comparison of the intersection numbers and windings to the same analysis we gave for the 2D case.}
  \label{higherDimIntersectionsAndWinding}
\end{figure}

\subsection{Encoding a $Spin/Pin^-$ structure}
Now, we can describe how to use this construction to encode $Spin/Pin^-$ structures given the background framing and these triangulated manifolds, following discussion in \cite{mathOverflowThread} (see also \cite{BudneyCombinatorialSpin}). Recall that a $Pin^-$ structure can be thought of as a cochain $\eta \in C^1(M^\vee,\Z_2) = C_{n-1}(M,\Z_2)$ such that $\delta \eta = w_2 + w_1^2 \in C^2(M^\vee,\Z_2) = C_{n-2}(M,\Z_2)$. We want to see how this choice of $\eta$ can be thought of geometrically. Note that $\delta \eta = w_2$ means as chains that $\eta$ will be represented by some collection of $(d-1)$-simplices whose boundary is given by $w_2 + w_1^2$. Now, we should ask how this relates to our winding picture of a $Spin$ structure.

Remember that a $Spin$ or $Pin^-$ structure can be viewed as a trivialization of $TM$ or $TM \oplus \det(TM)$ on the 1-skeleton that extends to even-index singularities on the 2-skeleton. But, the framings we constructed to talk about $\sigma(M,\alpha)$ often extend to odd-index singularities on dual 2-cells where $w_2 + w_1^2$ doesn't vanish. So, to `fix' this, we'll choose some collection $\eta$ of edges on the dual 1-skeleton and `twist' the two unshared background vectors by $360^\circ$ going around the circle. We want to choose the collection of edges so that every dual 2-cell with $w_2+w_1^2 = 0$ will have an even number of edges in $\eta$ and those with $w_2+w_1^2$ will have an odd number of edges in $\eta$, and then twist the two unshared background vectors by $360^\circ$ along each edge, like in Fig(\ref{fig:spinStructTwist}). We can also think of $\eta$ as the collection of $(d-1)$-simplices dual to the edges in the dual 1-skeleton, and the boundary of this collection sum up to be a representative of $w_2 + w_1^2$. Note that this is only possible if $w_2 + w_1^2$ vanishes in cohomology.

This has the effect that for a curve traversing in a loop, its winding gets an additional full-twist (or equivalently multiplied by $-\mathbbm{1}$) per edge it contains in $\eta$ (or equivalently for every $(d-1)$-simplex in $\eta$ it crosses). For a cochain $\alpha$, we can write this phase as $(-1)^{\int \eta (\alpha)}$. This means that elementary plaquette loop $C$ will have $\sigma(\alpha_C)(-1)^{\int \eta (\alpha_C)} = 1$. These extra twists ensure that the twisted framing will extend to even-index singularities on each elementary plaquette, which is equivalent to saying that it defines a $Pin^-$ structure.

Two such $\eta,\eta'$ will give equivalent $Spin$/$Pin^-$ structures if $\eta + \eta'$ is represented by some homologically trivial sum of $(d-1)$-simplices. But, they'll instead give inequivalent $Spin$/$Pin^-$ structures if $\eta + \eta'$ is homologically nontrivial in $H^1(M,\Z_2)$. In other words, given $\lambda \in H^1(M,\Z_2)$ which is represnted by some closed codimension-1 collection of simplices, $\lambda.\eta$ is the $Spin$ structure we get by adding a $360^\circ$ twist relative to $\eta$ every time we cross the $\lambda$ surface.

Simlar reasoning can be used to combinatorially encode a $Pin^+$ structure, as in Appendix \ref{combDefinePinPlus}.

\begin{figure}[h!]
  \centering
  \includegraphics[width=0.5\linewidth]{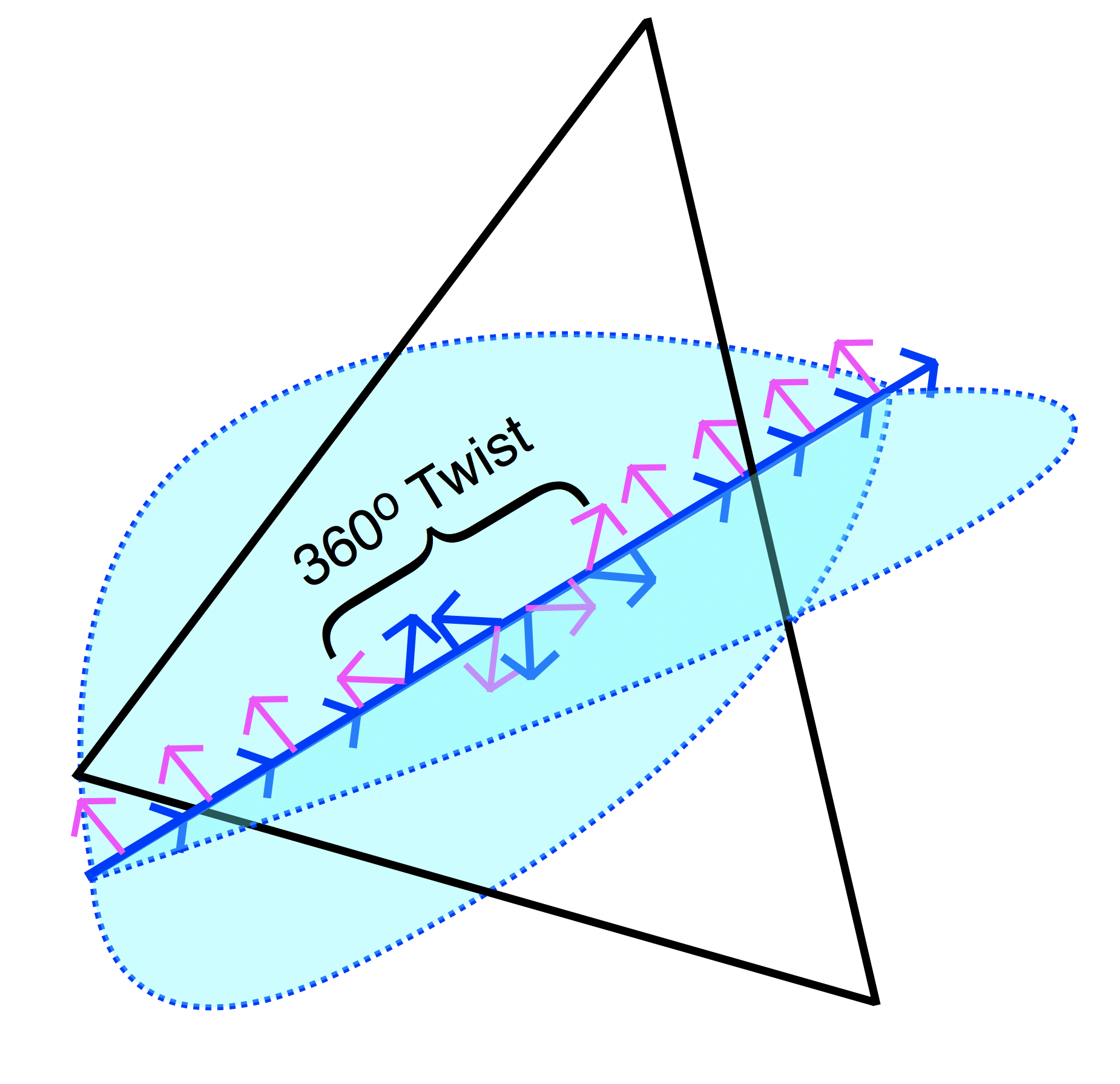}
  \caption{Twisting the two unshared vectors of the background framing with respect to each other along an edge by $360^\circ$ gives a $(d-1)$-simplex that's a part of the $Pin^-$ structure. Each such twist adds a minus sign to $(-1)^{\int \eta (\alpha_C)}$ for each crossing of $C$ with a $(d-1)$-simplex of $\eta$.}
  \label{fig:spinStructTwist}
\end{figure}

\section{Discussion}
We've constructed a set of vector field flows inside the standard simplex $\Delta^n$ that allows us to interpret the higher cup products as a generalized intersection between a cochain and a thickened, shifted version of the dual of another cochain. This allows us generalize the cup product, whose geometric interpretation was an intersection with respect to a shifting, but without any thickening. In particular, the Steenrod operations can then be interpreted as generalized `self-intersection' classes, with respect to how the submanifolds dual to the cochains intersect themselves upon being thickened and perturbed. This is a rephrasing to the formula of Thom \cite{ThomPlongee}, that when representing $\alpha$ by some submanifold $M'$ with normal bundle $\nu(M')$ and embedding map $f$, that

$$Sq^i(\alpha) = f_*(w_i(\nu(M')))$$

So, this interpretation of the $\cup_i$ products can be thought of as extending this understanding to the cochain level and as a binary operation. And, we found that the same vector fields were useful in defining combinatorial $Pin$ structures, defining the GWGK Grassmann Integral for $Spin$ and $Pin^-$ manifolds, and elucidating the geometric content of the `quadartic refinement' property of the GWGK Integral. 

We conclude with some questions and possible further directions about our constructions and how the applicability of these vector fields may be extended.

\begin{enumerate}
    \item Can similar methods define higher cup products on other cell decompositions of manifolds?
    \item Can we extend this understanding to more general `mod p' Steenrod operations?
    \item In solving the intersection equations, we notice that there are often `lower-dimensional' cells that arise but don't contribute to the $\cup_i$ formulas. Do these have any geometric or cohomological significance?
    \item Can similar methods using Vandermonde and Schur determinants be used to elucidate cochain-level proofs of the Cartan relations and Ádem relations (as done recently in \cite{CochainLevelCartan,CochainLevelAdem}), or perhaps the Wu formula? 
    \item Does the GWGK Integral $\sigma(\alpha)$ have a natural definition with respect to windings in the $Pin^+$ case?
    \item If we `smooth out' our vector fields near the boundaries of the simplices, can we use them to reproduce the cochain-level descriptions of $w_i$ from \cite{GoldsteinTurner}, similarly to the description of $w_2$? For example, our construction of $w_2$ is closely related to a formula for $w_2$ derived in \cite{YuAnBosonization}. It would be interesting if similar formulas existed for other $w_i$.
    \item Our definition of $\sigma(\alpha)$ via a loop decomposition depended on a trivalent resolution of the dual 1-skeleton. Our choice was \textit{ad hoc} and chosen to reproduce the quadratic refinement formula. Is there a more principled reason for this choice? 
\end{enumerate}

\section*{Acknowledgements}
\addcontentsline{toc}{section}{Acknowledgements}
We thank Maissam Barkeshli and Danny Bulmash for related discussions. And we acknowledge TA appointments and the Condensed Matter Theory Center at UMD for support.

\appendix
\section{Proof of solutions of Eq(\ref{mainProofFinalForm})} \label{proofOfVandermonde}
Now, let's show that the expressions (\ref{mainSolZ},\ref{mainSolA}) solve Eq(\ref{mainProofFinalForm}). It'll be possible to verify once we multiply each side of the equation by the denominators of the solutions,

\begin{equation}\label{mainProofDenominatorDeterminant}
\det
\begin{pmatrix}
1      & b_{1 \lambda_0} & b_{2 \lambda_0} & \cdots & b_{m \lambda_0} \\
1      & b_{1 \lambda_1} & b_{2 \lambda_1} & \cdots & b_{m \lambda_1} \\
\vdots &    \vdots       &   \vdots          &        &   \vdots          \\
1      & b_{1 \lambda_m} & b_{2 \lambda_1} & \cdots & b_{m \lambda_m} \\
\end{pmatrix}.
\end{equation}

Let's first examine the RHS of Eq(\ref{mainProofFinalForm}) after being multiplied by this determinant (\ref{mainProofDenominatorDeterminant}), which we'll call $RH$. For each $k$, it will read:

\begin{equation}
\begin{split}
RH = n(n+1)\Bigg(&
\begin{vmatrix}
1      & b_{1 \lambda_0} & \cdots & b_{m+1,\lambda_0}     \\
\vdots &  \vdots         &        & \vdots                \\
1      & b_{1 \lambda_m} & \cdots & b_{m+1,\lambda_m}     \\
1      & b_{1 k}         & \cdots & b_{m+1,k} \\
\end{vmatrix} \\
&- \sum_{\ell = 1}^m (-1)^{m-\ell+1} b_{\ell k}
\begin{vmatrix}
1      & b_{\Tilde{\imath}_1 \lambda_0} & \cdots & b_{\Tilde{\imath}_{m+1},\lambda_0}     \\
\vdots & \vdots                         &        & \vdots                \\
1      & b_{\Tilde{\imath}_1 \lambda_m} & \cdots & b_{\Tilde{\imath}_{m+1},\lambda_m}     \\
\end{vmatrix}
- b_{m+1,k}
\begin{vmatrix}
1      & b_{1 \lambda_0} & \cdots & b_{m \lambda_0}     \\
\vdots &  \vdots         &        & \vdots                \\
1      & b_{1 \lambda_m} & \cdots & b_{m \lambda_m}     \\
\end{vmatrix} \Bigg)
\end{split}
\end{equation}

where for each $\ell$,  $\{\Tilde{\imath}_1,\dots,\Tilde{\imath}_m\} = \{1,\dots,m+1\} \textbackslash \{ \ell \}$. Note that the first term is from $Z_k \delta_{k \in \{\hat{\lambda}\}}$. Note this term is automatically zero if $k \notin \{\hat{\lambda}\}$, i.e. if $k \in \{\lambda_0,\dots\,\lambda_n\}$, then the matrix would have two rows that are equal, so the $\delta_{k \in \{\hat{\lambda}\}}$ part is satisfied. And, note that the second row of the above equation is the cofactor expansion of a matrix, so that we get

\begin{equation}
\begin{split}
RH &= n(n+1)\Bigg(
\begin{vmatrix}
1      & b_{1 \lambda_0} & \cdots & b_{m+1,\lambda_0}     \\
\vdots &  \vdots         &        & \vdots                \\
1      & b_{1 \lambda_m} & \cdots & b_{m+1,\lambda_m}     \\
1      & b_{1 k}         & \cdots & b_{m+1,k} \\
\end{vmatrix}
+
\begin{vmatrix}
1      &  b_{1 \lambda_0} & \cdots & b_{m+1,\lambda_0}     \\
\vdots &   \vdots         &        & \vdots                \\
1      &  b_{1 \lambda_m} & \cdots & b_{m+1,\lambda_m}     \\
0      & -b_{1 k}         & \cdots & -b_{m+1,k} \\
\end{vmatrix}
\Bigg) \\
& = n(n+1) 
\begin{vmatrix}
1      & b_{1 \lambda_0} & \cdots & b_{m+1,\lambda_0}     \\
\vdots &  \vdots         &        & \vdots                \\
1      & b_{1 \lambda_m} & \cdots & b_{m+1,\lambda_m}     \\
1      & 0               & \cdots & 0 \\
\end{vmatrix} \\
&= n(n+1)
\begin{vmatrix}
b_{1 \lambda_0} & \cdots & b_{m+1,\lambda_0}     \\
\vdots          &        & \vdots                \\
b_{1 \lambda_m} & \cdots & b_{m+1,\lambda_m}     \\
\end{vmatrix}
\end{split}
\end{equation}

Now we should verify that the LHS of Eq(\ref{mainProofFinalForm}) is indeed the same quantity. Let's give the name $LH$ to the determinant (\ref{mainProofDenominatorDeterminant}) times the LHS of Eq(\ref{mainProofFinalForm}). We'll have

\begin{equation}
\begin{split}
LH &= n \Bigg( \sum_{\hat{\lambda}}
\begin{vmatrix}
1      & b_{1 \lambda_0}     & \cdots & b_{m+1,\lambda_0}     \\
\vdots &  \vdots             &        & \vdots                \\
1      & b_{1 \lambda_m}     & \cdots & b_{m+1,\lambda_m}     \\
1      & b_{1 \hat{\lambda}} & \cdots & b_{m+1,\hat{\lambda}} \\
\end{vmatrix} \\
&- \sum_{\ell = 1}^m (-1)^{m-\ell+1} B_\ell
\begin{vmatrix}
1      & b_{\Tilde{\imath}_1 \lambda_0} & \cdots & b_{\Tilde{\imath}_{m+1},\lambda_0}     \\
\vdots & \vdots                         &        & \vdots                \\
1      & b_{\Tilde{\imath}_1 \lambda_m} & \cdots & b_{\Tilde{\imath}_{m+1},\lambda_m}     \\
\end{vmatrix}
- B_{m+1}
\begin{vmatrix}
1      & b_{1 \lambda_0} & \cdots & b_{m \lambda_0}     \\
\vdots &  \vdots         &        & \vdots                \\
1      & b_{1 \lambda_m} & \cdots & b_{m \lambda_m}     \\
\end{vmatrix} \Bigg)
\end{split}
\end{equation}

We can do two things. First, we can again see that the second row of the above expression is the cofactor expansion of a matrix. And, we can also replace the sum $\sum_{\hat{\lambda}}$ with a sum $\sum_{\ell=0}^n$. This is because if $\ell \in \{\lambda_0,\dots,\lambda_m\}$, then the corresponding determinant will have two rows that are equal, so it will be zero. So we can freely add them without changing the sum. So, we'll have

\begin{equation}
\begin{split}
LH &= n \Bigg( \sum_{\ell=0}^n 
\begin{vmatrix}
1      & b_{1 \lambda_0} & \cdots & b_{m+1,\lambda_0}     \\
\vdots &  \vdots         &        & \vdots                \\
1      & b_{1 \lambda_m} & \cdots & b_{m+1,\lambda_m}     \\
1      & b_{1 \ell}      & \cdots & b_{m+1,\ell} \\
\end{vmatrix}
+
\begin{vmatrix}
1      &  b_{1 \lambda_0} & \cdots & b_{m+1,\lambda_0}     \\
\vdots &   \vdots         &        & \vdots                \\
1      &  b_{1 \lambda_m} & \cdots & b_{m+1,\lambda_m}     \\
0      & -B_{1}         & \cdots & -B_{m+1} \\
\end{vmatrix} \Bigg) \\
&= 
n \Bigg( 
\begin{vmatrix}
1      & b_{1 \lambda_0} & \cdots & b_{m+1,\lambda_0}     \\
\vdots &  \vdots         &        & \vdots                \\
1      & b_{1 \lambda_m} & \cdots & b_{m+1,\lambda_m}     \\
n+1    & B_1             & \cdots & B_{m+1} \\
\end{vmatrix} 
+
\begin{vmatrix}
1      &  b_{1 \lambda_0} & \cdots & b_{m+1,\lambda_0}     \\
\vdots &   \vdots         &        & \vdots                \\
1      &  b_{1 \lambda_m} & \cdots & b_{m+1,\lambda_m}     \\
0      & -B_{1}         & \cdots & -B_{m+1} \\
\end{vmatrix}  \Bigg) \\
&= n 
\begin{vmatrix}
1      & b_{1 \lambda_0} & \cdots & b_{m+1,\lambda_0}     \\
\vdots &  \vdots         &        & \vdots                \\
1      & b_{1 \lambda_m} & \cdots & b_{m+1,\lambda_m}     \\
n+1    & 0               & \cdots & 0 \\
\end{vmatrix} \\
&= n(n+1)
\begin{vmatrix}
b_{1 \lambda_0} & \cdots & b_{m+1,\lambda_0}     \\
\vdots          &        & \vdots                \\
b_{1 \lambda_m} & \cdots & b_{m+1,\lambda_m}     \\
\end{vmatrix} \\
&= RH
\end{split}
\end{equation}

So, we've just shown that LHS and RHS of Eq(\ref{mainProofFinalForm}) are equal substituting the expressions (\ref{mainSolZ},\ref{mainSolA}). And given generic $\Vec{b}$ such that any $n$ of the $\{\Vec{b}_1,\dots,\Vec{b}_{m+1}, c-f_0,\dots,c-f_{n}\}$ are linearly independent, this is the only solution.

\section{The $Spin$ and $Pin$ groups} \label{spinPinAppendix}
Let's review some basic properties and definitions of the $Spin$ and $Pin^\pm$ groups. 

\subsection{$Spin(n)$}
Recall that $Spin(n)$ is the connected double-cover of the group $SO(n)$ given by an exact sequence $\Z_2 \xhookrightarrow{} Spin(n) \twoheadrightarrow SO(n)$, and that $Spin(n)$ and $SO(n)$ share the same Lie Algebra.

The Lie Algebra of $SO(n)$ is generated by the matrices $\{ M^{i j} \}_{1 \le i < j \le n}$ with matrix elements

\begin{equation}
(M^{i j})_{\mu \nu} := \delta^i_\mu \delta^j_\nu - \delta^i_\nu \delta^j_\mu
\end{equation} 

satisfying the commutation relations

\begin{equation}
[M^{i j},M^{\alpha \beta}] = \delta^{j \alpha}M^{i \beta} - \delta^{i \alpha}M^{j \beta} - \delta^{j \beta}M^{i \alpha} + \delta^{i \beta}M^{j \alpha}
\end{equation}

Before constructing $Spin(n)$, we need a fundamental representation of the Clifford algebra $\gamma^i \gamma^j + \gamma^i \gamma^j = 2 \delta_{i j} \mathbbm{1}$. This representation will mean that all the $\gamma^i$ are $2^{\floor{d/2}} \times 2^{\floor{d/2}}$ matrices and have equal numbers of $\pm 1$ eigenvalues. Given this, one can show that the matrices $-\frac{1}{4}[\gamma^i,\gamma^j]$ define the same Lie Algebra commutation relations as the $M^{i j}$. From here, the group $Spin(n)$ can be defined as the set:

\begin{equation}
Spin(n) = \Big\{\exp(-\frac{1}{4} \sum_{1 \le i < j \le n} a_{ij} [\gamma^i,\gamma^j]) \Big| a_{ij} \in \R \Big\}
\end{equation}

This will be a double-cover of $SO(n)$, since we can check that $\exp(2 \pi M^{i j}) = \mathbbm{1}$, whereas  $\exp(-\frac{2\pi}{4} [\gamma^i,\gamma^j]) = -\mathbbm{1}$.

\subsection{$Pin^{\pm}(n)$}
Now let's describe the construction of the $Pin^\pm$ groups. We'll follow the presentation of \cite{TachikawaYonekura}, who note that $Pin^\pm$ fit into a natural length-4 periodic structure of $\Z_2$ extensions of $Spin(n)$. The main idea is that $Pin^{\pm}(n)$ are double-covers of $Spin(n)$ that include $Spin(n)$ as a subgroup and include a disconnected part that's topologically the same as $Spin(n)$ and can be reached from $Spin(n)$ by some reflection $R$. $Pin^-(n)$ satisfies $R^2 = -\mathbbm{1}$ and $Pin^+(n)$ has $R^2 = +\mathbbm{1}$.

\subsubsection{$Pin^-(n)$}
Let's describe first how to construct $Pin^-(n)$. We can view $Pin^-(n)$ as the subgroup of $Spin(n+1)$ such that projecting to $SO(n+1)$ acts on $\R^{n+1} = \R^n \oplus \R^1$ as $\pm 1$ on the last $\R^1$ factor. So, given the sequence $1 \to \Z_2 \xrightarrow{i} Spin(n+1) \xrightarrow{p} SO(n+1) \to 1$, $Pin^-(n) \subset Spin(n+1)$ can be written as a preimage of a subgroup of $SO(n)$ as:

\begin{equation}
Pin^-(n) = p^{-1} 
\Big(\big\{ \left(\begin{array}{@{}c|c}
W & 0 \\ \hline
0 & \pm 1
\end{array}\right) \big| W \in O(n) \big\}\Big) 
\end{equation}

Note that in this preimage, the matrix acting on the first $\R^n$ factor will have determinant $-1$ if the matrix acts as $-1$ on the $\R^1$ factor, which is why we say $Pin^-$ acts as a reflection. So, our goal will be to express $Pin^-$ in terms of the $\gamma_1,\dots,\gamma_{n+1}$ used to generate $Spin(n+1)$. Note that the matrix $\exp(\frac{\pi}{2}\gamma_i \gamma_j) = \gamma_i \gamma_j $ projected onto $SO(n)$ acts as $180^\circ$ rotation in the $(i,j)$ plane, which sends $x_k \to x_k$ if $k \neq i,j$ and $x_k \to -x_k$ if $k = i,j$. This means that the matrix $\gamma_n \gamma_{n+1}$ will act on the last $\R^1$ factor as $-1$. Also note that $(\gamma_n \gamma_{n+1})^2 = -1$, which means we can identify a reflection $R$ with $\gamma_n \gamma_{n+1}$ with $R^2 = -1$. So, we can write $Pin^-(n)$ in terms of $Spin(n)$ and allowing multiplication by this additional matrix.

\begin{equation}
Pin^-(n) = \Big\{(\gamma_n \gamma_{n+1})^{b} \exp(-\frac{1}{4} \sum_{1 \le i < j \le n} a_{ij} [\gamma^i,\gamma^j]) \Big| a_{ij} \in \R, b \in \{0,1\} \Big\}
\end{equation}

This view of $Pin^-(n)$ is intuition for why a $Pin^-$ structure on a manifold $M$ can be viewed as a $Spin$ structure on $TM \oplus \det(TM)$. 

\subsubsection{$Pin^+(n)$}
We can similarly express the $Pin^+$ groups. But, $Pin^+(n)$ will be viewed instead as a suitable subgroup of $Spin(n+3)$. We'll have $Pin^+(n)$ is the the preimage that projects to $SO(n+3)$ acting on $\R^{n+3} = \R^n \oplus \R^3$ acting as $\pm \mathbbm{1}_{3 \times 3}$ on the last $\R^3$ factor. So given the sequence $1 \to \Z_2 \xrightarrow{i} Spin(n+3) \xrightarrow{p} SO(n+3) \to 1$, we'll have:

\begin{equation}
Pin^+(n) = p^{-1} 
\Big(\big\{ \left(\begin{array}{@{}c|c}
W & 0 \\ \hline
0 & \pm \mathbbm{1}_{3 \times 3}
\end{array}\right) \big| W \in O(n) \big\}\Big) 
\end{equation}

And, we can similarly express a reflection matrix $R$ as doing a $180^\circ$ rotation in each of the $(n,n+1),(n,n+2),(n,n+3)$ planes, which would give $R = (\gamma_n \gamma_{n+1})(\gamma_n \gamma_{n+2})(\gamma_n \gamma_{n+3}) = -\gamma_n \gamma_{n+1} \gamma_{n+2} \gamma_{n+3}$. And, we can check that this $R$ satisfies $R^2 = +\mathbbm{1}$. So we can write $Pin^+(n)$ as

\begin{equation}
Pin^+(n) = \Big\{(\gamma_n \gamma_{n+1} \gamma_{n+2} \gamma_{n+3})^{b} \exp(-\frac{1}{4} \sum_{1 \le i < j \le n} a_{ij} [\gamma^i,\gamma^j]) \Big| a_{ij} \in \R, b \in \{0,1\} \Big\}
\end{equation}

Similarly, this view of $Pin^+(n)$ is intuition for why a $Pin^+$ structure on a manifold $M$ can be viewed as a $Spin$ structure on $TM \oplus 3 \det(TM)$. 

\section{Framings and Windings on Higher-Dimensional Simplices} \label{windingsInHigherDims}
Now, we should verify and clarify some aspects of our constructions from the Section \ref{theDifferentFramings}. We'll start by reiterating the constructions we gave there. Then, we'll describe the computation of the windings along each segment. And finally, we'll describe how the choice of trivalent resolution as shown in Fig(\ref{fig:trivalentResAndWinding}) allows us to verify that quadratic refinement holds. 

\subsection{The Framings}
First, let's describe how the framings look within each $d$-simplex. Then we'll describe how to look at the vectors as we go across a $(d-1)$-simplex into an adjacent $d$-simplex.

\subsubsection{Within a $d$-simplex}
Let's say we're given some path within a $d$-simplex which enters at the dual edge $\ihat$ and exits at $\jhat$. Then, we'll define four points, $(1),(2),(3),(4)$, along the path as depicted in Fig(\ref{fig:iHatTojHatPath}). $(1)$ is some point near the end of the edge $\ihat_{ed}$ near the $(d-1)$-simplex $\ihat_\Delta$. $(2)$ is some point on $\ihat$ near the center of the $d$-simplex. $(3)$ is some point on $\jhat$ that's near the center. And $(4)$ is on the end $\jhat_{ed}$ near $\jhat_\Delta$.

\begin{figure}[h!]
  \centering
  \includegraphics[width=0.6\linewidth]{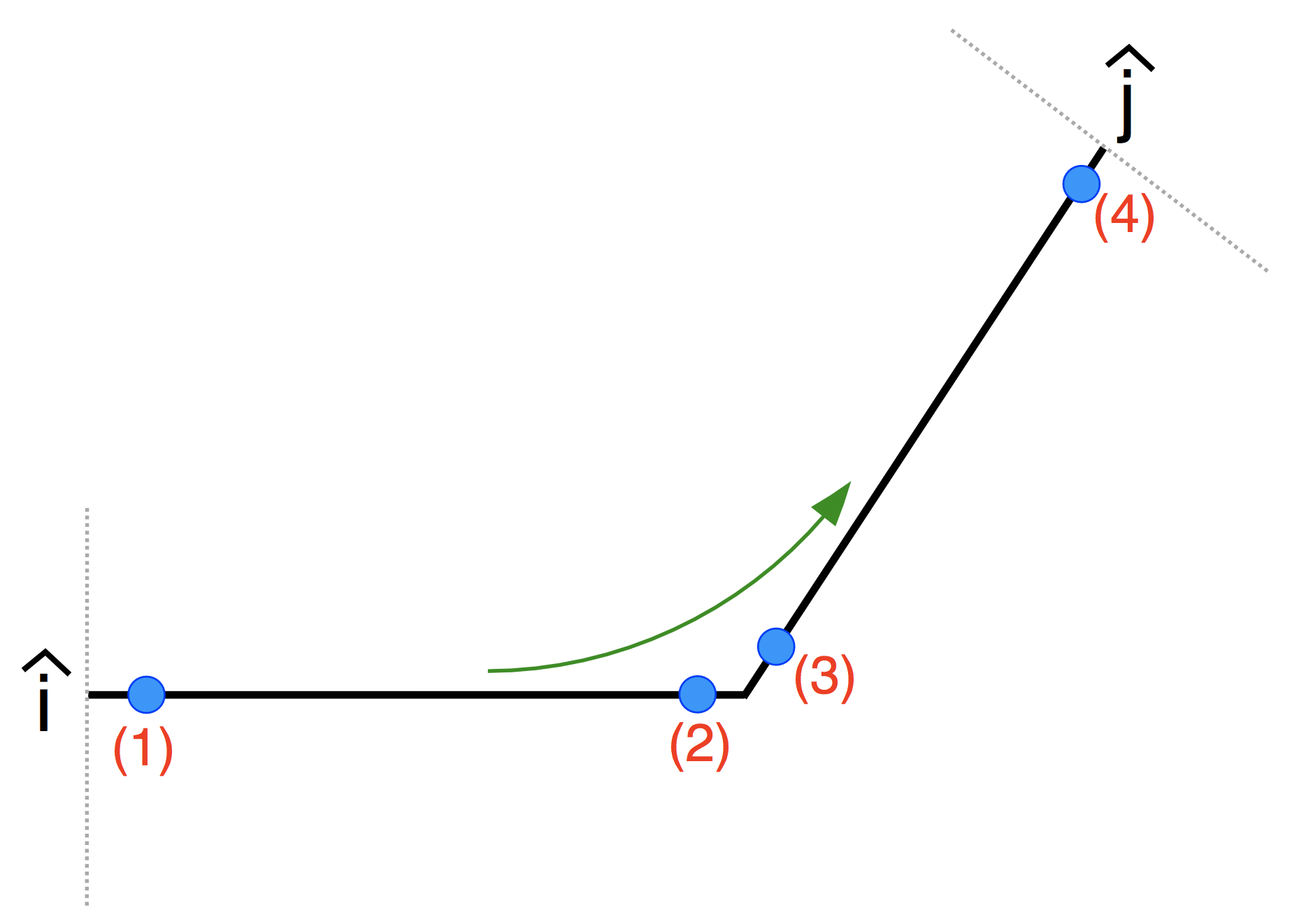}
  \caption{Different points along the path $\ihat \to \jhat$ within a $d$-simplex. The path traverses as $\ihat_\Delta \to (1) \to (2) \to (3) \to (4) \to \jhat_\Delta$. Some of the vectors in each of the relevant framings in the text change as we traverse some of these parts of the path.}
  \label{fig:iHatTojHatPath}
\end{figure}

We want to specify how each of the different framings look like along each part of the path $\ihat \to \jhat$. The general features are as follows. The parts $(1) \to \ihat_\Delta$ and $(4) \to \jhat_\Delta$ project out the $i, j$ respectively components of all the vectors, \textit{except} the one that runs along the 1-skeleton. Along the rest of the path, the shared framing's vectors will remain constant and the only vectors that can change are some of the extra two vectors of the background and tangent framings. The part $(2) \to (3)$ is meant to change the direction of the vector going along the 1-skeleton for both framings and the other vector for the tangent framing.

Let's start with describing this procedure for the background framing. Let's call the matrix corresponding to the framing $F^\text{bkgd}$. Let's illustrate the matrices here for a `$+$' simplex. We'll have along the path $\ihat \to \jhat$ that:

\begin{equation}
F^\text{bkgd} =
  \left(\begin{array}{@{}ccccccc}
    1      & \cdots & 1                 & 1                                  & 1                     & \cdots & 1                     \\
    1      & \cdots & \frac{1}{i}       & t \frac{1}{i+1}                    & \frac{1}{i+2}         & \cdots & \frac{1}{d+1}         \\
    \vdots &        & \vdots            &                                    & \vdots                &        & \vdots                \\
    1      & \cdots & \frac{1}{i^{d-2}} & t \frac{1}{(i+1)^{d-2}}            & \frac{1}{(i+2)^{d-2}} & \cdots & \frac{1}{(d+1)^{d-2}} \\ \hline
    1      & \cdots & \frac{1}{i^{d-1}} & t \frac{1}{(i+1)^{d-1}}            & \frac{1}{(i+2)^{d-1}} & \cdots & \frac{1}{(d+1)^{d-1}} \\ 
    0      & \cdots & 0                 & \frac{(-1)^{i+d+{d \choose 2}}}{n} & 0                     & \cdots & 0                     \\
  \end{array}\right) \text{ for } \ihat_\Delta \to (1)
\end{equation}

\begin{equation}
F^\text{bkgd} =
  \left(\begin{array}{@{}ccccccc}
    1      & \cdots & 1                 & 1                                  & 1                     & \cdots & 1                     \\
    1      & \cdots & \frac{1}{i}       & \frac{1}{i+1}                      & \frac{1}{i+2}         & \cdots & \frac{1}{d+1}         \\
    \vdots &        & \vdots            &                                    & \vdots                &        & \vdots                \\
    1      & \cdots & \frac{1}{i^{d-2}} & \frac{1}{(i+1)^{d-2}}              & \frac{1}{(i+2)^{d-2}} & \cdots & \frac{1}{(d+1)^{d-2}} \\ \hline
    1      & \cdots & \frac{1}{i^{d-1}} & \frac{1}{(i+1)^{d-1}}              & \frac{1}{(i+2)^{d-1}} & \cdots & \frac{1}{(d+1)^{d-1}} \\ 
    0      & \cdots & 0                 & \frac{(-1)^{i+d+{d \choose 2}}}{n} & 0                     & \cdots & 0                     \\
  \end{array}\right) \text{ for } (1) \to (2)
\end{equation}

\begin{equation}
F^\text{bkgd} =
  \left(\begin{array}{@{}ccccccccccc}
    1      & \cdots & 1                 & 1                                        & 1                     & \cdots & 1                 & 1                                    & 1                     & \cdots & 1                 \\
    
    1      & \cdots & \frac{1}{i}       & \frac{1}{i+1}                            & \frac{1}{i+2}         & \cdots & \frac{1}{j}       & \frac{1}{j+1}                        & \frac{1}{j+2}         & \cdots & \frac{1}{d+1}     \\
    
    \vdots &        & \vdots            &                                          & \vdots                &        & \vdots            &                                      & \vdots                &        & \vdots            \\
    
    1      & \cdots & \frac{1}{i^{d-2}} & \frac{1}{(i+1)^{d-2}}                    & \frac{1}{(i+2)^{d-2}} & \cdots & \frac{1}{j^{d-2}} & \frac{1}{(j+1)^{d-2}}                & \frac{1}{(j+2)^{d-2}} & \cdots & \frac{1}{(d+1)^{d-2}}  \\ \hline
    
    1      & \cdots & \frac{1}{i^{d-1}} & \frac{1}{(i+1)^{d-1}}                    & \frac{1}{(i+2)^{d-1}} & \cdots & \frac{1}{j^{d-1}} & \frac{1}{(j+1)^{d-1}}                & \frac{1}{(j+2)^{d-1}} & \cdots & \frac{1}{(d+1)^{d-1}}  \\
    
    0      & \cdots & 0                 & (1-t) \frac{(-1)^{i+d+{d \choose 2}}}{n} & 0                     & \cdots & 0                 & t \frac{(-1)^{j+d+{d \choose 2}}}{n} & 0                     & \cdots & 0                      \\
    
  \end{array}\right)
  \text{ for } (2) \to (3)
\end{equation}

\begin{equation}
F^\text{bkgd} =
  \left(\begin{array}{@{}ccccccc}
    1      & \cdots & 1                 & 1                                  & 1                     & \cdots & 1                     \\
    1      & \cdots & \frac{1}{j}       & \frac{1}{j+1}                      & \frac{1}{j+2}         & \cdots & \frac{1}{d+1}         \\
    \vdots &        & \vdots            &                                    & \vdots                &        & \vdots                \\
    1      & \cdots & \frac{1}{j^{d-2}} & \frac{1}{(i+1)^{d-2}}              & \frac{1}{(i+2)^{d-2}} & \cdots & \frac{1}{(d+1)^{d-2}} \\ \hline
    1      & \cdots & \frac{1}{j^{d-1}} & \frac{1}{(j+1)^{d-1}}              & \frac{1}{(j+2)^{d-1}} & \cdots & \frac{1}{(d+1)^{d-1}} \\ 
    0      & \cdots & 0                 & \frac{(-1)^{j+d+{d \choose 2}}}{n} & 0                     & \cdots & 0                     \\
  \end{array}\right) \text{ for } (3) \to (4)
\end{equation}

\begin{equation}
F^\text{bkgd} =
  \left(\begin{array}{@{}ccccccc}
    1      & \cdots & 1                 & 1                                  & 1                     & \cdots & 1                     \\
    1      & \cdots & \frac{1}{j}       & (1-t) \frac{1}{j+1}                & \frac{1}{j+2}         & \cdots & \frac{1}{d+1}         \\
    \vdots &        & \vdots            &                                    & \vdots                &        & \vdots                \\
    1      & \cdots & \frac{1}{j^{d-2}} & (1-t) \frac{1}{(j+1)^{d-2}}        & \frac{1}{(j+2)^{d-2}} & \cdots & \frac{1}{(d+1)^{d-2}} \\ \hline
    1      & \cdots & \frac{1}{j^{d-1}} & (1-t) \frac{1}{(j+1)^{d-1}}        & \frac{1}{(j+2)^{d-1}} & \cdots & \frac{1}{(d+1)^{d-1}} \\ 
    0      & \cdots & 0                 & \frac{(-1)^{j+d+{d \choose 2}}}{n} & 0                     & \cdots & 0                     \\
  \end{array}\right) \text{ for } (4) \to \jhat_\Delta
\end{equation}

For a `$-$' simplex, the first $(d-1)$ the rows stay the same and the last row is multplied by $-1$. And, note that while we rendered these matrices looking like $i < j$, we could also have $i > j$. The factors $(-1)^{i +d + {d \choose 2}}$,$(-1)^{j+d+{d \choose 2}}$ factors are there in the last row to make sure that the determinant is positive. 

In all the matrices above, the top $(d-2)$ rows above the bar represent the shared framing's vectors, and the bottom 2 rows represent the vectors unique to the background framing. And technically, the vectors listed above are not parallel to $\Delta^d \in \R^{d+1}$. We choose to write the vectors without projecting out the $(1,\dots,1)$ component for convenience, and because including or projecting away that part won't affect the windings. And in some of the matrices, there was a parameter `$t$' ranging in $[0,1]$ we included, which represents some parameterization of the segment. For example, in the segments $\ihat \to (1)$ and $(4) \to \jhat$, $t$ serves to project out the $i$ or $j$ component as we move closer to $\ihat_\Delta, \jhat_\Delta$. A useful fact is that all square submatrices of the same size of a Vandermonde matrix have the same sign determinant if the entries are in ascending or descending order and are the same sign. \footnote{This is because the a submatrix of a Vandermonde matrix has a determinant which is a Schur polynomial of the corresponding columns' elements times the Vandermonde determinant of the entries corresponding to the columns of the submatrix. A Schur polynomial of any positive arguments is positive. And if the entries of the matrix are in ascending or descending order, then each submatrix has a corresponding Vandermonde determinant that's the same sign.} From this, it's a simple exercise to show that this family of matrices always has a nonzero determinant along the whole path. 

Now, we'll write the analogous matrices for the tangent framing, which we'll denote $F^\text{tang}$. Here, we'll give the matrices for a `$+$' simplex and where the orientation vectors are in agreement between the background and tangent frames. The cases of a `$-$' simplex or when the orientation vectors disagree can be obtained by multiplying the $(d-1)$-th vector by a $-1$ factor for \textit{each} of those changes. 

\begin{equation}
F^\text{tang} = 
  \left(\begin{array}{@{}ccccccccccc}
    1      & \cdots & 1                 & 1                       & 1                     & \cdots & 1                 & 1                                                       & 1                     & \cdots & 1                      \\
    
    1      & \cdots & \frac{1}{i}       & t \frac{1}{i+1}         & \frac{1}{i+2}         & \cdots & \frac{1}{j}       & \frac{1}{j+1}                                           & \frac{1}{j+2}         & \cdots & \frac{1}{d+1}          \\
    
    \vdots &        & \vdots            &                         & \vdots                &        & \vdots            &                                                         & \vdots                &        & \vdots                 \\
    
    1      & \cdots & \frac{1}{i^{d-2}} & t \frac{1}{(i+1)^{d-2}} & \frac{1}{(i+2)^{d-2}} & \cdots & \frac{1}{j^{d-2}} & \frac{1}{(j+1)^{d-2}}                                   & \frac{1}{(j+2)^{d-2}} & \cdots & \frac{1}{(d+1)^{d-2}}  \\ \hline
    
    0      & \cdots & 0                 & 0                       & 0                     & \cdots & 0                 & \frac{(-1)^{i+j + {d-1 \choose 2} + \delta_{i > j}}}{n} & 0                     & \cdots & 0                       \\
    
    0      & \cdots & 0                 & \frac{1}{n}             & 0                     & \cdots & 0                 & 0                                                       & 0                     & \cdots & 0                       \\
    
  \end{array}\right) \text{ for } \ihat_\Delta \to (1)
\end{equation} 

\begin{equation}
F^\text{tang} = 
  \left(\begin{array}{@{}ccccccccccc}
    1      & \cdots & 1                 & 1                      & 1                     & \cdots & 1                 & 1                                                       & 1                     & \cdots & 1                      \\
    
    1      & \cdots & \frac{1}{i}       & \frac{1}{i+1}          & \frac{1}{i+2}         & \cdots & \frac{1}{j}       & \frac{1}{j+1}                                           & \frac{1}{j+2}         & \cdots & \frac{1}{d+1}          \\
    
    \vdots &        & \vdots            &                        & \vdots                &        & \vdots            &                                                         & \vdots                &        & \vdots                 \\
    
    1      & \cdots & \frac{1}{i^{d-2}} & \frac{1}{(i+1)^{d-2}}  & \frac{1}{(i+2)^{d-2}} & \cdots & \frac{1}{j^{d-2}} & \frac{1}{(j+1)^{d-2}}                                   & \frac{1}{(j+2)^{d-2}} & \cdots & \frac{1}{(d+1)^{d-2}}  \\ \hline
    
    0      & \cdots & 0                 & 0                      & 0                     & \cdots & 0                 & \frac{(-1)^{i+j + {d-1 \choose 2} + \delta_{i > j}}}{n} & 0                     & \cdots & 0                       \\
    
    0      & \cdots & 0                 & \frac{1}{n}            & 0                     & \cdots & 0                 & 0                                                       & 0                     & \cdots & 0                       \\
    
  \end{array}\right) \text{ for } (1) \to (2)
\end{equation} 

\begin{equation}
F^\text{tang} = 
  \left(\begin{array}{@{}ccccccccccc}
    1      & \cdots & 1                 & 1                                                         & 1                     & \cdots & 1                 & 1                                                             & 1                     & \cdots & 1                      \\
    
    1      & \cdots & \frac{1}{i}       & \frac{1}{i+1}                                             & \frac{1}{i+2}         & \cdots & \frac{1}{j}       & \frac{1}{j+1}                                                 & \frac{1}{j+2}         & \cdots & \frac{1}{d+1}          \\
    
    \vdots &        & \vdots            &                                                           & \vdots                &        & \vdots            &                                                               & \vdots                 &        & \vdots                 \\
    
    1      & \cdots & \frac{1}{i^{d-2}} & \frac{1}{(i+1)^{d-2}}                                     & \frac{1}{(i+2)^{d-2}} & \cdots & \frac{1}{j^{d-2}} & \frac{1}{(j+1)^{d-2}}                                         & \frac{1}{(j+2)^{d-2}} & \cdots & \frac{1}{(d+1)^{d-2}}  \\ \hline
    
    0      & \cdots & 0                 & t \frac{(-1)^{i+j + {d-1 \choose 2} + \delta_{i > j}}}{n} & 0                     & \cdots & 0                 & (1-t) \frac{(-1)^{i+j + {d-1 \choose 2} + \delta_{i > j}}}{n} & 0                     & \cdots & 0                       \\
    
    0      & \cdots & 0                 & (1-t) \frac{1}{n}                                         & 0                     & \cdots & 0                 & -t \frac{1}{n}                                                & 0                     & \cdots & 0                       \\
    
  \end{array}\right)
\end{equation} 
\quad\quad\quad\quad\quad for $(2) \to (3)$
 
\begin{equation}
F^\text{tang} = 
  \left(\begin{array}{@{}ccccccccccc}
    1      & \cdots & 1                 & 1                                                       & 1                     & \cdots & 1                 & 1                     & 1                     & \cdots & 1                      \\
    
    1      & \cdots & \frac{1}{i}       & \frac{1}{i+1}                                           & \frac{1}{i+2}         & \cdots & \frac{1}{j}       & \frac{1}{j+1}         & \frac{1}{j+2}         & \cdots & \frac{1}{d+1}          \\
    
    \vdots &        & \vdots            &                                                         & \vdots                &        & \vdots            &                       & \vdots                &        & \vdots                 \\
    
    1      & \cdots & \frac{1}{i^{d-2}} & \frac{1}{(i+1)^{d-2}}                                   & \frac{1}{(i+2)^{d-2}} & \cdots & \frac{1}{j^{d-2}} & \frac{1}{(j+1)^{d-2}} & \frac{1}{(j+2)^{d-2}} & \cdots & \frac{1}{(d+1)^{d-2}}  \\ \hline
    
    0      & \cdots & 0                 & \frac{(-1)^{i+j + {d-1 \choose 2} + \delta_{i > j}}}{n} & 0                     & \cdots & 0                 & 0                     & 0                     & \cdots & 0                      \\
    
    0      & \cdots & 0                 & 0                                                       & 0                     & \cdots & 0                 & -\frac{1}{n}          & 0                     & \cdots & 0                      \\
    
  \end{array}\right) \text{ for } (3) \to (4)
\end{equation} 

\begin{equation}
F^\text{tang} = 
  \left(\begin{array}{@{}ccccccccccc}
    1      & \cdots & 1                 & 1                                                       & 1                     & \cdots & 1                 & 1                           & 1                     & \cdots & 1                      \\
    
    1      & \cdots & \frac{1}{i}       & \frac{1}{i+1}                                           & \frac{1}{i+2}         & \cdots & \frac{1}{j}       & (1-t) \frac{1}{j+1}         & \frac{1}{j+2}         & \cdots & \frac{1}{d+1}          \\
    
    \vdots &        & \vdots            &                                                         & \vdots                &        & \vdots            &                             & \vdots                &        & \vdots                 \\
    
    1      & \cdots & \frac{1}{i^{d-2}} & \frac{1}{(i+1)^{d-2}}                                   & \frac{1}{(i+2)^{d-2}} & \cdots & \frac{1}{j^{d-2}} & (1-t) \frac{1}{(j+1)^{d-2}} & \frac{1}{(j+2)^{d-2}} & \cdots & \frac{1}{(d+1)^{d-2}}  \\ \hline
    
    0      & \cdots & 0                 & \frac{(-1)^{i+j + {d-1 \choose 2} + \delta_{i > j}}}{n} & 0                     & \cdots & 0                 & 0                           & 0                     & \cdots & 0                      \\
    
    0      & \cdots & 0                 & 0                                                       & 0                     & \cdots & 0                 & -\frac{1}{n}                & 0                     & \cdots & 0                      \\
    
  \end{array}\right) \text{ for } (4) \to \jhat_\Delta
\end{equation}

Here, $\delta_{i>j}$ is 1 if $i>j$ and 0 if $i<j$. One can check that this factor of $(-1)^{i+j + {d-1 \choose 2} + \delta_{i > j}}$ guarantees that the sign of the determinant of $F^\text{tang}$ matches that of $F^\text{bkgd}$. But for the case that the orientation vectors disagree, we want there to be a sign difference in their determinants, which is why there's an extra factor of $-1$ on the $(d-1)$th vector if the orientation vectors disagree. 

\subsubsection{Going between different $d$-simplices}
Let's now handle the case of going between different $d$-simplices. We'll only describe here the case when we don't pass across a representative of $w_1$, since the constructions in Section \ref{sec:TMdetTMFraming} apply straightforwardly to that case. 

First, we should note a few things. Say we are going from one $d$-simplex to another, $\Delta_1 \to \Delta_2$ along the boundary $(d-1)$-simplex $\Delta_\partial$. The labeling of the coordinates $\{0,\dots,d\}$ will be different on $\Delta_2$ versus the $\Delta_1$. And in general, the coordinates of $\Delta_2$ will be given by some permutation $s \in Perm(\{0,\dots,d\})$. Let's say we enter the first simplex on the edge $\ihat$ and exit on $\jhat$ (i.e. traversing $\ihat \to \jhat$), where `$i,j$' refers to the coordinate on the first simplex. Then, we can note that the branching structure on $\Delta_\partial$ induces the same partial ordering of $\{0,\dots,d\} \textbackslash \{j\}$ (as labeled by $\Delta_1$) on $\Delta_2$. This means the permutation $s$ will satisfy $s(\ell_1) < s(\ell_2)$ for each $\{\ell_1 < \ell_2\} \subset \{0,\dots,d\} \textbackslash \{j\}$.

With this in mind, we'll express the transition of the framing across the $\Delta_\partial$ in the coordinates of $\Delta_1$ while using the permutation $s$. In these coordinates, the last vector of all the framings, which is the one that runs `along' the 1-skeleton, will remain unchanged across the transition. Let's start with the background framing, which is simpler because it doesn't depend on which path is taken. We can start and denote $g_t[m] := (1-t) m + t s(m)$ for $m \in \{0,\dots,d\}$.

\begin{equation}
F^\text{bkgd} = 
  \left(\begin{array}{@{}ccccccc}
    1      & \cdots & 1                      & 1                 & 1                        & \cdots & 1                        \\
    1      & \cdots & \frac{1}{g_t[j]}       & 0                 & \frac{1}{g_t[j+2]}       & \cdots & \frac{1}{g_t[d+1]}       \\
    \vdots &        & \vdots                 & 0                 & \vdots                   &        & \vdots                   \\
    1      & \cdots & \frac{1}{g_t[j]^{d-2}} & 0                 & \frac{1}{g_t[j+2]^{d-2}} & \cdots & \frac{1}{g_t[d+1]^{d-2}} \\ \hline
    1      & \cdots & \frac{1}{g_t[j]^{d-1}} & 0                 & \frac{1}{g_t[j+2]^{d-1}} & \cdots & \frac{1}{g_t[d+1]^{d-1}} \\ 
    0      & \cdots & 0                      & \frac{(-1)^j}{n}  & 0                        & \cdots & 0                        \\
  \end{array}\right)
\end{equation}

Here, $t: 0 \to 1$ is again a parameter that represents moving across $\Delta_\partial$. And it's simple to see that this determinant is always nonzero, since $s(\ell_1) < s(\ell_2)$ for $\{\ell_1 < \ell_2\} \subset \{0,\dots,d\} \textbackslash \{j\}$.

For the tangent framing, there's only one vector we have to worry about, since it'll share the shared framing listed in $F^\text{bkgd}$, and the last vector will be unchanged. The only vector that we need to worry about is the $(n-1)$th vector $v^\text{tang}_{n-1}$. Say that on $\Delta_2$ that the curve traverses from $\widehat{s(j)} \to \widehat{s(k)}$. If $i\neq k$, we'll say it changes across $\Delta_\partial$ (using the coordinates on $\Delta_1$) as:

\begin{equation} \label{otherVectAcrossSimplices}
v^\text{tang}_{n-1} = (0, \cdots, 0, \underbrace{(1-t) \frac{(-1)^{i+j + {d-1 \choose 2} + \delta_{i > j}}}{n}}_{i\text{th component}}, 0, \cdots, \underbrace{0}_{j\text{th component}}, \cdots, 0, \underbrace{t \frac{(-1)^{k+j + {d-1 \choose 2} + \delta_{k > j}}}{n}}_{k\text{th component}}, 0, \cdots, 0)
\end{equation}

And if $i=k$, then this vector's components don't change. And, it's again simple to check that this vector keeps the matrix $F^\text{tang}$ nondegenerate the whole way. 

\subsection{Windings}
Now, let's compute the windings and verify that the total winding can be computed by breaking up the windings along each segment to match what we said in Fig(\ref{fig:trivalentResAndWinding}). We're explicitly consider the case of orientable manifolds where the extra `orientation vectors' from $\det(TM)$ will match for both framings, and at the end say how we can compute things for the other cases. And we'll first describe everything within in a `$+$' simplex.

First, let's think about what the winding means. The matrix $(F^\text{bkgd})^{-1}F^\text{tang}$ going around a loop will give us an element of $\pi_1(GL^+(d)) = \pi_1(SO(d)) = \Z_2$. So, we want to figure out a way to compute this element given our knowledge of the order of edges we pass through. Note that since the first $(d-2)$ vectors of $F^\text{bkgd}$ and $F^\text{tang}$ agree, all the winding will come from the winding in $\pi_1(SO(2)) = \Z$ of the bottom $2 \times 2$ block of $(F^\text{bkgd})^{-1}F^\text{tang}$.

Strictly speaking, the framings $F^\text{bkgd}$ and $F^\text{tang}$ themselves will not themselves wind with the windings in that Figure. The windings of $0$ or $\pm \pi$ imply that the two frames will rotate by a relative $\pm \mathbbm{1}$ with respect to each other on each leg of the journey, and this doesn't literally hold for $F^\text{bkgd}$ and $F^\text{tang}$. But, since we only care about the \textit{homotopy class} of the relative framing, we can deform $F^\text{bkgd}$ to some homotopic framing and measure the windings with respect to this deformed framing. Our deformation will involve changing the vector $v^\text{bkgd}_{n-1}$. This shouldn't be surprising, since for most of the path away from the centers of the $d$-simplices, the last vectors of the framings $v^\text{bkgd}_n, v^\text{tang}_n$ will be along the 1-skeleton. So we'll find it easier to deform $v^\text{bkgd}_{n-1}$ to some other vector that will make it straightforward to compute the relative winding.

Consider some other vector $\Tilde{v}^\text{bkgd}_{n-1}$ for which replacing $v^\text{bkgd}_{n-1}$ in $F^\text{bkgd}$, giving some matrix $\Tilde{F}^\text{bkgd}$ whose determinant always has the same sign as $F^\text{bkgd}$. Then, we can see that \textit{only} deforming $v^\text{bkgd}_{n-1}$ into $\Tilde{v}^\text{bkgd}_{n-1}$ as $(1-t) v^\text{bkgd}_{n-1} + t \Tilde{v}^\text{bkgd}_{n-1}$ around the loop will give us a homotopy between $F^\text{bkgd}$ and $\Tilde{F}^\text{bkgd}$. First, let's describe $\Tilde{v}^\text{bkgd}_{n-1}$ within a `$+$' $d$-simplex. We'll set:

\begin{equation}
\begin{split}
\Tilde{v}^\text{bkgd}_{n-1} &= (0, \cdots, 0, \underbrace{\frac{(-1)^{j+1+\delta_{i>j}}}{n}}_{j\text{th component}}, 0, \cdots, 0) \quad\quad\quad\quad\quad\quad\quad\quad\quad\quad\quad\quad\quad\quad\quad \text{for } \ihat_\Delta \to (1) \to (2) \\
                            &= (0, \cdots, 0, \underbrace{t \frac{(-1)^{i+1+\delta_{i>j}}}{n}}_{i\text{th component}}, 0, \cdots, 0, \underbrace{(1-t) \frac{(-1)^{j+1+\delta_{i>j}}}{n}}_{j\text{th component}}, 0, \cdots, 0) \quad \text{ for } (2) \to (3) \\
                            &= (0, \cdots, 0, \underbrace{\frac{(-1)^{i+1+\delta_{i>j}}}{n}}_{i\text{th component}}, 0, \cdots, 0) \quad\quad\quad\quad\quad\quad\quad\quad\quad\quad\quad\quad\quad\quad\quad \text{for } (3) \to (4) \to \jhat_\Delta 
\end{split}
\end{equation}

And, to move between different two different $d$-simplices, the procedure we used for $v^\text{tang}_{n-1}$ Eq(\ref{otherVectAcrossSimplices}) applies for $\Tilde{v}^\text{bkgd}_{n-1}$. And, it's straightforward to see that this vector will cause the determinant of the deformed framing $\Tilde{F}^\text{bkgd}$ to have the same sign as $F^\text{tang}$.

Now, we can read off the windings by comparing the last two vectors traversing through the `$+$' $d$-simplex. Note that the winding between $i$ and $j$ only depends on whether $(i-j)$ is even or odd and whether $i>j$. In particular, (up to a minus sign depending on our definition of clockwise), we'll have that between $i \to j$, we'll have have that there's nonzero winding iff $i \equiv j$ (mod 2) and the winding is $+\pi$ if $i>j$ and it's $-\pi$ if $i<j$. Similarly, we'll multiply these windings by a minus sign on a `$-$' simplex. 

These windings were for the orientable case. In general, we'll have on a `$+$' simplex that the winding matrix $W_k$ will be:

$$W_k = 
\begin{pmatrix} 
-i X & 0 \\ 0 & i X
\end{pmatrix} \quad \text{if} \quad i \equiv j \text{ (mod 2) and } i>j$$

$$W_k = 
\begin{pmatrix} 
i X & 0 \\ 0 & -i X
\end{pmatrix} \quad \text{if} \quad i \equiv j \text{ (mod 2) and } i<j$$

$$W_k = 
\begin{pmatrix} 
\mathbbm{1}& 0 \\ 0 & \mathbbm{1}
\end{pmatrix} \quad \text{otherwise}$$

And on a `$-$' simplex:

$$W_k = 
\begin{pmatrix} 
-i X & 0 \\ 0 & i X
\end{pmatrix} \quad  \text{if} \quad i \equiv j \text{ (mod 2) and } i<j$$

$$W_k = 
\begin{pmatrix} 
i X & 0 \\ 0 & -i X
\end{pmatrix} \quad \text{if} \quad i \equiv j \text{ (mod 2) and } i>j$$

$$W_k = 
\begin{pmatrix} 
\mathbbm{1}& 0 \\ 0 & \mathbbm{1}
\end{pmatrix} \quad \text{otherwise}$$

And, going across $w_1$ will again give us:

$$W_k = 
\pm \begin{pmatrix} 
0 & iY \\ -iY & 0
\end{pmatrix} \quad \text{crossing } w_1$$

Here, the matrix is the opposite sign in opposite directions. But the sign in a given direction is our choice, and these choices determine the chain representative of $w_1^2$ in the same way as they did for the 2D case.

\subsection{Trivalent Resolution and quadratic refinement for $\sigma$}
Now, we'll show that the trivalent resolution given in Fig(\ref{fig:trivalentResAndWinding}) is consistent with the higher-dimensional quadratic refinement. Remember that we were able to give a geometric argument for quadratic refinement for pairs of curves that share edges on the dual 1-skeleton. However, that analysis doesn't say anything about the case where the pair of curves only meet at a point in the middle of a $d$-simplex. So, we need to introduce a trivalent resolution so we can unambiguously split a given cochain's Poincaré dual into distinct loops. 

\begin{figure}[h!]
  \centering
  \includegraphics[width=\linewidth]{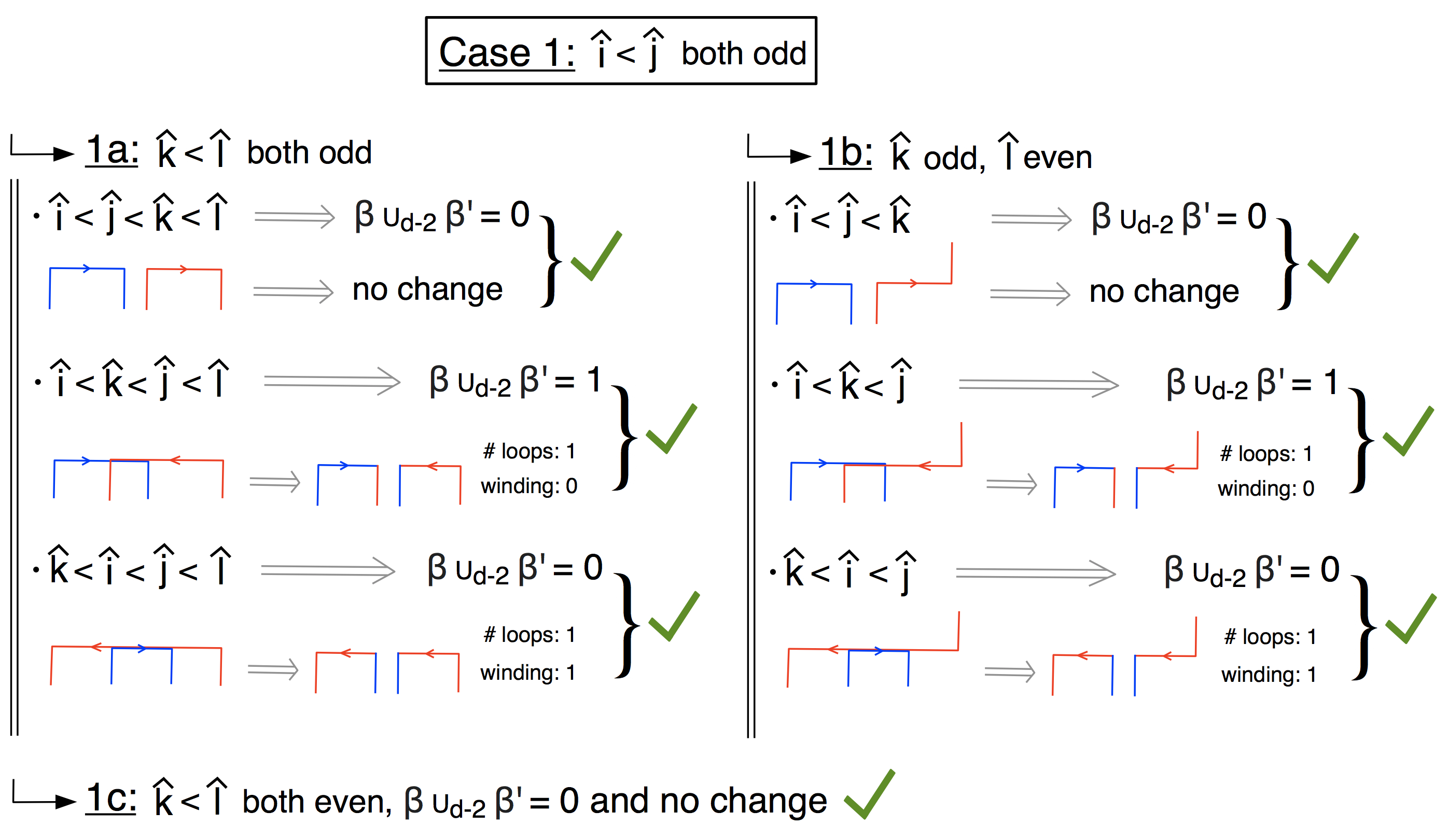}
  \caption{Proof that the quadratic refinement property of $\sigma$ holds for the trivalent resolution given in Fig(\ref{fig:trivalentResAndWinding}) for the case of $i,j$ both being odd. $\beta$ consists of the edges $\{\ihat,\jhat\}$ and $\beta'$ consists of the edges $\{\hat{k} \to \hat{\ell}\}$. Here, the edges pointing downward represent the edges $\hat{m}$ for $m$ odd and the edges pointing up represent $m$ even, like how we drew the `$+$' simplex in Fig(\ref{fig:trivalentResAndWinding}).  We also arrange the loops to be going in opposite directions to each other, which correspond to the Type I and II intersections of Fig(\ref{intersectionsAndWindings}). The other kinds of intersections can be handled using similar logic as in 2D. \\
  Here, ``winding" refers to the change in the winding angle as a multiple of $2\pi$ as given in Fig(\ref{fig:trivalentResAndWinding}), or equivalently using the winding matrices of the previous subsection. And ``loops" refers to whether the number of loops changes by 1. Using Eq(\ref{cupDminus2Formula}), $\beta \cup_{d-2} \beta'$ can be computed using the information of whether $i,j,k,l$ are even or odd and their order smallest-to-largest. If we compare this to how $(-1)^\text{\# of loops}(-1)^\text{winding}$ changes, we find that it agrees in all cases. (The other cases of $i<j$ both odd are equivalent to some of the cases listed here).}
  \label{fig:trivalentResProof}
\end{figure}

First, we will state an economical expression for $\beta \cup_{d-2} \beta'$ on a $d$-simplex $(0 \dots d)$. 

\begin{equation} \label{cupDminus2Formula}
(\beta \cup_{d-2} \beta')(0 \dots d) = \sum_{\substack{i < j \text{ both odd, OR} \\ i > j \text{ both even}}} \beta(\ihat) \beta'(\jhat)
\end{equation}

Where $\beta(\ihat)$ refers to $\beta(0,\dots,\ihat,\dots,d)$. Now, we want to verify the trivalent resolution and the windings will conspire to change $(-1)^\text{\# of loops} (-1)^\text{winding}$ change by $(\beta \cup_{d-2} \beta')(0 \dots d)$ on a $d$-simplex. Note that like in the 2D case, we'll only need to resolve one intersection at a time, so we can assume that $\beta$ consists of the edges $\{\ihat \to \jhat\}$ and $\beta'$ consists of the edges $\{\hat{k} \to \hat{\ell}\}$ with all of $i,j,k,l$ distinct. There will be many different cases that we'll need to check, corresponding to whether each of $i,j,k,l$ are even or odd and what their order is smallest-to-largest.

We'll explicitly state the cases of when $i<j$ and they're both odd, as in Fig(\ref{fig:trivalentResProof}), and leave the rest to the reader. There, we give the case of when the loops are running in opposite directions to each other along their intersections, which correspond to the Type I and II crossings of Fig(\ref{intersectionsAndWindings}). The case of Type III and IV windings can be dealt with using similar logic in this case as in the 2D case. Fig(\ref{intersectionsAndWindings}).

The same analysis as Fig(\ref{fig:trivalentResProof}) can be used to verify quadratic refinement in all the other possible cases.

\section{Combinatorially encoding a $Pin^+$ structure} \label{combDefinePinPlus}
We will now discuss how to combinatorially encode a $Pin^+$ structure on a manifold. (See Appendix \ref{spinPinAppendix} for details on the $Pin^+$ groups.) We will not attempt to give an analog of the function $\sigma(M,\alpha)$ for a $Pin^+$ structure, since we don't know whether such a notion would make sense. However, we will note that we can use entirely analogous ideas to our description of a $Pin^-$ to encode a $Pin^+$ structure. For $Pin^-$, recall that we explicitly trivialized $TM \oplus \det(TM)$ on the 1-skeleton given a choice of representative of $w_1$ and some choices of how to extend the trivialization across $w_1$. We were able to see explicitly how the obstruction to extending this framing to the dual 2-skeleton was homologous to the Poincaré dual of $w_2 + w_1^2$. Similarly, here we'll explicitly trivialize $TM \oplus 3 \det(TM)$ on the 1-skeleton and show that the obstruction to extending this framing to the 2-skeleton is homologous to the dual of $w_2$. 

Throughout the description of the winding matrices on a $Pin^-$ manifold, we used the coordinates `$x,y,z$', where the `$x$' direction typically referred to the $\det(TM)$ direction and the `$y,z$' direction typically referred to two directions on $TM$. For the $Pin^+$ case where we have three copies of $\det(TM)$, we can refer to the three directions typically pointing in the $3 \det(TM)$ direction as `$x_1,x_2,x_3$'. So in this case, the background and tangent framings will have three `orientation vectors'. And these vectors will remain undisturbed except when we cross a representative of $w_1$, at which point the three vectors all get reversed. So together, we'll have 5 directions $x_1,x_2,x_3,y,z$ that we care about.

Now, we'll rephrase the winding matrices in terms of the $\gamma$ matrices. There will still be the structure of organizing the winding in terms of a length-2 tuple based on whether the orientation vectors agree or disagree. Before, we were able to express our winding matrices in terms of the Pauli matrices, as $iX,iY,iZ$. But we could have just as well talked about them in terms of the $\gamma$ matrices. Recall that $iX$ represented a $180^\circ$ rotation of the $y,z$ axes. So we just as well could have replaced $iX \leftrightarrow \gamma_y\gamma_z$, and cyclic permutations thereof. So, the winding matrices for traversing within a $d$-simplex betweeen edges $\ihat \to \jhat$ can be rewritten on a `$\pm$' simplex as:

$$W_k = \pm
\begin{pmatrix} 
-\gamma_y \gamma_z & 0 \\ 0 & \gamma_y \gamma_z
\end{pmatrix} \quad \text{if} \quad i \equiv j \text{ (mod 2) and } i>j$$

$$W_k = \pm
\begin{pmatrix} 
\gamma_y \gamma_z & 0 \\ 0 & -\gamma_y \gamma_z
\end{pmatrix} \quad \text{if} \quad i \equiv j \text{ (mod 2) and } i<j$$

$$W_k = 
\begin{pmatrix} 
\mathbbm{1}& 0 \\ 0 & \mathbbm{1}
\end{pmatrix} \quad \text{otherwise}$$

Now, we should see what happens when crossing a representative of $w_1$. Before, the winding got multiplied by $\pm i Y \leftrightarrow \pm \gamma_z \gamma_x$, with the sign depending on if the orientation vectors start out as agreeing or disagreeing. Similarly, we'll have for this case that we multiply by $\pm (\gamma_z \gamma_{x_1})(\gamma_z \gamma_{x_2})(\gamma_z \gamma_{x_3}) = \mp \gamma_z\gamma_{x_1}\gamma_{x_2}\gamma_{x_3}$ with the sign depending on whether the orientation vectors start out agreeing or disagreeing, which we can organize as:

$$W_k = 
\pm \begin{pmatrix} 
0 & -\gamma_z\gamma_{x_1}\gamma_{x_2}\gamma_{x_3} \\ \gamma_z\gamma_{x_1}\gamma_{x_2}\gamma_{x_3} & 0
\end{pmatrix} \quad \text{crossing }w_1$$

Here, the choice of $\pm$ out front depends on what directions we choose the orientation vectors to point as they cross $w_1$, like we did in the $Pin^-$ case. Note that the square of this matrix is $-\mathbbm{1}$ as opposed to squaring to $+\mathbbm{1}$ in the $Pin^-$ case. But, this matrix still commutes with all other windings, as it did in the $Pin^-$. 

A crucial difference between the $Pin^+$ case and $Pin^-$ is that the matrix will be the same going in both directions, whereas the different directions differed by a sign in the $Pin^-$ case. This can be traced back to the fact that reflections square to different signs of the identity for the $Pin^\pm$ groups. This is the reason why we don't know if a definition of $\sigma(M,\alpha)$ makes sense for such structures - because traversing a loop going in one direction will give an opposite winding from going the other direction.

So since the choice of $\pm$ sign in front is the same in both directions, we'll say for simplicity that 

$$W_k = 
\begin{pmatrix} 
0 & -\gamma_z\gamma_{x_1}\gamma_{x_2}\gamma_{x_3} \\ \gamma_z\gamma_{x_1}\gamma_{x_2}\gamma_{x_3} & 0
\end{pmatrix} \quad \text{crossing }w_1$$

Now, we'll imitate the argument in Section \ref{formalPropsPinMinusSigma} to show that the obstruction to extending this framing to the 2-skeleton is homologous to a Poincaré dual of $w_2(M)$. Away from the $w_1$ wall, the argument is again the same as the orientable case. And again, the issue will just be looking at what happens along the representative of $w_1$. Say that going around the plaquette $P$ gives a total winding with $\mathcal{O}^\text{final} = W_k \cdots W_1 \mathcal{O}^\text{initial}$, where $i_0 < j_0$ are the segments crossing $w_1$. Then as before, we'll have that the sign of 

$$-(-1)^{j_0 - i_0 - 1} W_1 \cdots W_{i_0-1} W_{i_0 + 1} \cdots W_{j_0 - 1} W_{j_0 + 1} \cdots W_k$$

gives $(-1)^{\int_P w_2}$, where the extra minus sign out front can be thought of as analogous to the extra minus sign we attached to each loop in $\sigma(M,\alpha)$. 

So, we should examine what $(-1)^{j_0 - i_0 -1} W_{i_0} W_{j_0}$ does. Based on our choices of keeping all the windings across $w_1$ the same, we'll have that $W_{i_0}W_{j_0} = -\mathbbm{1}$, so that $(-1)^{j_0 - i_0 -1} W_{i_0} W_{j_0} = (-1)^{j_0 - i_0}$. Now we won't have in general that this sign is always $1$, so we should argue that the set of all $(d-2)$-simplices where the sign is $-1$ sum up to a homologically trivial set. To do this, we claim that $(-1)^{j_0 - i_0}$ gives $-1$ whenever the $(d-2)$-simplex corresponding to $P$ is part of the representative $w_1(PD(w_1(M)))$. Here, `$PD(w_1(M))$' stands for the the Poincaré dual of $w_1(M)$, so we are claiming that $(-1)^{j_0 - i_0} = -1$ iff the local orientations on the representative of $w_1(M)$ are opposite at $i_0,j_0$. It is known that $w_1(PD(w_1(M)))$ is trivial (i.e. that $PD(w_1(M))$ is an orientable manifold), so once we show the claim, we've demonstrated the $w_2$ obstruction for these structures.

We know that $(-1)^{j_0 - i_0}$ gives 1 if the tangent framing and background framing go through an odd number of $180^\circ$ turns while the curve is on one side of $w_1$, and gives $-1$ if there's an even number of such turns. Note that since the curve enters and exits the $w_1$ wall on opposite sides, the induced orientation of the tangent framing will be opposite at $i_0$ and $j_0$. This means that if the tangetn and background framings go through an odd number of $180^\circ$ turns, then the induced orientation from the background framing will induce equivalent orientations at $i_0$ and $j_0$. And an even number of such turns means that the induced orientations are the opposite. This tells us that $(-1)^{j_0 - i_0} = -1$ iff the background framing induces opposite orientations at $i_0,j_0$, which is the same as saying that the $(d-2)$-simplex corresponding to $P$ is part of the representative of $w_1(PD(w_1(M)))$.

This argument does not depend on the fact that our choice of winding matrices across $w_1$ were all the same. If we switched the sign of one of them, then the set of all plaquettes for which the winding going around is $-1$ will change by a coboundary, i.e. the set of all plaquettes adjacent to that edge would have the winding change by $-1$. 

We note that a version of quadratic refinement holds for these windings as they did for windings on $TM \oplus \det(TM)$. In particular, the same argument of quadratic refinement for resolving Type I and Type II crossings applies for $TM \oplus 3 \det(TM)$ just as it did for $TM \oplus \det(TM)$ in Section \ref{formalPropsPinMinusSigma}. First, Type I and II crossings that avoid $w_1$ will satisfy quadratic refinement automatically. Next, the same reason that crossings intersecting $w_1$ satisfy quadratic refinement carries over from Section \ref{formalPropsPinMinusSigma}, since the product of the winding matrices in opposite directions across $w_1$ is:

$$\begin{pmatrix} 
0 & -\gamma_z\gamma_{x_1}\gamma_{x_2}\gamma_{x_3} \\ \gamma_z\gamma_{x_1}\gamma_{x_2}\gamma_{x_3} & 0
\end{pmatrix} 
\begin{pmatrix} 
0 & -\gamma_z\gamma_{x_1}\gamma_{x_2}\gamma_{x_3} \\ \gamma_z\gamma_{x_1}\gamma_{x_2}\gamma_{x_3} & 0
\end{pmatrix} = -\mathbbm{1}$$

But, the fact that the windings are opposite in opposite directions means that we can't define $\sigma(M,\alpha)$ in the same way, and similar reasoning wouldn't apply to Type III or IV crossings.

And given this obstruction, we can encode a $Pin^+$ structure $\eta$ in the same way we did for $Spin$ and $Pin^-$ structures.

\bibliographystyle{unsrt} 


\end{document}